\newcommand{\optionaldesc}[2]{%
  \phantomsection
  #1\protected@edef\@currentlabel{#1}\label{#2}%
}
\newcommand{\SR}{\textsf{\upshape SR}}
\newcommand{\MSR}{\textsf{\upshape MSR}}
\newcommand{\HR}{\textsf{\upshape HR}}
\newcommand{\MHR}{\textsf{\upshape MHR}}
\newcommand{\E}{\textsf{\upshape E}}
\newcommand{\Var}{\textsf{\upshape Var}}
\renewcommand{\P}{\textsf{\upshape P}}
\newcommand{\Q}{\textsf{\upshape Q}}
\newcommand{\indicator}[1]{\mathbf{1}_{#1}}
\newcommand{\Exp}{\mathscr{E}}
\newcommand{\Log}{\mathcal{L}}
\newcommand{\sint}{\bm{\cdot}}
\newcommand{\lc}{[\![}
\newcommand{\rc}{]\!]}
\newcommand*{\bigs}[1]{\scalebox{1.2}{\ensuremath#1}}
\newcommand{\vfrac}[2]{\ensuremath{#1 / #2}}
\newcommand{\N}{\mathbb{N}}
\newcommand{\R}{\mathbb{R}}
\newcommand{\id}{{\operatorname{id}}}
\newcommand{\I}{\mathfrak{I}}
\newcommand{\tI}{\widetilde{\mathfrak{I}}}
\newcommand{\MV}{\textnormal{\textsc{mv}}}
\newcommand{\MMV}{\textnormal{\textsc{mmv}}}
\newcommand{\aMV}{\hat{\alpha}^{\MV}}
\newcommand{\bMV}{\hat{\beta}^{\MV}}
\newcommand{\aMMV}{\hat{\alpha}}
\newcommand{\bMMV}{\hat{\beta}}
\newcommand{\xMMV}{\hat{x}}
\newcommand{\lMV}{\hat{\lambda}^{\MV}}
\newcommand{\lMVn}{\hat{\lambda}^{\MV,n}}
\newcommand{\lMMV}{\hat{\lambda}}
\newcommand{\lMMVn}{\hat{\lambda}^{n}}
\newcommand{\uMV}{u_{\MV}}
\newcommand{\vMV}{v_{\MV}}
\newcommand{\uMMV}{u_{\MMV}}
\newcommand{\vMMV}{v_{\MMV}}
\newcommand{\QMMV}{\hat{\Q}}
\newcommand{\QMV}{\hat{\Q}^{\MV}}
\newcommand{\AdmMMV}{\Theta_{\MMV}}
\newcommand{\AdmMV}{\Theta_{\MV}}
\renewcommand{\d}{\mathrm{d}}
\DeclareMathOperator*{\argmax}{arg\,max}
\newcommand{\util}{g}
\newcommand{\locutil}{\mathfrak{g}}
\newcommand{\locutilMMV}{\mathfrak{g}}
\newcommand{\locutilMV}{\mathfrak{g}^{\MV}}
\newcommand{\utilMMV}{\util_{\MMV}}
\newcommand{\utilMV}{\util_{\MV}}
\newcommand{\e}{\mathrm{e}}
\newcommand\shorturl[1]{%
  \href{http://#1}{\nolinkurl{#1}}%
}
\newtheorem{theorem}{Theorem}[section]
\theoremstyle{plain}
\newtheorem{assumption}[theorem]{Assumption}
\newtheorem{corollary}[theorem]{Corollary}
\newtheorem{definition}[theorem]{Definition}
\newtheorem{example}[theorem]{Example}
\newtheorem{lemma}[theorem]{Lemma}
\newtheorem{proposition}[theorem]{Proposition}
\newtheorem{remark}[theorem]{Remark}
\numberwithin{equation}{section}
\begin{document}
\title[{}]{Dynamically optimal portfolios for monotone mean--variance preferences}
\author{Ale\v{s} \v{C}ern\'{y}$^{a,b}$}
\address{$^a$Bayes Business School, City St George's, University of London\\
106 Bunhill Row, London EC1Y 8TZ, UK\vspace{-1.1ex}}
\email{\!ales.cerny@citystgeorges.ac.uk}
\author{Johannes Ruf$^b$}
\address{$^b$Department of Mathematics\\
  London School of Economics and Political Science\\
	Houghton Street, London, WC2A 2AE, UK\vspace{-1.1ex}}
\email{\!j.ruf@lse.ac.uk}
\author{Martin Schweizer$^c$}
\address{$^c$Department of Mathematics\\
 ETH Zurich\\
	R\"amistrasse 101, CH-8092 Zurich, Switzerland\vspace{0.4ex}}
\email{\!martin.schweizer@math.ethz.ch}
\date{March 6th, 2026}

\begin{abstract}
Monotone mean--variance (MMV) utility is the minimal modification of the classical Markowitz mean--variance (MV) utility that respects rational ordering of investment opportunities. This paper provides, for the first time, a complete characterization of optimal dynamic portfolio choice for the MMV utility in asset price models with independent returns. The task is performed under minimal assumptions, weaker than the existence of an equivalent martingale measure and with no restrictions on the moments of asset returns. We interpret the maximal MMV utility in terms of the monotone Sharpe ratio (MSR) and show that the global squared MSR arises as the nominal yield from continuously compounding at the rate equal to the maximal local squared MSR. The paper gives simple necessary and sufficient conditions for MV efficient portfolios to be MMV efficient. Several illustrative examples contrasting the MV and MMV criteria are provided.\\[1ex]
\noindent \textbf{Keywords.} monotone mean--variance efficiency; monotone Sharpe ratio; local utility; $\sigma$-special processes; variance-optimal separating measure \\[1ex]
\noindent {\bf MSC (2020) }\,primary 91G10, 93E20; secondary 60G07\\[-0.5cm]
\end{abstract}
\vspace*{-1.2cm}
\maketitle
\mleftright
\section{Introduction}

The concept of mean--variance efficiency does not always prefer more to less. For example, a wealth distribution $W_A$ with values $-0.3$, $0.6$, $1.2$ with equal probabilities has higher ratio of mean to standard deviation (so-called Sharpe ratio \cite{sharpe.94}) than $W_B$ with values $-0.3$, $0.6$, $2.4$. Thus, a hypothetical market where one can trade multiples of $W_A$ at zero price has a better mean--variance trade-off than the parallel market trading only multiples of $W_B$ at price zero. However, a rational investor would nonetheless prefer market $B$ to market $A$ as they could take $W_B$, set aside or give away the amount $0$, $0$, $0.9$, and obtain a Sharpe ratio that exceeds that of $W_A$.

In view of this example, we shall say that a trading position $W$ in a market $M$ is \emph{monotone mean--variance (MMV) efficient} if there is $Y\geq 0$ such that $W-Y$ has the lowest variance for a given mean. That is, for all $W'\in M$ and $ Y'\geq0$ such that $\E[W-Y]=\E[W'-Y']$, one has $\Var(W-Y)\leq\Var(W'-Y')$, where the comparison is performed over $Y$ and $Y'$ such that $W-Y\in L^2$ and $W'-Y'\in L^2$.
This notion of efficiency is implicit in the work of \citet*{cui.al.12}, for example.

We shall now discuss preferences that are consistent with monotone mean--variance efficiency. Consider the classical Markowitz mean--variance utility $V_{\textsc{mv}}: L^2 \to \R$ given by 
\begin{align}\label{eq:MV1}
V_{\textsc{mv}}(W) = \E[W]-\frac{1}{2}\Var(W) 
\end{align}
and its minimal monotone modification given by
\begin{equation}\label{eq:MMV1}
V_{\MMV}(W) = \sup_{Y\geq0} V_{\MV}(W-Y),
\end{equation}
where the supremum is taken over $Y\geq0$ such that $W-Y\in L^2$. The \emph{monotone mean--variance utility} $V_{\MMV}$ sets aside a non-negative amount $Y$ to make the mean--variance utility of the remaining wealth $W-Y$ as high as possible. Observe that thanks to \eqref{eq:MMV1} the domain of the MMV utility extends beyond $L^2$, which will later allow us to consider returns that are not square integrable.

Our aim is to compute dynamically optimal portfolios for the monotone mean--variance utility in models with independent, but not necessarily time-homogeneous, returns. The purpose of doing so, apart from obtaining the convenience of explicit formulae, is to gain better understanding of how the global optimal portfolio performance arises from the local conditions in the market. The literature in this area is very sparse and we shall return to it in due course in Section~\ref{S:5}. One of the major obstacles in obtaining clean results is that neither $V_{\MV}$ nor $V_{\MMV}$ are easily amenable to dynamic programming. To address this issue, observe that the classical mean--variance utility \eqref{eq:MV1} is the cash-invariant hull of the expected quadratic utility, i.e., 
\begin{equation}\label{eq:MV2}
V_{\MV}(W) ={}  \sup_{ \eta\in \R }\, \left\{\E\left[\util_{\MV}(W-\eta)\right]+\eta\right\},\quad W \in L^2,
\end{equation}
with the quadratic utility given by
\begin{equation}\label{eq:utilMV}
\util_{\MV} = \id - \frac{1}{2} \id^2.
\end{equation}
Here $\id$ is an identity function, $\id(x)=x$. Since the operations of taking the monotone hull in \eqref{eq:MMV1} and the cash-invariant hull in \eqref{eq:MV2} commute, it is shown in \citet*[Equation (9)]{cerny.20} that $V_{\MMV}$ is the cash-invariant hull of the expected monotone quadratic utility, i.e., 
\begin{equation}\label{eq:MMV2}
V_{\MMV}(W) ={}  \sup_{ \eta\in \R }\, \left\{\E\left[\util_{\MMV}(W-\eta)\right]+\eta\right\},\quad W \in L^0_+-L^2_+,
\end{equation}
with the monotone quadratic utility given by
\begin{equation}\label{eq:utilMMV}
\util_{\MMV} =\id\wedge1 - \frac{1}{2} (\id\wedge1)^2;
\end{equation}
see also Figure~\ref{F:1}. 
\begin{figure}
	\centering
	\begin{overpic}[trim=20 90 2 30,clip,scale=.4]{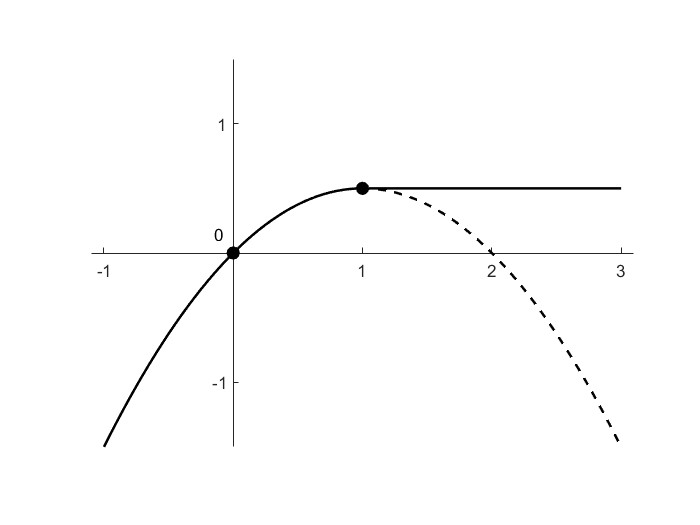}
\put(72,51){\tikz \draw[line width=0.4mm,black] (0,0)--(0.75,0);}
\put (82,50.3) {\textcolor{black}{$\displaystyle \util_{\MMV}$}}
\put(72,46){\tikz \draw[line width=0.4mm,black,dashed] (0,0)--(0.75,0);}
\put (82,45.3) {\textcolor{black}{$\displaystyle \util_{\MV}$}}
\end{overpic}
	\caption{Quadratic utility 	and its monotonization.} 
	\label{F:1}
\end{figure}

In the handful of existing studies concerned with explicit optimal MMV utility portfolio selection in continuous time (e.g., \cite{hu.shi.xu.23.sifin}), the key formula \eqref{eq:MMV2} is hidden behind specific model assumptions. Formula \eqref{eq:MMV2} shows in complete generality that the MMV utility is not very far from expected monotone quadratic utility. This offers a direct line of attack, which we shall exploit fully below. 

The rest of the paper proceeds as follows.  In Section~\ref{S:2}, we set up our framework (Subsection~\ref{SS:2.1}) and thereafter show that 
\begin{enumerate}[(a)]
\item The MMV utility maximization can be recast in terms of the time-consistent expected monotone quadratic utility maximization (Subsection~\ref{SS:2.2}). This step is model independent and only requires that the admissible trading strategies form a cone.
\item Trading strategies have a useful parametrization in terms of dollar investment per unit of risk tolerance (Subsection~\ref{SS:2.3}). This step, too, is model independent.
\item The global expected utility maximization can be recast in terms of simpler local expected utility. The local expected utility admits a maximizer under conditions weaker than the existence of an equivalent $\sigma$-martingale measure (Subsection~\ref{SS:2.4}). From here onwards, instantaneous absence of arbitrage and independent increments are assumed.
\item The local maximizer integrates the yield process if and only if its cumulative local expected utility is finite (Subsection~\ref{SS:2.5}). This step is trivial in time-homogeneous (i.e., L\'evy) models but has not been successfully tackled for general independent returns for any local expected utility to date, to the best of our knowledge.
\item If its cumulative local expected utility is finite, the local optimizer also generates the optimal trading strategy with finite MMV utility. If not, there is an $L^2$ free lunch in the form of a sequence of near-arbitrage opportunities whose loss goes to $0$ in $L^2$ and whose gain is above $1$ with probability approaching $1$; at the same time, the MMV utility is unbounded (Subsection~\ref{SS:2.6}).
\item The MMV utility of dynamic trading is finite if and only if the yield process admits a separating measure whose density has the smallest variance. Furthermore, simple necessary and sufficient conditions for this separating measure to be a $\sigma$-martingale measure are given (Subsection~\ref{SS:2.7}).
\end{enumerate}
Section~\ref{S:3} discusses the economic interpretation of the main results in terms of monotone Sharpe ratios. In particular, we show here for the first time in the literature how the global MMV portfolio performance is generated by local monotone Sharpe ratios. Section~\ref{S:4} provides several illustrative examples, Section~\ref{S:5} relates the results of this paper to previous studies, and Section~\ref{S:6} concludes. 

The results in this paper rely on the ``$\circ$'' operation, whereby a (predictable) function $g$ with $g(0)=0$ acts on the increments of some semimartingale $X$ to form a new process $g\circ X$ that one may loosely think of as the $g$-variation of $X$. For example, $[X,X]=\id^2\circ X$ denotes the quadratic variation, with $\id$ standing for an identity function. The details of the $\circ$ calculus are worked out in \citet{cerny.ruf.22.ejp} with further discussion and applications available in \cite{cerny.ruf.21.ejor,cerny.ruf.23.spa}. Very briefly, the $\circ$ operation provides a convenient way of manipulating stochastic processes, e.g., the ``Yor formula'' takes the form $\Exp(g\circ X)^p=\Exp(((1+g)^p-1)\circ X)$ for all $p\geq 2$, where $\Exp$ denotes the \citet{doleans-dade.70} stochastic exponential.  We refer the reader to the introduction of \cite{cerny.ruf.22.ejp} and \cite[Section~2]{cerny.ruf.23.spa} for executive summaries of the $\circ$ calculus.

The calculus also provides a systematic way of computing the drift rate $b^{g\circ X}$ of the $g$-variation in terms of the ``characteristic triplet'' of the underlying process $X$, written here as $(b^{X[1]},c^X,F^X)$ relative to some activity process $A$. Intuitively, $b^{X[1]}$ is the drift rate of the process $X$ with large jumps removed; $c^X$ is the squared volatility; and $F^X$ is a measure describing the jump intensity of different jump sizes. We refer the reader to  \citet[II.2.6; II.2.9]{js.03} for the definition of semimartingale characteristics, to \citet{kallsen.04} for a pedagogical introduction, and to Examples~3.1, 3.4, 4.1, and 4.3 in \cite{cerny.ruf.21.ejor} for illustrations.


\section{Dynamically optimal portfolio selection}\label{S:2}

\subsection{Asset prices and admissibility}\label{SS:2.1}
We assume there is a risk-free asset with constant value 1 and $d\in\N$ risky assets with cumulative dollar yield $R$ starting at $0$. For example, if $S$ denotes the value of the $d$ risky assets such that $S_->0$, the yield is given componentwise by  
\[
R^{(i)} = \int_0^\cdot \frac{\d S^{(i)}_t}{S^{(i)}_{t-}},\qquad i\in\{1,\ldots,d\}.
\] 
Whereas \citet*{eberlein.jacod.97,kallsen.00,jakubenas.02,esche.schweizer.05,jeanblanc.al.07,bender.niethammer.08} and \citet*{nutz.12.mafi} take the cumulative yield to be a L\'evy process satisfying $\Delta R>-1$, we shall consider $R$ from the wider class of semimartingales with independent increments and impose no lower bound on $\Delta R$. The assumption of independent increments will be invoked from Subsection~\ref{SS:2.4} onwards.

We sometimes write $L(R)$ for the set of predictable processes that integrate $R$. For $\vartheta\in L(R)$, we often use the shortcut notation $\vartheta\sint R = \int_0^\cdot \vartheta_t\d R_t$. Here $\vartheta$ is interpreted as a row vector of dollar investments in the risky assets and the integral $\vartheta\sint R$ as the net gain of the corresponding self-financing strategy.   For a  function $g:\R\to\R$ twice continuously differentiable at zero and $\vartheta\in L(R)$, we call 
$$g\circ (\vartheta\sint R) = g'(0)\vartheta\sint R +\frac{1}{2}g''(0)[\vartheta\sint R,\vartheta\sint R]^c + \sum_{t\leq\cdot}\bigs(g(\vartheta_t\Delta R_t)-g'(0)\vartheta_t\Delta R_t\bigs) $$
the $g$-variation of $\vartheta\sint R$. Occasionally, we use the operation ``$\circ$'' with predictable functions, relying on the results in \citet*{cerny.ruf.22.ejp}. We write $\id$ for the identity function, i.e., $\id(x)=x$ and $\id_1$, $\id_2$, etc.\ for its components, i.e., $\id_1(x)=x_1$, $\id_2(x)=x_2$, etc\@. We also fix a truncation function $h:\R^d\to\R^d$ given by 
\begin{equation}\label{eq:truncation}
h = (\id_1\indicator{|\id_1|\leq 1},\ldots,\id_d\indicator{|\id_d|\leq 1}),
\end{equation}
and write  $R[1]=h\circ R$ for the process $R$ with large jumps removed.

For a semimartingale $R$ with independent increments, by \citet*[Proposition~II.2.9 and Theorem~II.4.15]{js.03}
there is a deterministic, right-continuous, non-decreasing process $A$ with left limits such that the characteristics of $R$ are absolutely continuous with respect to $A$ and the resulting differential characteristics $(b^{R[1]},c^R,F^R)$ are deterministic. 
For a specific example of differential characteristics in a bivariate Merton jump-diffusion model, see, e.g., Example~4.3 in \cite{cerny.ruf.21.ejor}. In all It\^o semimartingale models one may choose the activity process $A_t=t$.
Specifying $b^{R[1]}$ rather than $b^R$ allows for the possibility that $R$ does not have a finite mean.

From Subsection~\ref{SS:2.4} onwards, we shall make the following standing assumption, which is weaker than the existence of an equivalent $\sigma$-martingale measure for $R$.

\begin{assumption}\label{A:instNA_PII}
The yield process $R$  
is \emph{instantaneously arbitrage-free}, i.e., for all $t\in[0,T]$ and all  
$\lambda\in\R^d$ one has that 
\begin{equation}\label{eq:instNA_PII}
\left.
\begin{gathered}
 \left(\lambda c^R_t = 0 \text{ and } \int_{\R^d} \indicator{\lambda x< 0}F^R_t(\d x) = 0\right) \quad \implies \\
\left(\int_{\R^d} \indicator{0<\lambda x}F^R_t(\d x) > 0 \text{ and }
\lambda b^{R[1]}_t-\int_{\R^d}  \lambda h(x) F^R_t(\d x)<0\right)\\
\quad\text{or}\quad \\
\left(\int_{\R^d} \indicator{0<\lambda x}F^R_t(\d x) = 0 \text{ and } \lambda b^{R[1]}_t=0\right).
\end{gathered}
\qquad\right\}
\end{equation}
\end{assumption}
Intuitively, Assumption~\ref{A:instNA_PII} requires that process $\lambda R$ be free to move in both directions (up and down) or else be constant at any moment in time for any $\lambda\in\R^d$. At times $t$ when $R$ jumps with non-zero probability, the above condition is tantamount to the classical one-period no-arbitrage requirement
\begin{equation}\label{eq:NA_t}
\P[\lambda \Delta R_t \geq 0]=1 \Rightarrow \P[\lambda \Delta R_t=0]=1,\qquad \lambda\in\R^d,\ t\in[0,T]. 
\end{equation}
When $\P[\Delta R_t\neq 0]=0$, the definition requires that for each $\lambda\in\R^d$, the differential characteristic triplet of $\lambda R$ taken at $t$ is a triplet of a L\'evy process that is not a subordinator; see \citet*[Lemma~3.5]{kardaras.09}.   

\begin{definition} We say that a trading strategy $\vartheta\in L(R)$ is admissible, writing $\vartheta\in\AdmMMV$, if $\sup_{t\in[0,T]}(\vartheta\sint R_t)^-\in L^2$. We say that $\vartheta\in L(R)$ is mean--variance admissible, if the stricter requirement 
$\sup_{t\in[0,T]}|\vartheta\sint R_t|\in L^2$ holds. For such strategies we write $\vartheta\in\AdmMV$. 
\end{definition}
Observe that $\AdmMMV$ is a cone while $\AdmMV$ is a linear subspace inside $\AdmMMV$. Furthermore, $\AdmMMV$ contains, among others, all strategies whose wealth is bounded below by a constant. For later use we remark that for $\vartheta \in L(R)$ one can write more symmetrically
\begin{align}
\vartheta\in \AdmMMV \iff {}&{}\sup_{t\in[0,T]}|\utilMMV'(\vartheta\sint R_t)|\in L^2,\label{eq:250111.MMV}\\
\vartheta\in \AdmMV \iff {}&{} \sup_{t\in[0,T]}|\utilMV'(\vartheta\sint R_t)|\in L^2.\label{eq:250111.MV}
\end{align}

\subsection{First steps}\label{SS:2.2}

For the monotone quadratic utility function $\utilMMV$ in \eqref{eq:utilMMV}, consider the optimization problem
\begin{equation}\label{eq:uMMV}
\uMMV(x)=\sup_{\vartheta \in \AdmMMV}\{\E[\utilMMV(x+\vartheta \sint R_T)]\},\qquad x\in\R.
\end{equation}
Denote an optimal trading strategy in \eqref{eq:uMMV}, should it exist, by $\aMMV(x)$. Observe that maximization of expected utility is a time-consistent problem.

Recall next the monotone mean--variance utility in \eqref{eq:MMV1} and consider the corresponding optimal portfolio allocation problem 
\begin{equation} \label{eq:vMMV}
\vMMV(x) = \sup_{\vartheta \in \AdmMMV}\, \{V_{\MMV}(x+\vartheta \sint R_T)\},\qquad x\in\R. 
\end{equation}
The optimal strategy, should it exist, will be denoted $\bMMV(x)$, $x\in\R$. 

We will now recall results from \cite[Theorem~5.1]{cerny.20} that (a) allow $\vMMV(0)$ to be expressed directly in terms of $\uMMV(0)$; (b) identify an optimal trading strategy in \eqref{eq:vMMV} by means of the time-consistent zero-cost optimal strategy $\aMMV(0)$ from \eqref{eq:uMMV}. The proof is found in Appendix~\ref{appx:A}. The next proposition is model-free in the sense that it works for any semimartingale $R$ (i.e., also without independent increments) and any set of strategies that forms a cone. Furthermore, no assumptions on the absence of arbitrage are needed.  
\begin{proposition}\label{P1}
The following are equivalent.
\begin{enumerate}[(i)]
\item\label{T1.i} $2\uMMV(0)<1$;
\item\label{T1.ii} $\vMMV(0)<\infty$.
\end{enumerate}
Furthermore, if either of the two conditions holds, then 
\begin{equation}\label{eq:vMMVx}
\vMMV(x) = x+\frac{\left(1-2\uMMV(0)\right)^{-1}-1}{2},\qquad x\in\R,
\end{equation}
and \eqref{eq:vMMV} admits an optimizer for $x=0$ if and only if \eqref{eq:uMMV} does so for $x=0$, in which case \eqref{eq:vMMV} admits an optimizer for each $x\in\R$ and
\begin{equation}\label{eq:bMMV}
\bMMV(x) = \left(1+2\vMMV(0)\right)\aMMV(0),\qquad x\in\R.
\end{equation}
\end{proposition}
\begin{remark}\label{R:MSR}
By \citet*[Eq. (17)]{cerny.20}, the quantity $2\vMMV(0)$ is the square of the maximal monotone Sharpe ratio available by trading in $\AdmMMV$ with zero initial cost. 
If the optimal strategy exists, the Sharpe  ratio of the truncated wealth $(\aMMV(0)\sint R_T) \wedge 1$ attains the maximal value $\sqrt{2\vMMV(0)}$ and $(\aMMV(0)\sint R_T-1)\vee0$ is the free cash flow stream in the terminology of \citet{cui.al.12}. For more discussion in this direction, see Section~\ref{S:3}. 
\end{remark}

\begin{remark}\label{R:risk aversion 1}
Observe that the absolute risk aversion of $\utilMV$ at $0$ is $1$. For $\gamma>0$, one could instead consider $\widetilde\utilMV = \utilMV(\gamma\id)/\gamma$ with the corresponding monotone hull $\widetilde\utilMMV = \utilMMV(\gamma\id)/\gamma$, which now both have absolute risk aversion of $\gamma$ at $0$. On defining a new MMV utility via \eqref{eq:MMV2} with $\utilMMV$ replaced by $\widetilde\utilMMV$ and letting $\widetilde{\beta}(x)$ be the optimal MMV strategy, it is immediate that 
$\widetilde{V_{\MMV}} = V_{\MMV}(\gamma\id)/\gamma$, 
hence $\widetilde{\beta}(x) = \hat\beta(0)/\gamma$ and  $\widetilde{v_{\MMV}}(x)-x = \vMMV(0)/\gamma$ for all $x\in\R$, i.e., the optimal risky investment  and the certainty equivalent gain are inversely proportional to the local absolute risk aversion of the MMV utility at $0$.

This shows that the MMV efficient frontier is spanned by the risk-free investment plus an arbitrary positive multiple of the optimal zero-cost strategy $\aMMV(0)$. After disposing of high wealth optimally as shown in Remark~\ref{R:MSR}, the slope of the efficient frontier on the mean--standard deviation diagram is given by the monotone Sharpe ratio $\sqrt{2\vMMV(0)}$.

Equation \eqref{eq:bMMV} further shows that the optimal risky MMV portfolio for MMV risk aversion $\gamma$ is the same as the optimal portfolio for expected monotone quadratic utility with initial wealth zero and risk aversion $\gamma/(1+2\vMMV(0))$ at $0$.  
\end{remark}

\subsection{Parametrization of trading strategies}\label{SS:2.3}
The economic theory of risk aversion suggests a useful parametrization of monotone mean--variance trading rules. Consider some trading strategy $\alpha\in L(R)$ with wealth process $W=\alpha\sint R$. Observe that in the expected utility problem \eqref{eq:uMMV}, it is optimal \emph{not} to invest in the risky assets whenever the associated wealth exceeds the bliss point $1$ of the monotone quadratic utility function $\utilMMV$. To underscore this point, the local absolute risk tolerance, 
\[
-\frac{\utilMMV'(W_{t-})}{\utilMMV''(W_{t-})}\indicator{\{W_{t-}<1\}}=\left(1 - W_{t-}\right)^+=\utilMMV'(W_{t-}),
\] 
falls towards zero as the wealth $W_{t-}$ approaches the bliss point $1$ from below. It therefore makes sense to parametrize the trading strategy by $\lambda_t$, the dollar amount in the risky assets per unit of local risk tolerance. This then yields wealth dynamics of the form
\begin{equation}\label{eq:210401.1}
	\d W_t = \lambda_t \left(1 - W_{t-}\right)^+ \d R_t
\end{equation}
for the optimal net wealth process, for a suitably chosen predictable $\lambda\in L(R)$. For a fixed $\lambda\in\R^d$, this very much resembles a \emph{constant proportion portfolio insurance} (CPPI) strategy of \citet*{black.perold.92}, except instead of a floor on losses there is a ceiling on gains; the riskiness of the investment is reduced in proportion to the gap (known as \emph{cushion} in the CPPI literature) between the net gain $W$ and the ceiling $1$.

By Lemma~\ref{L:SDE}, the unique solution of the stochastic differential equation \eqref{eq:210401.1} has the property
\begin{equation}\label{eq:210401.2}
\left(1 - W\right)^+=\Exp\left( \left(-1\vee \id\right) \circ (-\lambda\sint R) \right) = \Exp\left( -(\id\wedge1) \circ (\lambda\sint R) \right).
\end{equation}
Here, $\Exp(X)$ denotes the stochastic exponential of a semimartingale $X$. Economically, $\Exp(X)$ describes the value of an investment with initial value $1$, continuously compounded using the stochastic cumulative yield $X$. Intuitively, the process $(\id\wedge1) \circ (\lambda\sint R)$ truncates the jumps of the yield $\lambda\sint R$ at $1$. Combining \eqref{eq:210401.1} and \eqref{eq:210401.2}, the net wealth has the form
$W = \alpha \sint R$, where to each $\lambda\in L(R)$ we have associated the trading strategy (dollar investment in stock)
$
 \alpha = \lambda\Exp\left( -\left(\id\wedge1\right) \circ (\lambda\sint R) \right)_{-}
$.

The next proposition explores these ideas and their converse in detail. The proof is found in Appendix~\ref{appx:B}. This proposition, too, works for $R$ without independent increments. We say that a semimartingale $X$ ``does not go to zero continuously'' if for the stopping time $\sigma= \inf\{t > 0 : X_t = 0\}\wedge T$ one has $X_-\neq 0$ on $\lc 0,\sigma\rc$.

\begin{proposition}\label{P:lambda to alpha and back}
Suppose $\lambda$ integrates $R$. Then
\begin{equation}\label{eq:alpha}
\alpha = \lambda\Exp\left( -\left(\id\wedge1\right) \circ (\lambda\sint R) \right)_{-}
\end{equation}
integrates $R$ and for the (marginal) utility of the wealth $\alpha\sint R$ one has 
\begin{align}
\utilMMV'(\alpha\sint R) ={}& (1-\alpha\sint R)^+=\Exp\left( -\left(\id\wedge1\right) \circ (\lambda\sint R) \right),\label{eq:Granada241014}\\
2\utilMMV(\alpha\sint R) ={}& 1-\Exp(-2\utilMMV\circ (\lambda\sint R)).\label{eq:terminal utility}
\end{align}
Furthermore,
\begin{enumerate}[(a)]
\item\label{20241230.a} the marginal utility $\utilMMV'(\alpha\sint R)=(1-\alpha\sint R)^+$ does not go to zero continuously;
\item\label{20241230.b} $\alpha$ is zero from the first time the marginal utility $\utilMMV'(\alpha\sint R)$ becomes zero.
\end{enumerate}
Conversely, assume that $\alpha$ integrates $R$ and that \ref{20241230.a} and \ref{20241230.b} hold. 
Denote by $\sigma$ the first time $(1-\alpha\sint R)^+$ becomes 0. Then 
\begin{equation}\label{eq:Cierne250101}
 \lambda = \frac{\alpha}{1-\alpha\sint R_-}\indicator{\lc0,\sigma\rc}
\end{equation}
is well defined, it integrates $R$, and \eqref{eq:alpha} holds.
\end{proposition}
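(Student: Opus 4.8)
The plan is to handle the two directions separately, using Lemma~\ref{L:SDE} as the main engine. For the forward direction, I start from $\lambda\in L(R)$ and set $Z=\Exp(-(\id\wedge1)\circ(\lambda\sint R))$, which is well defined because $\id\wedge1$ is twice continuously differentiable at $0$ with $(\id\wedge1)(0)=0$, so that $(\id\wedge1)\circ(\lambda\sint R)$ is a semimartingale. Since $Z_-$ is c\`agl\`ad and adapted, it is locally bounded, and therefore $\alpha=\lambda Z_-$ integrates $R$, as claimed. Writing $W=\alpha\sint R$, Lemma~\ref{L:SDE} identifies $W$ as the unique solution of \eqref{eq:210401.1} and yields \eqref{eq:210401.2}, that is $(1-W)^+=Z$. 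Because $\utilMMV'=1-(\id\wedge1)=(1-\id)^+$, this is exactly \eqref{eq:Granada241014}.

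For \eqref{eq:terminal utility} I would combine two elementary facts. First, the pointwise identity $1-2\utilMMV=(1-(\id\wedge1))^2=(\utilMMV')^2$ gives $\utilMMV(W)=\tfrac12\bigl(1-((1-W)^+)^2\bigr)=\tfrac12(1-Z^2)$. Second, the Yor formula $\Exp(g\circ X)^2=\Exp(((1+g)^2-1)\circ X)$ applied with $g=-(\id\wedge1)$ and $X=\lambda\sint R$ gives, since $(1-(\id\wedge1))^2-1=-2\utilMMV$, that $Z^2=\Exp(-2\utilMMV\circ(\lambda\sint R))$; substituting proves \eqref{eq:terminal utility}. Properties \ref{20241230.a} and \ref{20241230.b} then follow from the structure of the stochastic exponential: $Z=\Exp(Y)$ with $Y=-(\id\wedge1)\circ(\lambda\sint R)$ has jumps $\Delta Y=-((\lambda\Delta R)\wedge1)\geq-1$, so $Z\geq0$ can reach $0$ only through a jump (the factor $1+\Delta Y$ vanishing) and never continuously, which is \ref{20241230.a}; once $Z$ hits $0$ at the first such time $\sigma$ it is absorbed there, so $Z_-=0$ on $\rc\sigma,T\rc$ and hence $\alpha=\lambda Z_-=0$ there, which is \ref{20241230.b}.

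For the converse, set $W=\alpha\sint R$ and let $\sigma$ be the first time $(1-W)^+$ vanishes, i.e.\ the first time the marginal utility $\utilMMV'(W)=(1-W)^+$ hits $0$. Assumption \ref{20241230.a} says precisely that $(1-W_-)^+>0$ on $\lc0,\sigma\rc$, so $1-W_->0$ there and the candidate \eqref{eq:Cierne250101} is well defined. On $\lc0,\sigma\rc$ one then has $\lambda(1-W_-)^+=\alpha$ by construction, while on $\rc\sigma,T\rc$ both sides vanish, the left by the indicator and the right by \ref{20241230.b}; thus $\lambda(1-W_-)^+=\alpha$ identically. Provided $\lambda\in L(R)$, this says that $W=\alpha\sint R$ solves \eqref{eq:210401.1}, so Lemma~\ref{L:SDE} applies and delivers $(1-W)^+=\Exp(-(\id\wedge1)\circ(\lambda\sint R))$; combined with $\alpha=\lambda(1-W_-)^+$ this is \eqref{eq:alpha}.

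The one genuinely technical point, and the place where \ref{20241230.a} does real work, is verifying $\lambda\in L(R)$. Here I would show that the predictable multiplier $\tfrac{1}{1-W_-}\indicator{\lc0,\sigma\rc}$ is locally bounded: since $(1-W_-)^+$ does not go to zero continuously and $\sigma$ is its first zero, $1-W_-$ is a strictly positive c\`agl\`ad process on the closed stochastic interval $\lc0,\sigma\rc$ with $1-W_{\sigma-}>0$, so by left-continuity its pathwise infimum over $\lc0,\sigma\rc$ is strictly positive, whence the multiplier is locally bounded along a localizing sequence $\tau_n\uparrow\sigma$. Because $\alpha\in L(R)$ and multiplication by a locally bounded predictable process preserves $R$-integrability, $\lambda=\alpha\cdot\tfrac{1}{1-W_-}\indicator{\lc0,\sigma\rc}\in L(R)$. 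This local-boundedness step is the main obstacle; without \ref{20241230.a} the denominator could decay to zero continuously and $\lambda$ would fail to integrate $R$. Everything else reduces to Lemma~\ref{L:SDE}, the Yor formula, and the pointwise identities for $\utilMMV$ and $\utilMMV'$.
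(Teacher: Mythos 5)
Your proof is correct and follows essentially the same route as the paper's: both directions hinge on Lemma~\ref{L:SDE}, the pointwise identity $1-2\utilMMV=(\utilMMV')^2$ combined with the Yor formula for \eqref{eq:terminal utility}, and local boundedness of the relevant predictable multiplier for the two integrability claims (the paper handles the converse via associativity of stochastic integrals on $\lc0,\sigma\rc$ rather than your global SDE-plus-uniqueness verification, but this is the same idea). One cosmetic remark: your justification of the strictly positive pathwise infimum of $1-\alpha\sint R_-$ on $\lc0,\sigma\rc$ needs, besides left-continuity, the observation that the right limits $1-\alpha\sint R_t$ are also positive for $t<\sigma$ (by definition of $\sigma$ as the first zero of $(1-\alpha\sint R)^+$) — equivalently, one can just note that the multiplier is a finite-valued c\`agl\`ad adapted process and hence locally bounded.
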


\subsection{Optimal dollar investment per unit of local risk tolerance}\label{SS:2.4}
From now on to the end of the paper, we assume that $R$ has independent increments and we invoke the instantaneous no-arbitrage condition in Assumption~\ref{A:instNA_PII}. All proofs of this subsection are found in Appendix~\ref{appx:C}.

For deterministic $\lambda$, we shall now produce an explicit formula for the expected monotone quadratic utility of the strategy $\alpha$ in \eqref{eq:alpha}. Recall that a semimartingale is called `special' if it is a sum of a finite variation predictable process and a local martingale; see \cite[I.4.21--23]{js.03}.
\begin{proposition}\label{P:Cierne240807}
For a deterministic $\lambda$ that integrates $R$, the following are equivalent.
\begin{enumerate}[(i)]
\item\label{P:Cierne240807.i} The strategy $\alpha$ in \eqref{eq:alpha} has finite expected utility, i.e., $\utilMMV(\alpha\sint R_T)\in L^1$.
\item\label{P:Cierne240807.ii} The realized local utility $\utilMMV \circ (\lambda\sint R)$ is special.
\end{enumerate}
Furthermore, if one of these conditions holds, then
\begin{equation}\label{eq:Cierne240808}
\E[2\utilMMV(\alpha\sint R_T)] = 1-\Exp\bigs( b^{-2\utilMMV \circ\mkern1mu (\lambda\sint R)} \sint A\bigs)_T<1.
\end{equation}
\end{proposition}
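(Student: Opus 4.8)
The plan is to reduce the statement to the analysis of the nonnegative process $Z:=\Exp(-2\utilMMV\circ(\lambda\sint R))$ and to exploit that, since $\lambda$ is deterministic and $R$ has independent increments, $H:=-2\utilMMV\circ(\lambda\sint R)$ is again a semimartingale with independent increments and deterministic differential characteristics. First I would record the reduction coming from Proposition~\ref{P:lambda to alpha and back}: by \eqref{eq:terminal utility} one has $\utilMMV(\alpha\sint R_T)=(1-Z_T)/2$, and since $\utilMMV\le 1/2$ the jumps of $H$ satisfy $\Delta H\ge -1$, so that $Z\ge 0$ and $\utilMMV(\alpha\sint R_T)\le 1/2$ is bounded above. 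Consequently $\utilMMV(\alpha\sint R_T)\in L^1$ holds if and only if $\E[Z_T]<\infty$, in which case $\E[\utilMMV(\alpha\sint R_T)]=(1-\E[Z_T])/2$. The proposition thus amounts to the two assertions that $\E[Z_T]<\infty$ is equivalent to $H$ being special and that, in this case, $\E[Z_T]=\Exp(b^{H}\sint A)_T$.

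For the implication \ref{P:Cierne240807.ii}$\Rightarrow$\ref{P:Cierne240807.i} together with the formula, I would assume $H$ special, write its canonical decomposition $H=M+B$ with $B=b^{H}\sint A$ deterministic and $M$ a local martingale, and set $K:=\Exp(B)$. Since $\Delta B_t=b^{H}_t\Delta A_t=\E[\Delta H_t]\ge -1$, with equality only if $\lambda_t\Delta R_t\ge 1$ almost surely (a sure gain excluded by Assumption~\ref{A:instNA_PII}), the finite-variation process $K$ is strictly positive, so $K_T\in(0,\infty)$. A short computation with Yor's formula shows that the drift of $\tilde Z:=Z/K$ vanishes, so $\tilde Z$ is a nonnegative local martingale, hence a supermartingale, giving $\E[Z_T]\le K_T<\infty$ and thus \ref{P:Cierne240807.i}. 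To turn the inequality into the equality \eqref{eq:Cierne240808}, I would use the independent-increments structure: for $s\le t$ the multiplicative increment $Z_t/Z_s$ is independent of $\sigalg{F}_s$, so once $\E[Z_t]\in(0,\infty)$ the normalised process $Z/\E[Z_\cdot]$ is a genuine martingale. Taking expectations in $Z=1+Z_-\sint H$ then shows that $m(t):=\E[Z_t]$ solves the deterministic linear equation $m=1+m_-\sint B$, whose unique solution is $K=\Exp(B)$; therefore $\E[Z_T]=K_T=\Exp(b^{H}\sint A)_T$, and the strict bound $\E[\utilMMV(\alpha\sint R_T)]<1/2$ follows from $\E[Z_T]=K_T>0$.

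For the converse \ref{P:Cierne240807.i}$\Rightarrow$\ref{P:Cierne240807.ii} I would argue through the jumps. Since $\Delta H\ge -1$, the process $H$ is special precisely when its large (necessarily positive) jumps are integrable, i.e. when the increasing pure-jump process $\hat H:=\sum_{s\le\cdot}\Delta H_s\indicator{\Delta H_s>1}$ satisfies $\E[\hat H_T]<\infty$. Writing $\check H:=H-\hat H$, whose jumps lie in $[-1,1]$, one has $[\check H,\hat H]=0$ (disjoint jumps, no continuous covariation), so by Yor's product formula $Z=\Exp(\check H)\,\Exp(\hat H)$, with $\Exp(\hat H)=\prod_{s\le\cdot,\,\Delta H_s>1}(1+\Delta H_s)\ge 1+\hat H$ and $\Exp(\check H)\ge 0$. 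Because $\hat H$ is built from the jumps of $R$ landing in the large-loss region $\{\lambda_t x<1-\sqrt2\}$ while $\check H$ uses only the complementary jumps and the Gaussian part, the independent scattering of the jump measure of a process with independent increments makes $\Exp(\hat H)_T$ and $\Exp(\check H)_T$ independent. Instantaneous no-arbitrage again rules out a sure killing jump $\lambda_t\Delta R_t\ge 1$, so $\E[\Exp(\check H)_T]\in(0,\infty)$; hence $\E[Z_T]=\E[\Exp(\check H)_T]\,\E[\Exp(\hat H)_T]<\infty$ forces $\E[\Exp(\hat H)_T]<\infty$ and therefore $\E[\hat H_T]<\infty$, i.e. \ref{P:Cierne240807.ii}.

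The main obstacle is exactly the implication \ref{P:Cierne240807.i}$\Rightarrow$\ref{P:Cierne240807.ii}: extracting integrability of the large jumps of $H$ from integrability of $Z_T$ alone. The independent-scattering factorisation $Z=\Exp(\check H)\,\Exp(\hat H)$ into independent factors is what makes this tractable, and the lower bound $\Exp(\hat H)\ge 1+\hat H$ then transfers finiteness to $\hat H$; the remaining care concerns the degenerate case where $\Exp(\check H)_T$ could vanish identically, which I would dispose of through Assumption~\ref{A:instNA_PII}. The other delicate point, needed for the exact evaluation rather than the mere bound, is justifying the interchange of expectation and stochastic integration that yields the mean-ODE — equivalently, that $Z/\E[Z_\cdot]$ is a true and not merely local martingale — which is precisely where the independent-increments property is used in place of a generic localisation argument.
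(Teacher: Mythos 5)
Your preparatory steps coincide with the paper's: reduce via \eqref{eq:terminal utility} to $Z=\Exp(H)$ with $H=-2\utilMMV\circ(\lambda\sint R)$, note that $H$ has independent increments and $\Delta H\geq-1$, and use Assumption~\ref{A:instNA_PII} to rule out $\E[\Delta H_t]=-1$; the paper then simply cites \cite[Theorem~4.1]{cerny.ruf.23.spa}, whereas you attempt to prove that result from scratch. The decisive gap is in your implication \ref{P:Cierne240807.i}$\Rightarrow$\ref{P:Cierne240807.ii}: the claimed independence of $\Exp(\check H)_T$ and $\Exp(\hat H)_T$ is false in the generality required here. Independent scattering of the jump measure over disjoint jump-size regions is a property of Poisson random measures, hence of the \emph{quasi-left-continuous} part of an independent-increments semimartingale (\cite[Theorem~II.4.8]{js.03}); it fails at fixed jump times, which this paper expressly allows (the activity process $A$ may have atoms; Examples~\ref{E:5} and~\ref{E:6} consist entirely of such jumps). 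At a fixed jump time $t$ your two processes are built from the \emph{same} random variable $\Delta H_t$: the respective factors are $1+\Delta H_t\indicator{\Delta H_t>1}$ and $1+\Delta H_t\indicator{\Delta H_t\leq1}$. If, say, $\P[\Delta H_t=2]=\P[\Delta H_t=-1]=\tfrac12$ (i.e., $\lambda\Delta R_t$ takes the values $1-\sqrt{3}$ and $1$, compatible with \eqref{eq:NA_t}), then $\E[1+\Delta H_t]=\tfrac32$ while $\E[1+\Delta H_t\indicator{\Delta H_t>1}]\,\E[1+\Delta H_t\indicator{\Delta H_t\leq1}]=2\cdot\tfrac12=1$, so the identity $\E[Z_T]=\E[\Exp(\check H)_T]\,\E[\Exp(\hat H)_T]$ on which your transfer of integrability rests is simply wrong. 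The independence a PII semimartingale actually provides is across disjoint \emph{time} sets: one must first split $H$ into its quasi-left-continuous part and the (independent) sum of its jumps at fixed times; your region-splitting argument is valid for the former, while the latter has to be treated time by time, where per factor $\E[1+\Delta H_t]<\infty$ holds if and only if $\E[\Delta H_t\indicator{\Delta H_t>1}]<\infty$.

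There is also an unfinished point in your direction \ref{P:Cierne240807.ii}$\Rightarrow$\ref{P:Cierne240807.i}. The supermartingale bound $\E[Z_T]\leq K_T$ with $K=\Exp(b^{H}\sint A)$ is sound, but the promotion to equality is not closed: the fact that $Z/\E[Z_\cdot]$ is a true martingale (which your independence-of-multiplicative-increments argument does give, provided one also proves $\E[Z_s]>0$ --- this needs, beyond excluding a sure killing jump, that jumps with $\lambda\Delta R\geq 1$ can occur at only finitely many fixed times with summable probabilities, a consequence of $\lambda\sint R$ being a semimartingale) does not by itself identify $\E[Z_\cdot]$ with $K$; and taking expectations in $Z=1+Z_-\sint H$ presupposes $\E[(Z_-\sint(H-b^{H}\sint A))_t]=0$, which is essentially the true-martingale property of $Z/K$ that is the point at issue. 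The argument can be completed from your ingredients --- with $m=\E[Z_\cdot]$ and $N=Z/m$ one shows $N_-\sint(m/K)$ is a predictable finite-variation local martingale, hence zero, and then Tonelli applied to the Jordan decomposition of $m/K$ forces $m/K\equiv1$ --- but as written you assume what must be proven. The paper avoids both issues by delegating exactly these two assertions (integrability of $\Exp(H)_T$ $\Leftrightarrow$ specialness of $H$, plus $\E[\Exp(H)_T]=\Exp(B^{H})_T$) to the cited theorem, after the same no-arbitrage verification you perform.
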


\begin{remark} The quantity $\utilMMV\circ (\lambda\sint R)$ appearing on the right-hand side of \eqref{eq:Cierne240808} is the $\utilMMV$-variation of the cumulative yield $\lambda\sint R$. Since the variation is a limit of partial sums with summands of the form $\utilMMV(\lambda\sint R_{t_{n+1}}-\lambda\sint R_{t_n})$ (e.g., \citet*[Th\'eor\`eme~2a]{emery.78}), in Proposition~\ref{P:Cierne240807} we have interpreted this quantity as the \emph{realized local utility} of the strategy $\lambda$.
\end{remark}

A conditional expectation analog of \eqref{eq:Cierne240808} shows that 
\[
\E_t[2\utilMMV(\alpha\sint R_T)] = 
1-(\utilMMV'(\alpha\sint R_t))^2 \Exp\bigs( \indicator{\rc t,T\rc}b^{-2\utilMMV \circ\mkern1mu (\lambda\sint R)} \sint A\bigs)_T,
\]
i.e., the conditional expected utility generated by deterministic processes $\lambda$ is a deterministic affine transformation of $\utilMMV(W_t)$, whose local risk tolerance with respect to $W_t$ is therefore yet again $\left(1 - W_t\right)^+$. This strongly indicates that the optimal strategy, should it exist, is generated by a deterministic $\lambda$ and that  
\begin{quote}
the optimal $\lambda$ is obtained by maximizing the predictable $\utilMMV$-variation (i.e., the local expected utility) of the cumulative yield $\lambda\sint R$.
\end{quote}

We shall now examine the calculation of the optimal $\lambda$ in more detail. 
Consider a deterministic $\lambda$ that integrates $R$ and such that the realized local utility $\utilMMV \circ\mkern1mu (\lambda\sint R)$ is special. By composition (Proposition~\ref{P:composition}), one has 
$\utilMMV \circ\mkern1mu (\lambda\sint R) = \utilMMV(\lambda\mkern1mu\id) \circ R$. 
From Lemma~\ref{L:240130} one  obtains an explicit expression for the local expected utility rate 
\begin{equation*}
b^{\utilMMV \circ\mkern1mu (\lambda\sint R)}=  b^{ R[1]}\lambda - \frac{1}{2}\lambda c^{ R}\lambda^\top
+\int \bigs(\utilMMV(\lambda x)-\lambda h(x)\bigs)F^{ R}(\d x),
\end{equation*}
which then enters the expression \eqref{eq:Cierne240808}. 

\begin{definition}
We say that the deterministic function $\locutilMMV:[0,T]\times\R^d\to [-\infty,\infty)$ given by 
\begin{equation}\label{eq:locutil}
 \locutilMMV_t(\lambda) = b_t^{ R[1]}\lambda - \frac{1}{2}\lambda c_t^{ R}\lambda^\top
+\int_{\R^d} \bigs(\utilMMV(\lambda x)-\lambda h(x)\bigs)F_t^{ R}(\d x)
\end{equation}
is the \emph{local expected utility}.
\end{definition} 
Under the standing assumptions of independent returns and instantaneous absence of arbitrage, the local expected utility admits a maximizer over $\lambda$ $t$-by-$t$ for all $t$. Observe that $\locutilMMV$ is well defined even if $\lambda$ does not integrate $R$ or if $\utilMMV \circ\mkern1mu (\lambda\sint R)$ fails to be special. 

\begin{proposition}\label{P:Cierne240808}
There is a deterministic predictable process $\lMMV$ such that everywhere on $[0,T]$ one has  
\begin{enumerate}[(1)]
\item\label{Cierne240808.1} $\locutilMMV(\lambda)\leq \locutilMMV(\lMMV)$  for all $\lambda\in\R^d$;
\item\label{Cierne240808.2} $ 0\leq \locutilMMV(\lMMV)<\infty$; 
\item\label{Cierne240808.3} $\int_{\R^d} (\lMMV x)^2\indicator{\lMMV x<0}F^R(\d x) <\infty$ and 
\begin{equation*}
 b^{ R[1]}\lMMV - \lMMV c^{ R}\lMMV^\top
+\int_{\R^d} \bigs(\utilMMV'(\lMMV x)\lMMV x-\lMMV h(x)\bigs)F^{ R}(\d x) = 0.
\end{equation*}
\end{enumerate}
\end{proposition}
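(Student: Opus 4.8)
The plan is to fix $t\in[0,T]$ and treat \eqref{eq:locutil} as a deterministic concave maximization over $\lambda\in\R^d$, solve it pointwise, and finally glue the pointwise maximizers into a measurable (hence predictable, since deterministic) process. Write $\locutil:=\locutilMMV_t$ and recall that $\utilMMV=\id\wedge1-\tfrac12(\id\wedge1)^2$ is concave and continuously differentiable with $\utilMMV'=(1-\id)^+$. Since $\lambda\mapsto\utilMMV(\lambda x)$ is concave (concave composed with linear), $-\tfrac12\lambda c^R_t\lambda^\top$ is concave, and the linear term is affine, $\locutil$ is concave; near $x=0$ the integrand $\utilMMV(\lambda x)-\lambda h(x)$ is $O(|x|^2)$ while away from $0$ the jump measure is finite, so $\locutil$ takes values in $[-\infty,\infty)$, is upper semicontinuous, and satisfies $\locutil(0)=0$. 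In particular the supremum is $\geq0$, which will give the lower bound in \ref{Cierne240808.2}.

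The heart of the proof, and the main obstacle, is to show that this supremum is finite and attained; this is where Assumption~\ref{A:instNA_PII} enters. I would analyse the behaviour of $s\mapsto\locutil(s\lambda)$ as $s\to\infty$ for each fixed direction $\lambda$. If $\lambda c^R_t\lambda^\top>0$, the diffusive term forces $\locutil(s\lambda)\to-\infty$ quadratically; if instead $\lambda c^R_t=0$ but $F^R_t(\{\lambda x<0\})>0$, the downside of $\utilMMV$ (which is $\id-\tfrac12\id^2$ below the bliss point) again forces $\locutil(s\lambda)\to-\infty$. The only remaining directions are the instantaneously riskless ones, and here the two implication branches of Assumption~\ref{A:instNA_PII} are exactly tailored: when $F^R_t(\{\lambda x>0\})>0$ the negative genuine-drift clause $\lambda b^{R[1]}_t-\int\lambda h(x)F^R_t(\d x)<0$ makes the recession strictly negative, while on the genuinely null directions ($\lambda c^R_t=0$ and $\lambda x=0$ for $F^R_t$-a.e.\ $x$) the clause $\lambda b^{R[1]}_t=0$ pins the profile $s\mapsto\locutil(s\lambda)$ to the constant $0$. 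Consequently the recession cone of $\locutil$ reduces to its lineality space, on which $\locutil$ is constant, and the supremum is finite and attained by some $\lLRT_t\in\R^d$; this yields \ref{Cierne240808.1} and the finiteness in \ref{Cierne240808.2}. I expect the genuinely delicate point to be this last case: because the jump measure may be infinite and the truncation $h$ does not respect the hyperplane $\{\lambda x=0\}$, one must argue carefully that $\int(\utilMMV(s\lambda x)-s\lambda h(x))F^R_t(\d x)$ really produces a bounded, indeed constant, profile in $s$, using that the small-jump contribution vanishes there and that the large-jump drift is controlled by the no-arbitrage clauses.

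With a maximizer $\lLRT_t$ in hand, \ref{Cierne240808.3} follows from first principles. Since $\locutil(\lLRT_t)\geq0>-\infty$ and, on the region $\{\lLRT_t x<0\}$, the integrand behaves like $-\tfrac12(\lLRT_t x)^2$ up to the bounded term $-\lLRT_t h(x)$, finiteness of $\locutil(\lLRT_t)$ forces $\int(\lLRT_t x)^2\indicator{\lLRT_t x<0}F^R_t(\d x)<\infty$, the first assertion of \ref{Cierne240808.3}. For the Euler equation I would differentiate along the ray: because $\{s\lLRT_t x<0\}=\{\lLRT_t x<0\}$ for $s>0$, the downside finiteness just established propagates to the entire ray, so $s\mapsto\locutil(s\lLRT_t)$ is finite and concave on $(0,\infty)$ with an interior maximum at $s=1$. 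Differentiating under the integral sign (justified by the Lipschitz bound on $\utilMMV'$ together with the downside finiteness) and equating the derivative at $s=1$ to zero gives exactly the stated identity
\[
b^{R[1]}_t\lLRT_t-\lLRT_t c^R_t\lLRT_t^\top+\int_{\R^d}\bigs(\utilMMV'(\lLRT_t x)\lLRT_t x-\lLRT_t h(x)\bigs)F^R_t(\d x)=0.
\]

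Finally, to produce a single deterministic predictable process $\lLRT$ satisfying \ref{Cierne240808.1}--\ref{Cierne240808.3} everywhere on $[0,T]$, I would invoke a measurable selection theorem. The characteristics $(b^{R[1]},c^R,F^R)$ are deterministic and measurable in $t$, so $(t,\lambda)\mapsto\locutil_t(\lambda)$ is jointly measurable; the argmax correspondence $t\mapsto\{\lambda:\locutil_t(\lambda)=\sup_{\mu}\locutil_t(\mu)\}$ then has nonempty closed values and is measurable, and the Kuratowski--Ryll-Nardzewski theorem furnishes a measurable selector $t\mapsto\lLRT_t$; being deterministic, it is predictable. This completes the plan.
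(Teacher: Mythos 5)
Your plan is correct and, in substance, it is the paper's own proof: the directional analysis of $s\mapsto\locutilMMV_t(s\lambda)$ — quadratic blow-down when $\lambda c^R_t\lambda^\top>0$, blow-down from the quadratic downside of $\utilMMV$ when $F^R_t(\{\lambda x<0\})>0$, and the two branches of Assumption~\ref{A:instNA_PII} for the remaining directions (strictly negative asymptotic slope, or a profile identically equal to $0$) — is exactly the paper's Step~2, which then cites Rockafellar's Theorem~27.3 for attainment, i.e.\ your recession-cone/lineality argument; nonnegativity and finiteness of $\locutilMMV_t(\lLRT_t)$ come from $\locutilMMV_t(0)=0$ and $\utilMMV\le\tfrac12$; and your derivation of item~(3) — downside square integrability forced by finiteness via the two-sided quadratic bounds on $\utilMMV$ on $(-\infty,-2]$, then the Euler equation from $\psi(s)=\locutilMMV_t(s\lLRT_t)$ having an interior maximum at $s=1$ and differentiating under the integral — is the paper's Step~3 verbatim.

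The one place where you are materially lighter than the paper is the selection step, and it is worth knowing where the real work sits. You assert that joint measurability of $(t,\lambda)\mapsto\locutilMMV_t(\lambda)$ makes the argmax correspondence measurable and then invoke Kuratowski--Ryll-Nardzewski. That implication is not automatic: for an extended-real-valued function that is merely jointly Borel measurable and upper semicontinuous in $\lambda$, measurability of the hypograph correspondence (the ``normal integrand'' property) does not follow without completing the measure space, and completion would only produce a selector defined and optimal $A$-a.e., whereas the proposition demands (1)--(3) \emph{everywhere} on $[0,T]$. Supplying this is precisely the paper's Step~1: $\locutilMMV$ is shown to be a normal integrand in the sense of Rockafellar--Wets by splitting off the Carath\'eodory (drift--diffusion) part and writing the jump term as an integral against the finite kernel $(|x|^2\wedge 1)F^R_t(\d x)$ of a normal integrand in $(x,\lambda)$, citing Pennanen--Perkki\"o; Rockafellar--Wets Theorem~14.37 then delivers the Borel (hence, being deterministic, predictable) maximizing selector everywhere. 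Finally, your instinct that the null-direction case is delicate is sound: the constancy of the profile there is stated by the paper in terms of the genuine drift $b^{(\lambda R)[1]}$ of the compressed process $\lambda R$ vanishing together with $F^{\lambda R}$, which is the correct reading of the second clause of \eqref{eq:instNA_PII}, exactly because the truncation $h$ need not vanish on the hyperplane $\{\lambda x=0\}$.
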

\begin{remark}
Proposition~\ref{P:Cierne240808} is novel to the extent that it establishes the existence of a maximizer of a specific local expected utility for a yield process with general independent increments. A precursor to this type of result can be found in \citet*{nutz.12.mafi} for L\'evy processes and \citet*{cerny.al.12} for one-period models. An expression of the form \eqref{eq:locutil} first appeared in \citet*[Definition~3.2]{kallsen.99} under the name ``local utility.''
\end{remark}
\begin{remark}\label{R:nonuniqueness}
If the model $R$ admits null strategies, i.e., predictable $\lambda\neq0$ such that $\lambda b^{R[1]}=0$, $\lambda c^R=0$, and $\int_{\R^d}\indicator{\lambda x\neq 0}F^R(\d x)=0$, then trivially $\lMMV$ can be picked in many different ways. But even in the absence of such strategies, $\lMMV$ may not be unique since $\utilMMV$ is not strictly concave. This can be seen already in a one-period model with four atoms (Example~\ref{E:1}). We fix one $\lMMV$ from now on and refer to it as \emph{the locally optimal strategy}. Observe that properties \ref{Cierne240808.1}--\ref{Cierne240808.3} of Proposition~\ref{P:Cierne240808} hold for any local optimizer, regardless of whether it is measurable.
\end{remark}
\subsection{Optimal local expected utility strategy integrates \texorpdfstring{$R$}{R}}\label{SS:2.5}
The aim of this section is to show that if the cumulative local expected utility is finite, then the optimal strategy $\lMMV$ integrates $R$. To the best of our knowledge, this is the first time that a link of this kind has been established in the literature.

One of the key ideas of the proof is to describe the maximal local expected utility $\locutil(\lMMV)$ as the drift rate of the variation $\utilMMV\circ (\lMMV\sint R)$. However, for the variation to be well defined, we first need to know that $\lMMV$ integrates $R$, leaving us with a conundrum how to break out of the circular reasoning. The second difficulty is that, even when granted that $\lMMV$ integrates $R$, we do not know whether $\utilMMV\circ (\lMMV\sint R)$ has a drift rate in the classical sense of being special. 

Our way around the first obstacle is to consider a sequence of strategies $\lMMVn = \lMMV\indicator{|\lMMV|\leq n}$, $n\in\N$. Here 
$|\mkern1mu\cdot\mkern1mu|$ stands for an arbitrary norm on $\R^d$. We overcome the second difficulty by assigning a drift rate to processes that are not necessarily special but only $\sigma$-special as described in Appendix~\ref{appx:D}. 
\begin{lemma}\label{L:armistice1}
The process $\utilMMV\circ (\lMMVn\sint R)$ is $\sigma$-special with 
\begin{equation*}
0\leq b^{\utilMMV\circ (\lMMVn\sint R)} =\locutilMMV(\lMMVn)\uparrow \locutilMMV(\lMMV),\qquad n\uparrow \infty.
\end{equation*}
\end{lemma}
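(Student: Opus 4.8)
The plan is to reduce the statement to the drift-rate formula of Lemma~\ref{L:240130} applied to the bounded integrand $\lLRTn$, and then to read off the three assertions ($\sigma$-specialness with the stated drift rate, nonnegativity, and monotone convergence) from the single structural fact that, pointwise in $t$, $\lLRTn_t\in\{0,\lLRT_t\}$. First I would note that $\lLRTn=\lLRT\indicator{|\lLRT|\leq n}$ is bounded (by $n$ in that norm), hence lies in $L(R)$, so that the $\utilMMV$-variation $\utilMMV\circ(\lLRTn\sint R)$ is a well-defined semimartingale; by composition \cite[Remark~3.19]{cerny.ruf.22.ejp} it equals $\utilMMV(\lLRTn\id)\circ R$, and since $\lLRTn$ is predictable the requisite regularity at $0$ of the predictable function $\utilMMV(\lLRTn\,\id)$ is inherited from $\utilMMV$.

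For the $\sigma$-special claim, the key is the pointwise integrability of the large negative jumps. Since $\lLRTn_t\in\{0,\lLRT_t\}$, one has $\int_{\R^d}(\lLRTn x)^2\indicator{\lLRTn x<0}F^R(\d x)\in\{0,\int_{\R^d}(\lLRT x)^2\indicator{\lLRT x<0}F^R(\d x)\}$, which is finite by Proposition~\ref{P:Cierne240808}\ref{Cierne240808.3}. This is exactly the condition under which Lemma~\ref{L:240130}, together with the generalized drift-rate framework of Appendix~\ref{appx:D}, applies and yields that $\utilMMV\circ(\lLRTn\sint R)$ is $\sigma$-special with $b^{\utilMMV\circ(\lLRTn\sint R)}=\locutilMMV(\lLRTn)$. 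I expect this passage---from pointwise finiteness of the jump integral to $\sigma$-specialness---to be the main obstacle, precisely because the process is in general \emph{not} special: the large negative jumps contribute a term of the form $-\tfrac{1}{2}(\lLRTn x)^2\indicator{\lLRTn x<0}$ to the drift rate whose $A$-integral over $[0,T]$ may be infinite even though the drift rate is finite $t$-by-$t$. It is this gap between pointwise finiteness and $A$-integrability that forces the weaker notion of Appendix~\ref{appx:D} in place of a classical special decomposition.

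The remaining two assertions follow again from $\lLRTn_t\in\{0,\lLRT_t\}$. Since $\locutilMMV_t(0)=0$ and $\locutilMMV_t(\lLRT_t)\geq0$ by Proposition~\ref{P:Cierne240808}\ref{Cierne240808.1}--\ref{Cierne240808.2}, we obtain $0\leq\locutilMMV_t(\lLRTn_t)\leq\locutilMMV_t(\lLRT_t)<\infty$ for every $t$, which gives simultaneously the lower bound $b^{\utilMMV\circ(\lLRTn\sint R)}\geq0$ and the pointwise finiteness of the drift rate used above. Finally, I would argue the monotone convergence pointwise: fixing $t$, for $n<|\lLRT_t|$ one has $\lLRTn_t=0$ and hence $\locutilMMV_t(\lLRTn_t)=0$, whereas for $n\geq|\lLRT_t|$ one has $\lLRTn_t=\lLRT_t$ and hence $\locutilMMV_t(\lLRTn_t)=\locutilMMV_t(\lLRT_t)$. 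The sequence thus makes a single upward jump from $0$ to the nonnegative value $\locutilMMV_t(\lLRT_t)$ and therefore increases to $\locutilMMV_t(\lLRT_t)$, which is the claimed $\locutilMMV(\lLRTn)\uparrow\locutilMMV(\lLRT)$.
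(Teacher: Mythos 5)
Your proof is correct and takes essentially the same route as the paper's: the paper's own proof consists precisely of citing Proposition~\ref{P:Cierne240808} for $0\leq \locutilMMV(\lLRTn)\uparrow \locutilMMV(\lLRT)$ and Lemma~\ref{L:240130} for the $\sigma$-specialness together with the drift identity $b^{\utilMMV\circ (\lLRTn\sint R)} =\locutilMMV(\lLRTn)$. Your write-up simply makes explicit the verifications the paper leaves implicit, namely the pointwise dichotomy $\lLRTn_t\in\{0,\lLRT_t\}$ and the resulting $t$-by-$t$ integrability of the large negative jumps needed to invoke Lemma~\ref{L:240130}.
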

\begin{proof}
Proposition~\ref{P:Cierne240808} yields $0\leq \locutilMMV(\lMMVn)\uparrow \locutilMMV(\lMMV)$. Lemma~\ref{L:240130} yields that $\utilMMV\circ (\lMMVn\sint R)$ is $\sigma$-special with $b^{\utilMMV\circ (\lMMVn\sint R)} =\locutilMMV(\lMMVn)$.
\end{proof}
Next, we shall recast integrability of $\lMMV$ in terms of predictable variations. The ability to do so is based on a new result developed for this purpose in Corollary~\ref{C:integrability}.
\begin{lemma}\label{L:armistice2}
The following are equivalent.
\begin{enumerate}[(i)]
\item The optimal local expected utility strategy $\lMMV$ integrates $R$.
\item The predictable process 
\begin{equation}\label{eq:London241109}
\lim_{n\uparrow \infty} \left(b^{(\id^2\wedge1)\circ (\lMMVn\sint R)} + \big|b^{\id\mkern1mu\indicator{|\id|\leq1}\circ (\lMMVn\sint R)} \big| \right)
\end{equation}
integrates $A$.
\end{enumerate}
\end{lemma}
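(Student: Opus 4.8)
The plan is to reduce the equivalence to the general integrability criterion of Corollary~\ref{C:integrability}, after computing the two drift rates in \eqref{eq:London241109} and identifying their limit explicitly. First I would observe that each $\lLRTn=\lLRT\indicator{|\lLRT|\leq n}$ is bounded and therefore integrates $R$, so $\lLRTn\sint R$ is a genuine semimartingale and the $g$-variations $(\id^2\wedge1)\circ(\lLRTn\sint R)$ and $\id\,\indicator{|\id|\leq1}\circ(\lLRTn\sint R)$ are well defined. By Lemma~\ref{L:240130} both are $\sigma$-special, with deterministic drift rates
\begin{align*}
b^{(\id^2\wedge1)\circ(\lLRTn\sint R)} &= \lLRTn c^R(\lLRTn)^\top + \int_{\R^d}\left((\lLRTn x)^2\wedge1\right)F^R(\d x),\\
b^{\id\,\indicator{|\id|\leq1}\circ(\lLRTn\sint R)} &= \lLRTn b^{R[1]} + \int_{\R^d}\left(\lLRTn x\,\indicator{|\lLRTn x|\leq1}-\lLRTn h(x)\right)F^R(\d x),
\end{align*}
the first of which is non-negative, explaining why only the second carries an absolute value in \eqref{eq:London241109}.

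Next I would identify the limit. Since $\lLRT$ is deterministic, for each $t$ one has $\lLRTn_t\in\{0,\lLRT_t\}$, with $\lLRTn_t=\lLRT_t$ as soon as $n\geq|\lLRT_t|$; hence at every $t$ both drift rates are eventually constant in $n$. Consequently the non-negative integrand of \eqref{eq:London241109} is non-decreasing in $n$ at each $t$ (it steps from $0$ up to its terminal value) and converges pointwise to
\begin{equation*}
\Phi := \lLRT c^R\lLRT^\top + \int_{\R^d}\left((\lLRT x)^2\wedge1\right)F^R(\d x) + \left|\lLRT b^{R[1]} + \int_{\R^d}\left(\lLRT x\,\indicator{|\lLRT x|\leq1}-\lLRT h(x)\right)F^R(\d x)\right|.
\end{equation*}
By monotone convergence the limit in \eqref{eq:London241109} equals $\Phi$ and integrates $A$ if and only if $\int_0^T\Phi_t\,\d A_t<\infty$, so condition (ii) is equivalent to $A$-integrability of $\Phi$.

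It then remains to match $A$-integrability of $\Phi$ with $\lLRT\in L(R)$, which is exactly the purpose of Corollary~\ref{C:integrability}. Heuristically, the first two summands of $\Phi$ govern the continuous and purely discontinuous martingale parts of the candidate integral $\lLRT\sint R$ (jointly the summability of its large jumps and the square-integrable compensation of its small jumps, since $\indicator{|\lLRT x|>1}\leq(\lLRT x)^2\wedge1$), while the last summand governs its predictable drift. Applying Corollary~\ref{C:integrability} with integrand $\lLRT$ yields $\lLRT\in L(R)$ if and only if $\int_0^T\Phi_t\,\d A_t<\infty$, which together with the previous paragraph gives the equivalence of (i) and (ii).

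I expect the genuine difficulty to sit entirely inside Corollary~\ref{C:integrability} rather than in the bookkeeping above. The subtlety is that neither the limit process $\lLRT\sint R$ nor the approximating $g$-variations need be \emph{special}: their drift rates exist only in the weaker $\sigma$-special sense of Appendix~\ref{appx:D}, so one cannot test integrability of $\lLRT$ against classical characteristics directly. The role of the truncation sequence $\lLRTn$ together with the monotone passage to the limit is precisely to manufacture bona fide, $A$-testable drift rates out of these $\sigma$-special objects; establishing that their $A$-integrability is equivalent to convergence of $\lLRTn\sint R$ in the semimartingale topology, and hence to $\lLRT\in L(R)$, is the content I would have to borrow from Corollary~\ref{C:integrability}.
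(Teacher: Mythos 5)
Your proposal is correct and takes essentially the same route as the paper: Lemma~\ref{L:armistice2} is exactly Corollary~\ref{C:integrability} applied with $\zeta=\lLRT$ and $X=R$, after identifying the drift rates via the compositions $(\id^2\wedge1)\circ(\lLRTn\sint R)=((\lLRTn\,\id)^2\wedge1)\circ R$ and $\id\,\indicator{|\id|\leq1}\circ(\lLRTn\sint R)=\lLRTn\,\id\,\indicator{|\lLRTn\,\id|\leq1}\circ R$, which is precisely how the paper disposes of it. Your explicit identification of the pointwise limit $\Phi$ reproduces the monotone step $\eta^n\uparrow\eta$ inside the paper's own proof of that corollary, so nothing is missing.
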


With Lemmata~\ref{L:armistice1} and \ref{L:armistice2} in place, we continue by observing  that $\utilMMV$ in \eqref{eq:utilMMV}
is essentially linear--quadratic. Since the optimality of $\lMMV$ ties together the first and second moment of the truncated optimal yield, one obtains with a little bit of effort that both quantities in \eqref{eq:London241109} are controlled by the local expected utility $\locutilMMV(\lMMVn)$. The proofs of the next proposition and theorem are  found in Appendix~\ref{appx:E}.
\begin{proposition}\label{P:London241109}
For the optimal local expected utility strategy $\lMMV$ we have $b^{\id\mkern1mu\indicator{|\id|\leq1}\circ (\lMMVn\sint R)}\geq0$ and   
\begin{equation}\label{eq:key inequality}
2 b^{\utilMMV\circ (\lMMVn\sint R)}\leq b^{(\id^2\wedge1)\circ (\lMMVn\sint R)} + b^{\id\mkern1mu\indicator{|\id|\leq1}\circ (\lMMVn\sint R)} 
\leq 6 b^{\utilMMV\circ (\lMMVn\sint R)},\qquad n\in\N.
\end{equation}
\end{proposition}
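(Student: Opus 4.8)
The plan is to compute the three drift rates in \eqref{eq:key inequality} explicitly through the $\circ$-calculus, to use the first-order optimality condition of Proposition~\ref{P:Cierne240808}\ref{Cierne240808.3} to eliminate the drift term $\lLRTn b^{R[1]}$, and then to reduce both inequalities to elementary pointwise estimates on the jump integrands. Throughout, I fix $n$ and $t$ and abbreviate $\lambda=\lLRTn_t$, $b=b^{R[1]}_t$, $c=c^R_t$, $F=F^R_t$, writing $y=\lambda x$ for $x\in\R^d$. By construction $\lambda=\lLRT_t$ whenever $|\lLRT_t|\leq n$, and $\lambda=0$ otherwise; in the latter case all three rates vanish and the asserted (in)equalities hold trivially, so I may assume $\lambda=\lLRT_t$ and invoke the optimality condition.

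First I would record the three rates. Since $\id^2\wedge1$, $\id\,\indicator{|\id|\leq1}$ and $\utilMMV$ all vanish at $0$ and are twice differentiable there, with $(g'(0),g''(0))$ equal to $(0,2)$, $(1,0)$ and $(1,-1)$ respectively, Lemma~\ref{L:240130} (together with the composition $g\circ(\lambda\sint R)=g(\lambda\,\id)\circ R$) yields
\begin{align*}
b^{(\id^2\wedge1)\circ(\lambda\sint R)} &= \lambda c\lambda^\top + \int (y^2\wedge1)\,F(\d x),\\
b^{\id\,\indicator{|\id|\leq1}\circ(\lambda\sint R)} &= \lambda b + \int\big(y\,\indicator{|y|\leq1}-\lambda h(x)\big)F(\d x),\\
b^{\utilMMV\circ(\lambda\sint R)} &= \lambda b - \tfrac12\lambda c\lambda^\top + \int\big(\utilMMV(y)-\lambda h(x)\big)F(\d x),
\end{align*}
the last being $\locutilMMV(\lambda)$ from \eqref{eq:locutil}, which satisfies $0\leq b^{\utilMMV\circ(\lambda\sint R)}<\infty$ by Lemma~\ref{L:armistice1} and Proposition~\ref{P:Cierne240808}\ref{Cierne240808.2}.

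Next I would substitute the optimality condition of Proposition~\ref{P:Cierne240808}\ref{Cierne240808.3}, rewritten as $\lambda b = \lambda c\lambda^\top - \int(\utilMMV'(y)y-\lambda h(x))F(\d x)$, into the second and third rates. The decisive point is that the truncation terms $\lambda h(x)$ cancel inside the integrals, leaving
\[
b^{\id\,\indicator{|\id|\leq1}\circ(\lambda\sint R)} = \lambda c\lambda^\top + \int\psi(y)\,F(\d x),\qquad
b^{\utilMMV\circ(\lambda\sint R)} = \tfrac12\lambda c\lambda^\top + \int\phi(y)\,F(\d x),
\]
with $\psi(y)=y\,\indicator{|y|\leq1}-\utilMMV'(y)y$ and $\phi(y)=\utilMMV(y)-\utilMMV'(y)y$. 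Using $\utilMMV'(y)=(1-y)^+$ one finds $\phi(y)=\tfrac12 y^2$ for $y<1$ and $\phi(y)=\tfrac12$ for $y\geq1$, while $\psi(y)=y^2$ on $[-1,1)$, $\psi(y)=0$ for $y>1$, and $\psi(y)=y^2-y$ for $y<-1$. In particular $\psi\geq0$, which together with $\lambda c\lambda^\top\geq0$ already gives the first claim $b^{\id\,\indicator{|\id|\leq1}\circ(\lambda\sint R)}\geq0$.

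Finally, writing $\Phi=2\phi$ and $\Sigma=(\id^2\wedge1)+\psi$ as functions of $y$, the two target inequalities are encoded in the identities
\[
\big(b^{(\id^2\wedge1)\circ(\lambda\sint R)}+b^{\id\,\indicator{|\id|\leq1}\circ(\lambda\sint R)}\big)-2b^{\utilMMV\circ(\lambda\sint R)} = \lambda c\lambda^\top + \int(\Sigma-\Phi)\,F(\d x),
\]
\[
6b^{\utilMMV\circ(\lambda\sint R)}-\big(b^{(\id^2\wedge1)\circ(\lambda\sint R)}+b^{\id\,\indicator{|\id|\leq1}\circ(\lambda\sint R)}\big) = \lambda c\lambda^\top + \int(3\Phi-\Sigma)\,F(\d x),
\]
where the diffusion coefficients have been arranged to leave exactly one nonnegative $\lambda c\lambda^\top$. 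It then remains to check the scalar inequalities $\Phi(y)\leq\Sigma(y)\leq 3\Phi(y)$ for every $y\in\R$: on $[-1,1)$ this is $y^2\leq 2y^2\leq 3y^2$, on $y>1$ it is $1\leq1\leq3$, and on $y<-1$ it reduces to $1-y\geq0$ and $(2y-1)(y+1)\geq0$. Since $c$ is positive semidefinite, both integrals are nonnegative and \eqref{eq:key inequality} follows. The step requiring care — and the crux of the argument — is the regime $y<-1$: there $\phi(y)=\tfrac12 y^2$, reflecting the heavy quadratic penalty the monotone utility assigns to large losses, and this is precisely what makes $\Sigma\leq 3\Phi$ hold and hence forces the factor $6$ on the right-hand side of \eqref{eq:key inequality}; the lower bound is tight as $y\uparrow\infty$ and the upper bound as $y\downarrow-1$.
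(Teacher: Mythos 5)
Your proposal is correct and takes essentially the same route as the paper's proof: both rest on the drift formula of Lemma~\ref{L:240130}, on the first-order condition of Proposition~\ref{P:Cierne240808}\ref{Cierne240808.3} (which you rightly extend to $\lLRTn$ by observing that pointwise $\lLRTn$ equals either $\lLRT$ or $0$), and on elementary pointwise inequalities, yielding the same constants. The difference is purely organizational: you substitute the first-order condition to cancel $\lLRTn b^{R[1]}$ and the truncation terms and then verify the consolidated scalar bounds $\Phi\leq\Sigma\leq3\Phi$ case by case, whereas the paper encodes the same first-order information as the drift identities $2b^{\utilMMV\circ(\lLRTn\sint R)}=b^{(\id\wedge1)\circ(\lLRTn\sint R)}=b^{(\id\wedge1)^2\circ(\lLRTn\sint R)}$ and $b^{\id\mkern1mu\indicator{\id\leq1}\circ(\lLRTn\sint R)}=b^{\id^2\indicator{\id\leq1}\circ(\lLRTn\sint R)}$ and then chains inequalities between drifts of truncated variations.
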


The desired result is now immediate.
\begin{theorem}\label{T1}
The following are equivalent. 
\begin{enumerate}[(i)]
\item\label{T2.i} The cumulative maximal local expected utility is finite, i.e., 
\begin{equation*}
 \int_0^T \locutilMMV_t(\lMMV_t) \d A_t<\infty.
\end{equation*}
\item\label{T2.ii} The strategy $\lMMV$ integrates $R$.
\end{enumerate}
Furthermore, if one of these conditions holds, then the realized local utility $\utilMMV\circ (\lMMV\sint R)$ as well as the processes $(\id\wedge1)\circ (\lMMV\sint R)$ and  $(\id\wedge1)^2\circ (\lMMV\sint R)$ are special with
\begin{equation}\label{eq:Granada241018.2}
 2\locutilMMV(\lMMV)\sint A = B^{2\utilMMV\circ (\lMMV\sint R)}=B^{(\id\wedge1)\circ (\lMMV\sint R)}=B^{(\id\wedge1)^2\circ (\lMMV\sint R)} .
\end{equation}
\end{theorem}

\subsection{Optimal monotone mean--variance portfolios}\label{SS:2.6}

The next theorem is remarkable in that one does not require an \emph{a priori} existence of an equivalent $\sigma$-martingale measure with square-integrable density or square integrability of the yield process $R$. 

\begin{theorem}\label{T2}
If the cumulative maximal local expected utility is finite, i.e., 
$$ \int_0^T \locutilMMV_t(\lMMV_t) \d A_t< \infty,$$ 
then the maximal monotone mean--variance utility is given by
\begin{equation*}
\vMMV(x) = \sup_{\vartheta \in \AdmMMV}\, \{V_{\MMV}(x+\vartheta \sint R_T)\} 
=x+\dfrac{\Exp\bigs(-2\locutilMMV(\lMMV)\sint A\bigs)_T^{-1}-1}{2}<\infty,\quad x\in\R.
\end{equation*}
Furthermore, the optimal zero-cost strategy for expected monotone quadratic utility reads
\[
\aMMV(0) = \argmax_{\vartheta \in \AdmMMV}\{\E[\utilMMV(\vartheta \sint R_T)]\}
=\lMMV\Exp\bigs( -\left(\id\wedge1\right) \circ (\lMMV\sint R) \bigs)_{-}
\]
and the optimal monotone mean--variance investment strategy equals
\begin{equation*}
\bMMV(x) =\argmax_{\vartheta \in \AdmMMV}\, \{V_{\MMV}(x+\vartheta \sint R_T)\}= (1+2\vMMV(0))\aMMV(0). 
\end{equation*}
\end{theorem}

The complementary case gives rise to near-arbitrage opportunities. Observe that Theorems~\ref{T2} and \ref{T3} combined together yield that $\vMMV(0)$ is finite if and only if the cumulative maximal local expected utility is finite.
\begin{theorem}\label{T3}
If the cumulative maximal local expected utility is infinite, i.e., 
$$ \int_0^T \locutilMMV_t(\lMMV_t) \d A_t=\infty,$$ 
then there is a sequence of admissible strategies whose negative terminal wealth goes to $0$ in $L^{2}$ and whose positive terminal wealth is greater than or equal to $1$ with probability approaching $1$. Furthermore, there is a sequence of admissible strategies whose monotone mean--variance utility goes to $\infty$, i.e., $\vMMV(0)=\infty$.
\end{theorem}

\begin{remark}
Let us recast the optimal strategy in an alternative form. Fix $x\in\R$, $\gamma>0$, and recall  from  Remark~\ref{R:risk aversion 1} the role of $\gamma$ in shaping the optimal strategy: the optimal investment $\widetilde \beta(x)$ for MMV utility with risk aversion $\gamma$ coincides with the optimal risky investment for expected monotone quadratic utility with risk aversion $\overline\gamma = \gamma/(1+2\vMMV(0))$ and initial wealth $0$. The local risk tolerance  of this utility function reads
$\utilMMV'(\overline{\gamma}\id)/\overline\gamma=(1/\overline{\gamma} - \id)^+$. 
Since $\lMMV$ is the optimal dollar investment per unit of local risk tolerance,  we obtain
\[\widetilde \beta(x) = \bigg(\frac{1}{\overline\gamma}-\big(\widetilde{W}_--x\big)\bigg)^+\lMMV,\]
where $\widetilde{W} = x + \widetilde\beta(x)\sint R$ is the value of the optimal investement for the MMV utility with risk aversion $\gamma$.
\end{remark}

The proofs of both theorems are found in Appendix~\ref{appx:F}. We also state analogous results for the classical mean--variance preferences. These will be used to illustrate the difference between the classical and monotone mean--variance portfolio allocation in Example~\ref{E:2} and are new to the extent that they do not assume existence of an equivalent $\sigma$-martingale measure with square-integrable density or square integrability of the yield process $R$. 

Recall the mean--variance utility in \eqref{eq:MV1}, the quadratic utility function $\utilMV$ in \eqref{eq:utilMV}, and consider the corresponding optimal portfolio allocation problems 
\begin{align} 
\vMV(x) ={}& \sup_{\vartheta \in \AdmMV}\, \{V_{\MV}(x+\vartheta \sint R_T)\},\quad\qquad x\in\R, \notag \\
\uMV(x) ={}& \sup_{\vartheta \in \AdmMV}\{\E[\utilMV(x+\vartheta \sint R_T)]\},\qquad\, x\in\R.    \label{eq:uMV}
\end{align}
Denote the corresponding optimizers, should they exist, by $\bMV(x)$ and $\aMV(x)$, respectively. Observe that the expected utility maximization problem \eqref{eq:uMV} is time consistent. 
\begin{theorem}\label{T4}
There is a deterministic predictable process $\lMV$ that maximizes, for each $t\in[0,T]$, the local expected quadratic utility $\locutilMV:[0,T]\times\R^d\to [-\infty,\infty)$ given by 
\begin{equation*}
 \locutilMV_t(\lambda) = b_t^{ R[1]}\lambda - \frac{1}{2}\lambda c_t^{ R}\lambda^\top
+\int_{\R^d} \bigs(\utilMV(\lambda x)-\lambda h(x)\bigs)F_t^{ R}(\d x).
\end{equation*}
We have $2\uMV(0)<1$ if and only if $\vMV(0)<\infty$ if and only if $\int_0^T \locutilMV_t(\lMV_t)\d A_t<\infty$, in which case $\lMV$ integrates $R$, 
\begin{equation}\label{eq:Metabief250114}
2\locutilMV(\lMV)\sint A = B^{2\utilMV\circ (\lMV\sint R)}=B^{\lMV\sint R}=B^{\id^2\circ (\lMV\sint R)},
\end{equation}
the maximal  expected quadratic utility with zero initial endowment satisfies 
\[2\uMV(0) = 1-\Exp\bigs(-2\locutilMV(\lMV)\sint A\bigs)_T,\]
and the corresponding optimal investment strategy reads
\begin{equation*}
\aMV(0) = \argmax_{\vartheta \in \AdmMV}\{\E[\utilMV(\vartheta \sint R_T)]\}
=\lMV\Exp\bigs( -\lMV\sint R \bigs)_{-}.
\end{equation*}
Furthermore, the maximal mean--variance utility is then given by
\begin{equation*}
\vMV(x) =x+\dfrac{\Exp\bigs(-2\locutilMV(\lMV)\sint A\bigs)_T^{-1}-1}{2}<\infty,\quad x\in\R,
\end{equation*}
and the optimal mean--variance utility investment strategy equals
\begin{equation*}
\bMV(x) =\argmax_{\vartheta \in \AdmMMV}\, \{V_{\MV}(x+\vartheta \sint R_T)\}= (1+2\vMV(0))\aMV(0). 
\end{equation*}
\end{theorem}
We omit the proof since it is entirely analogous to the proofs of Propositions~\ref{P1} and \ref{P:Cierne240808} and Theorems~\ref{T1}--\ref{T3}, with $\AdmMMV$ in \eqref{eq:250111.MMV} replaced by $\AdmMV$ in \eqref{eq:250111.MV}; Proposition~\ref{P:London241109} replaced by Proposition~\ref{P:Cierne250106}; $\utilMMV'-1=\id\wedge1$ replaced by $\utilMV'-1=\id$; etc.
\subsection{Duality}\label{SS:2.7}
Since we do not assume square integrability of the yield process $R$, the appropriate dual concept is that of a separating measure; cf.~\citet*{bellini.frittelli.02}. 
\begin{definition}\label{D:separating}
We say that $\Q\ll\P$ with $\frac{\d\Q}{\d\P}\in L^2(\P)$ is a separating measure if for all $\vartheta\in\AdmMMV$, one has $\vartheta\sint R_T\in L^1(\Q)$  and $\E^{\Q}[\vartheta\sint R_T]\leq0$.
\end{definition}

By standard duality arguments, a separating measure imposes an upper bound on the maximal monotone mean--variance utility. From the  proof of the next proposition, it follows directly that $ \Var\left(\frac{\d\Q}{\d\P}\right)$ is twice the maximal MMV utility in a complete market with prices determined by $\Q$ for an agent with initial wealth $0$. The proofs of this and other statements of this subsection are found in Appendix~\ref{appx:G}.
\begin{proposition}\label{P:Granada241017}
Any separating measure $\Q$ for $R$ satifies 
$2\vMMV(0)\leq \Var\left(\frac{\d\Q}{\d\P}\right)$.
\end{proposition}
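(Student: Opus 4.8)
The plan is to run a convex-duality argument driven by the concavity of $\utilMMV$ and the fact that it is bounded above by $1/2$, using the density $Z:=\frac{\d\Q}{\d\P}\in L^2(\P)$, $Z\ge0$, $\E[Z]=1$, as the dual variable. The linchpin is a pointwise Fenchel-type bound coming from the concave conjugate of $\utilMMV$. First I would compute $\utilMMV^\ast(z):=\inf_{x\in\R}\{xz-\utilMMV(x)\}$. A short case analysis (splitting at the bliss point $x=1$, where $\utilMMV\equiv\tfrac12$, and minimizing the quadratic $\tfrac12x^2+(z-1)x$ on $\{x\le1\}$) gives $\utilMMV^\ast(z)=-\tfrac12(1-z)^2$ for $z\ge0$ and $\utilMMV^\ast(z)=-\infty$ for $z<0$; the latter is precisely what forces the dual variable to be non-negative, matching $Z\ge0$. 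This yields the key inequality
\begin{equation*}
\utilMMV(x)\le xz+\tfrac12(1-z)^2,\qquad x\in\R,\ z\ge0.
\end{equation*}

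Next, fix $\vartheta\in\AdmMMV$, write $W=\vartheta\sint R_T$, and fix $\eta\in\R$. Applying the bound with $x=W-\eta$ and $z=Z$ gives $\P$-a.s.\ that $\utilMMV(W-\eta)\le (W-\eta)Z+\tfrac12(1-Z)^2$. I would then take $\P$-expectations. The right-hand side is $\P$-integrable: $WZ\in L^1(\P)$ because $W\in L^1(\Q)$ by the separating property, and $(1-Z)^2\in L^1(\P)$ because $Z\in L^2(\P)$. The left-hand side is bounded above by $1/2$, so its expectation is well defined in $[-\infty,1/2]$ and the inequality passes to expectations. Using $\E[WZ]=\E^{\Q}[W]\le0$ (separating measure) together with $\E[Z]=1$, the first term becomes $\E^{\Q}[W]-\eta\le-\eta$, while $\tfrac12\E[(1-Z)^2]=\tfrac12\Var(Z)$ since $\E[Z]=1$. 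Hence $\E[\utilMMV(W-\eta)]+\eta\le\tfrac12\Var(Z)$ for every $\eta\in\R$. Taking the supremum over $\eta$ gives $V_{\MMV}(W)\le\tfrac12\Var(Z)$ by the representation \eqref{eq:MMV2}, and the supremum over $\vartheta\in\AdmMMV$ then yields $\vLRT(0)\le\tfrac12\Var\bigl(\tfrac{\d\Q}{\d\P}\bigr)$.

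I expect the main obstacle to be not the algebra but the integrability bookkeeping, precisely because we do not assume $R\in L^2$ or even $W\in L^1(\P)$. The argument is engineered so that integrability is only ever needed on the dual side: the product $WZ$ is controlled through $W\in L^1(\Q)$, and the boundedness above of $\utilMMV$ legitimizes passing to expectations with a possibly $-\infty$ left-hand side. Finally, I would note that equality in the pointwise bound holds at $x=1-z$, which is the germ of the accompanying remark that $\tfrac12\Var(Z)$ is attained as the maximal MMV utility in the complete market priced by $\Q$.
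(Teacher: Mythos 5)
Your proof is correct, but it takes a genuinely different route from the paper's. The paper argues at the level of value functions: it introduces the statically complete market problem $\uLRT^\Q(x)=\sup\{\E[\utilMMV(x+X)] : X\in L^1(\Q),\ \E^\Q[X]\leq 0\}$, invokes a cited duality result (Biagini--\v{C}ern\'y, Lemma~4.3) to write $\uLRT^\Q(x)=\min_{y\geq0}\{xy+h_\Q(y)\}$ with $h_\Q(y)=\E[\tfrac12(y\tfrac{\d\Q}{\d\P}-1)^2]+\infty\indicator{y<0}$, recovers $h_\Q$ via the Fenchel--Moreau theorem, and then chains $\vLRT(0)\leq\sup_x\{\uLRT^\Q(x)-x\}=h_\Q(1)=\tfrac12\Var(\tfrac{\d\Q}{\d\P})$. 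You instead compute the concave conjugate $\utilMMV^\ast(z)=-\tfrac12(1-z)^2$ for $z\geq0$ explicitly and run a pointwise Fenchel--Young estimate $\utilMMV(W-\eta)\leq(W-\eta)Z+\tfrac12(1-Z)^2$, integrated under $\P$; your integrability bookkeeping is exactly right ($WZ\in L^1(\P)$ from the separating property, $(1-Z)^2\in L^1(\P)$ from $Z\in L^2(\P)$, and boundedness above of $\utilMMV$ makes the left-hand expectation well defined in $[-\infty,\tfrac12]$), and the passage through \eqref{eq:MMV2} is legitimate since admissibility gives $W\in L^0_+-L^2_+$. What your route buys is self-containedness and transparency: no external lemma, no Fenchel--Moreau, and the dual variable's sign constraint falls out of $\utilMMV^\ast(z)=-\infty$ for $z<0$. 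What the paper's route buys is the stronger intermediate statement that $\tfrac12\Var(\tfrac{\d\Q}{\d\P})$ \emph{equals} the maximal MMV utility of the complete market priced by $\Q$ (not merely an upper bound for $\vLRT(0)$), which is reused later in the proof of Theorem~\ref{T5}; your closing observation about equality at $x=1-z$ is indeed the germ of that fact, but you would still need an attainment argument on the primal side to recover it.
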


The marginal utility of the optimal portfolio now yields the separating measure whose density has the smallest variance. 
\begin{theorem}\label{T5}
Assuming $\vMMV(0)<\infty$, the following statements hold.
\begin{enumerate}[(1)]
\item The process $\frac{\Exp( -\left(\id\wedge1\right) \circ(\lMMV\sint R))}{\Exp(B^{-\left(\id\wedge1\right) \circ(\lMMV\sint R)})}$ is a well-defined square-integrable martingale.
\item\label{T5.2} The probability measure $\QMMV$ defined by 
\begin{equation*}
 \frac{\d \QMMV}{\d \P} = \frac{\Exp\bigs( -\left(\id\wedge1\right) \circ(\lMMV\sint R) \bigs)_T}{\Exp\bigs(B^{-\left(\id\wedge1\right) \circ(\lMMV\sint R)}\bigs)_T}= \frac{\utilMMV'(\aMMV(0)\sint R_T)}{\E[\utilMMV'(\aMMV(0)\sint R_T)]},
\end{equation*}
is a separating measure.

\item\label{T5.3} The maximal MMV utility in the complete market given by $\QMMV$ equals $\vMMV(0)$, i.e.,
\[2\vMMV(0)=\Var\bigg(\frac{\d\QMMV}{\d\P}\bigg).\]  
\item The density of $\QMMV$ with respect to $\P$ has the smallest variance among all separating measures and the measure $\QMMV$ is determined uniquely, unlike $\lMMV$.
\item $\QMMV$ preserves the independent increments property (resp., the L\'evy property) of the yield process $R$ and, provided $\Delta R>-1$, of the log return process $\log\Exp(R)$. 
\end{enumerate}
\end{theorem}

We next formulate necessary and sufficient conditions for $\QMMV$ to be a martingale measure for $R$. In fact, the correct concept for $R$ with unbounded jumps is a $\sigma$-martingale measure. If this seems too abstract, Proposition~\ref{P:Metabief250115} shows that in the most frequently encountered models the $\sigma$-martingale  property already yields that $\QMMV$ is a martingale measure for $R$. We also provide necessary and sufficient conditions for $\QMMV$ to be equivalent. Observe that $\QMMV$ being equivalent to $\P$ implies absence of arbitrage over trading strategies in $\AdmMMV$ even if $\QMMV$ fails to be a $\sigma$-martingale measure. 

\begin{theorem}\label{T6}
Assuming $\vMMV(0)<\infty$, the following are equivalent.
\begin{enumerate}[(i)]
\item\label{T6.i} $\QMMV$ is a $\sigma$-martingale measure for $R$.
\item\label{T6.ii} $\id\mkern1mu\utilMMV'(\lMMV\mkern1mu\id)\circ R$ is $\sigma$-special and $b^{\id\mkern1mu\utilMMV'(\lMMV\mkern1mu\id)\circ R}=0$, $A$-almost everywhere on $[0,T]$.
\end{enumerate}
Furthermore, $\QMMV$ is equivalent to $\P$ if and only if $\lMMV\Delta R< 1$ holds  $(\P\otimes A)$-almost everywhere. 
\end{theorem}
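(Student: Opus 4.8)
The plan is to reduce statement (i) to the vanishing of the drift rate of $R$ under $\QLRT$ and then to identify that drift rate, via the $\circ$-calculus, with $b^{\id\,\utilMMV'(\lLRT\id)\circ R}$, the $\sigma$-special hypothesis in (ii) being precisely what makes this drift rate well defined. By definition $\QLRT$ is a $\sigma$-martingale measure for $R$ if and only if $R$ is a $\QLRT$-$\sigma$-martingale, which (in the $\sigma$-special sense of Appendix~\ref{appx:D}) holds if and only if the predictable finite-variation part of $R$ under $\QLRT$ vanishes $A$-almost everywhere. I would read off the Girsanov parameters of the measure change from Theorem~\ref{T5}: writing the density process as $Z=\Exp(\hat M)$, where $\hat M$ is the local-martingale part of $-(\id\wedge1)\circ(\lLRT\sint R)$, the continuous tilt is governed by $\hat M^{c}=-\lLRT\sint R^{c}$, while a jump of $R$ of size $x$ produces $\Delta\hat M=-(\lLRT x\wedge1)$, so the Girsanov jump multiplier is $1+\Delta\hat M=(1-\lLRT x)^{+}=\utilMMV'(\lLRT x)\ge0$. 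Applying the semimartingale Girsanov theorem (\cite{js.03}) then gives the $\QLRT$-characteristics of $R$: the jump measure becomes $\utilMMV'(\lLRT x)F^{R}(\d x)$ and the continuous part contributes a drift correction $-c^{R}\lLRT^{\top}$.

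The crux is the matching of two drift rates. Combining the truncated drift, the continuous correction, and the $\QLRT$-compensator of the large jumps, the $\QLRT$-drift rate of $R$ equals
\[
b^{R[1]}-c^{R}\lLRT^{\top}+\int_{\R^{d}}\bigl(x\,\utilMMV'(\lLRT x)-h(x)\bigr)F^{R}(\d x).
\]
On the other hand, the $\circ$-calculus drift formula (Lemma~\ref{L:240130}) applied to the predictable function $g(x)=x\,\utilMMV'(\lLRT x)$, whose derivative at the origin is the identity and whose second-order term contracts $c^{R}$ into $-c^{R}\lLRT^{\top}$, yields exactly the same expression for $b^{\id\,\utilMMV'(\lLRT\id)\circ R}$. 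Hence the two drift rates coincide; $\id\,\utilMMV'(\lLRT\id)\circ R$ being $\sigma$-special is equivalent to this common drift being well defined in the $\sigma$-special sense, and its vanishing $A$-almost everywhere is equivalent to $R$ being a $\QLRT$-$\sigma$-martingale. This proves (i)$\iff$(ii).

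For the equivalence assertion I would argue through the positivity of $Z_{T}$: one has $\QLRT\sim\P$ if and only if $Z_{T}>0$ $\P$-almost surely. Since $Z=\Exp(\hat M)$ with $\hat M$ having a continuous local-martingale part (whose exponential never vanishes), $Z$ can reach zero only through a jump with $\Delta\hat M=-1$, i.e.\ $\utilMMV'(\lLRT\Delta R)=(1-\lLRT\Delta R)^{+}=0$. Thus $Z_{T}>0$ $\P$-almost surely is equivalent to the absence of such jumps, which is exactly $\lLRT\Delta R<1$ holding $(\P\otimes A)$-almost everywhere.

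The main obstacle is the $\sigma$-level bookkeeping. Because $R$ need not be special (no moment assumptions, unbounded jumps), \enquote{the drift of $R$ under $\QLRT$} must be taken in the $\sigma$-special sense, and I would need the generalized Girsanov drift formula to remain valid without any a priori integrability, matching the $\sigma$-special property of $\id\,\utilMMV'(\lLRT\id)\circ R$ with the $\QLRT$-$\sigma$-integrability of $R$. A secondary technical point is the treatment of fixed times of discontinuity of $A$, where $\hat M$ carries a predictable jump and covariation corrections of the form $[R,B^{\,-(\id\wedge1)\circ(\lLRT\sint R)}]$ appear; these must be accounted for so that the jump-size-$x$ tilting $\utilMMV'(\lLRT x)$ is applied correctly and the criterion $\lLRT\Delta R<1$ captures the fixed-time jumps as well.
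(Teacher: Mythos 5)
Your drift identification is correct and, in substance, it is the same mechanism the paper uses: the paper packages your Girsanov computation into the single covariation identity $R^{(i)}+[R^{(i)},Y]=\id_i\,\utilMMV'(\lLRT\,\id)\circ R$ with $Y=-(\id\wedge1)\circ(\lLRT\sint R)$, and then applies Lemma~\ref{L:Metabief250115}. That lemma already contains the fixed-time corrections you flag at the end: the $\QLRT$-drift rate equals $b^{\id\,\utilMMV'(\lLRT\,\id)\circ R}/(1+\Delta B^{Y})$ on the stochastic interval $\{M_->0\}$, and since $\Delta B^{Y}>-1$ (established in the proof of Theorem~\ref{T3}) the extra factor is strictly positive and never affects whether the drift vanishes; likewise the jump multiplier at a fixed time of discontinuity is $(1+\Delta U)/(1+\Delta B^{Y})$, which vanishes iff $1+\Delta U=0$, so your proof of the equivalence criterion $\QLRT\sim\P\iff\lLRT\Delta R<1$ also goes through unchanged. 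So your two declared ``obstacles'' are benign.

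The genuine gap is elsewhere, in the direction \ref{T6.i} $\Rightarrow$ \ref{T6.ii}. The Girsanov dictionary (yours via \cite{js.03}, or the paper's via Lemma~\ref{L:Metabief250115}) is valid only on the set where the density process is positive, equivalently up to $\QLRT$-null sets, because $\QLRT$ is merely absolutely continuous and can genuinely fail to be equivalent to $\P$ (Example~\ref{E:3}). Thus from \ref{T6.i} you obtain that $\id\,\utilMMV'(\lLRT\,\id)\circ R$ is $\P$-$\sigma$-special with vanishing drift only on the part of $\Omega\times[0,T]$ where the density process is positive, whereas \ref{T6.ii} asserts this $A$-a.e.\ on all of $[0,T]$; your write-up treats these as the same thing without comment. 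The bridge — and the second half of the paper's proof — is the independent-increments structure: $\id\,\utilMMV'(\lLRT\,\id)\circ R$ is a variation of a process with independent increments along a deterministic integrand, hence has independent increments under $\P$ and deterministic characteristics, and since $\P[\d\QLRT/\d\P>0]>0$ (by \cite[Lemma~4.2]{cerny.ruf.23.spa}), a deterministic drift rate and a deterministic $\sigma$-special integrability criterion that hold on a positive-probability set of paths must hold $A$-a.e.\ outright. Without invoking determinism of the characteristics (or some equivalent argument), your proof establishes \ref{T6.ii} only ``on the support of $\QLRT$,'' which is strictly weaker than the statement of the theorem.
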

\noindent Example~\ref{E:4} shows that $\QMMV$ may be equivalent to $\P$ without being a $\sigma$-martingale measure.

\begin{proposition}\label{P:Metabief250115}
Suppose that $\QMMV$ is a $\sigma$-martingale measure for $R$. Then $\vartheta\sint R$ is a $\QMMV$-martingale for every deterministic $\vartheta\in\AdmMMV$. In particular, if the jumps of $R$ are bounded from below, then $R$ is a $\QMMV$-martingale and $\QMMV$ is a martingale measure for $R$.
\end{proposition}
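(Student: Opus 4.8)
\emph{The plan} is to handle the two assertions in turn, in each case reducing the martingale property to the vanishing of a \emph{drift} by combining the $\sigma$-martingale hypothesis with the independent-increments structure that $\QLRT$ inherits from $R$ (Theorem~\ref{T5}, which shows that $\QLRT$ preserves the independent-increments property).

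For the first assertion, fix a deterministic $\vartheta\in\AdmMMV$ and put $X=\vartheta\sint R$. First I would record that $X$ is a $\QLRT$-$\sigma$-martingale: by hypothesis $R$ is a $\QLRT$-$\sigma$-martingale, $\vartheta$ integrates $R$ also under $\QLRT\ll\P$, and a stochastic integral with respect to a $\sigma$-martingale is again a $\sigma$-martingale. Next, admissibility gives $\Theta:=\sup_{t\in[0,T]}(\vartheta\sint R_t)^-\in L^2(\P)$, and since $\frac{\d\QLRT}{\d\P}\in L^2(\P)$, Cauchy--Schwarz yields $\Theta\in L^1(\QLRT)$. Thus $X_t\ge-\Theta$ for all $t$ with $\Theta\in L^1(\QLRT)$, so by the Ansel--Stricker characterization a $\sigma$-martingale dominated from below by a single integrable random variable is a $\QLRT$-supermartingale. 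In particular $X_t\in L^1(\QLRT)$ for every $t$, and $m(t):=\E^{\QLRT}[X_t]$ is finite, non-increasing, with $m(0)=0$.

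Because $\vartheta$ is deterministic and $\QLRT$ preserves independent increments, $X$ has $\QLRT$-independent increments, and a one-line conditioning argument shows that $X-m$ is a $\QLRT$-martingale: for $s\le t$ one has $\E^{\QLRT}[X_t-X_s\mid\F_s]=\E^{\QLRT}[X_t-X_s]=m(t)-m(s)$. Hence $X=(X-m)+m$ exhibits $X$ as a special semimartingale whose predictable finite-variation (drift) part is the deterministic, non-increasing function $m$. Now I invoke the basic fact that a $\sigma$-martingale that is special is a local martingale, so that its canonical drift vanishes; by uniqueness of the canonical decomposition this forces $m\equiv0$, and therefore $X=X-m$ is a genuine $\QLRT$-martingale. \emph{The main obstacle} is precisely this passage from $\sigma$-martingale to true martingale: it is the combination ``bounded below $\Rightarrow$ special'' with ``special $\sigma$-martingale $\Rightarrow$ zero drift'' that excludes a strictly decreasing mean, and it is the independent-increments structure that converts the vanishing of the drift into constancy (hence nullity) of the mean.

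For the second assertion I would run the same mechanism on $R$ itself, the only additional task being to verify that $R$ is a $\QLRT$-local martingale, equivalently $\QLRT$-special, equivalently that the large jumps of $R$ are $\QLRT$-integrable. The hypothesis that the jumps of $R$ are bounded from below disposes of the large \emph{negative} jumps, whose $\QLRT$-compensator is controlled by the finite $\QLRT$-mass of $\{|x|>1\}$ times the lower bound. The large \emph{positive} jumps are where the specific form of $\QLRT$ enters: by Theorem~\ref{T5} the density tilts the jump measure by $\utilMMV'(\lLRT x)=(1-\lLRT x)^+$, which annihilates jumps with $\lLRT x\ge1$ and, together with the first-order optimality condition of Proposition~\ref{P:Cierne240808} and the finiteness of $\vLRT(0)$, yields $\QLRT$-integrability of the remaining large jumps; establishing this integrability is the crux of the second part. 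Once the large jumps are $\QLRT$-integrable, $R$ is $\QLRT$-special, hence a $\QLRT$-local martingale, and its values lie in $L^1(\QLRT)$; the drift-constancy argument of the previous paragraph, applied to each component $R^{(i)}$ (which inherits $\QLRT$-independent increments), then shows that $R$ is a true $\QLRT$-martingale, so that $\QLRT$ is a martingale measure for $R$.
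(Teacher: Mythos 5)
Your treatment of the \emph{first} assertion is correct, but it follows a different route from the paper's. The paper reads off integrability of $\vartheta\sint R_T$ under $\QLRT$ from the separating property (Theorem~\ref{T5}\ref{T5.2} together with Definition~\ref{D:separating}) and then invokes Theorem~\ref{T:200603}, which packages exactly the argument you write out by hand: for a process with independent increments, $X-m$ with $m(t)=\E^{\QLRT}[X_t]$ is a martingale, and the special-$\sigma$-martingale property forces $m\equiv 0$. Your substitute for the integrability input --- $\sup_{t\in[0,T]}(\vartheta\sint R_t)^-\in L^2(\P)$, Cauchy--Schwarz against the $L^2(\P)$ density, then Ansel--Stricker to get a $\QLRT$-supermartingale --- is valid, and it only uses square integrability of $\d\QLRT/\d\P$ rather than the full separating property; the price is that you import two external facts (Ansel--Stricker for $\sigma$-martingales, and ``special $\sigma$-martingale $\Rightarrow$ local martingale''), where the paper stays inside its own Theorem~\ref{T:200603}.

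The \emph{second} assertion is where you have a genuine gap, and you flag it yourself: the $\QLRT$-integrability of the large positive jumps is asserted, never proved, and the ingredients you name would not deliver it. The tilt $(1-\lLRT x)^+$ kills jumps on $\{\lLRT x\ge 1\}$, but a large jump of a single component, $x_i>1$, need not lie in that set (take $\lLRT$ with a small or zero $i$-th component, or $\lLRT=0$ on part of $[0,T]$, where the tilt gives no decay at all), and on $\{x_i>1,\,\lLRT x<1\}$ neither the first-order condition of Proposition~\ref{P:Cierne240808} nor finiteness of $\vLRT(0)$ controls the tilted integrand. If one insists on your route, what rescues it is the $\sigma$-martingale hypothesis itself, not the form of the density: writing $(b,c,F)$ for the deterministic differential characteristics of $R$ under $\QLRT$ relative to $A$, the zero-drift condition gives componentwise $\int x_i\indicator{x_i>1}F(\d x)=-b_i+\int(-x_i)\indicator{x_i<-1}F(\d x)$, and both terms on the right integrate $\d A$ (truncated drifts always have finite variation, and the negative large-jump term is bounded by the jump lower bound times the finite intensity of jumps exceeding $1$ in modulus); this, not the tilt, yields $\QLRT$-specialness. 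A second slip: from ``$\QLRT$-special, hence a $\QLRT$-local martingale'' you conclude that the marginals lie in $L^1(\QLRT)$ --- a local martingale need not have integrable marginals, and here you again need the independent-increments structure (Theorem~\ref{T:200603}, or the Shiryaev--Cherny result cited before it).

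All of this is unnecessary, and the paper's proof shows why: the second assertion is a corollary of the first, which you have already established for \emph{every} deterministic $\vartheta\in\AdmMMV$. If the jumps of $R^{(i)}$ are bounded from below, Lemma~\ref{L:Cierne250106}, applied under $\P$, yields $\sup_{t\in[0,T]}(R^{(i)}_t)^-\in L^2(\P)$, so $R^{(i)}=1\sint R^{(i)}$ is the wealth of a deterministic admissible strategy, and the first assertion applies verbatim. No analysis of the jump measure under $\QLRT$ is required.
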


 Recall the optimal local expected quadratic utility strategy $\lMV$ from Theorem~\ref{T4}. The next theorem summarizes results analogous to Proposition~\ref{P:Granada241017} and Theorems~\ref{T5} and \ref{T6} for the classical quadratic utility. For a signed measure $\Q$ and  $W\in L^1(\Q) = L^1(\Q^+)\cap L^1(\Q^-)$, we write $\E^{\Q}[W] = \int W\d\Q^+ -\int W\d\Q^- $. Here $\Q^+$ and $\Q^-$ denote the positive and negative variations of $\Q$, respectively; see \citet[VI.6.6]{rudin.87}.
\begin{theorem}\label{T7}
Suppose $\vMV(0)<\infty$. Then $\Exp\bigs(B^{-\lMV\sint R}\bigs)>0$ and the (signed) measure $\QMV$, given by 
\begin{equation}\label{eq:QMV}
 \frac{\d \QMV}{\d \P} = \frac{\Exp\bigs( -\lMV\sint R \bigs)_T}{\Exp\bigs(B^{-\lMV\sint R}\bigs)_T}
=\frac{\utilMV'(\aMV(0)\sint R_T)}{\E[\utilMV'(\aMV(0)\sint R_T)]},
\end{equation}
is a separating measure over $\AdmMV$, its density has the smallest variance among all separating (signed) measures over $\AdmMV$, and satisfies $2\vMV(0)=\Var(\frac{\d\QMV}{\d\P})$. Furthermore, if $\QMV$ is a probability measure, the following are equivalent.
\begin{enumerate}[(i)]
\item $\QMV$ is a $\sigma$-martingale measure for $R$.
\item $\id\mkern1mu\utilMV'(\lMV\mkern1mu\id)\circ R$ is $\sigma$-special and $b^{\id\mkern1mu\utilMV'(\lMV\mkern1mu\id)\circ R}=0$, $A$-almost everywhere on $[0,T]$.
\end{enumerate}
\end{theorem}
 We shall call $\QMV$ the \emph{variance-optimal separating measure} since for square-integrable $R$, the measure $\QMV$ coincides with the variance-optimal $\AdmMV$-martingale measure of \citet{schweizer.96}. We next characterize the circumstances under which the simpler MV optimizer already yields an optimal MMV portfolio.
\begin{theorem}\label{T8}
If $\vMV(0)<\infty$ and $R$ is $\sigma$-locally square integrable (i.e., $\id^2\circ R$ is $\sigma$-special),  then the following are equivalent. 
\begin{enumerate}[(i)]
\item\label{T8.i} $\lMV\Delta R\leq 1$ holds  $(\P\otimes A)$-almost everywhere.
\item\label{T8.ii}  $\lMV$ is an optimizer of the local expected monotone quadratic utility $\locutil$ in \eqref{eq:locutil}.
\item\label{T8.iii}   $\vMMV(0)=\vMV(0)$.
\item\label{T8.iv} $\vMMV(0)$ is finite and $\QMMV=\QMV$.
\item\label{T8.v} $\QMV$ is a probability measure.
\end{enumerate}
Furthermore, if any of these conditions holds, then $\QMV$ is a $\sigma$-martingale measure for $R$.
\end{theorem}

Example~\ref{E:3} shows the theorem is false without the assumption of $\sigma$-local square integrability of $R$. 

\section{Economic implications}\label{S:3}
This section is concerned with economic insights contained in Theorems~\ref{T1}--\ref{T3}, \ref{T4}, \ref{T5}, and \ref{T6}. For ease of exposition, we begin with the mean--variance analysis in Theorem~\ref{T4}.
\subsection{Classical mean--variance analysis}\label{SS:3.1}
Consider first a single-period model with a single return $R\in L^2$. Simple computations yield
\begin{align}
2\uMV(0) ={}& 1-\min_{\alpha\in\R}\{\E[(1-\alpha R)^2]\} = \frac{(\E[R])^2}{\E[R^2]} = \HR^2_R,\label{eq:Metabief250116}\\
2\vMV(0) ={}& \max_{\beta\in\R}\{2\E[\beta R]-\Var(\beta R)\} = \frac{(\E[R])^2}{\Var(R)}=\SR^2_R.\notag
\end{align}
Here the Hansen ratio $\HR_R$ is the ratio of mean to $L^2$ norm, while the Sharpe ratio $\SR_R$ is the ratio of mean to standard deviation, with the convention $0/0=0$. Denoting by $\aMMV$ the maximizer in \eqref{eq:Metabief250116}, note that there are several equivalent expressions for the squared Hansen ratio
\[\HR^2_R =  \E[\aMMV R] = \E[(\aMMV R)^2] = \E[2\utilMV(\aMMV R)].\]
We have encountered this phenomenon previously in \eqref{eq:Metabief250114}.

Observe that, regardless of optimality, for a random variable $W\in L^2$ one also has 
\begin{equation}\label{eq:SRHR}
 1+\SR_W^2 = \frac{1}{1-\HR^2_W}.
\end{equation}
Hence, (1) a portfolio with the highest Hansen ratio also has the highest Sharpe ratio and vice versa; (2) in the two optimization problems above, one has 
\[ 1+2\vMV(0) = \frac{1}{1-2\uMV(0)}. \]

Consider now a continuous-time model with time horizon $T$ and an arbitrary but fixed admissible strategy $\vartheta\in\AdmMV$ with terminal wealth $\vartheta\sint R_T$. By restricting ourselves to trading only multiples of $\vartheta\sint R_T$, we are now in the one-period setting considered at the beginning of this subsection. This yields
\begin{equation}\label{eq:250113}
\E[2\utilMV(\vartheta\sint R_T)]\leq \max_{\alpha\in\R}\E[2\utilMV(\alpha\vartheta\sint R_T)] = \HR^2_{\vartheta\sint R_T},
\end{equation}
i.e., twice the expected quadratic utility of a given strategy provides a lower bound on the squared Hansen ratio of the strategy.

Taking this reasoning one step further in continuous-time models, we conclude that 
\[ 2\uMV(0)=\max_{\vartheta\in\AdmMV}\E[2\utilMV(\vartheta\sint R_T)] = \HR^2_{\aMV(0)\sint R_T}\] 
yields the highest squared Hansen ratio among all admissible portfolios with zero initial cost. The portfolio $\aMV(0)$ is therefore MV efficient among all zero-cost portfolios.

This reasoning can now be applied locally in time. The expression for maximal expected quadratic utility in Theorem~\ref{T4} translates to 
\begin{equation}\label{eq:Cierne250102a}
1-2\uMV(0) = \Exp(-\widetilde K)_T,
\end{equation}
where we may interpret $\widetilde K=B^{2\utilMV\circ (\lMV\sint R)}$ as the cumulative maximal local squared Hansen ratio. Thus \eqref{eq:Cierne250102a} says that 
\begin{quote}
\emph{the maximal global squared Hansen ratio $2\uMV(0)$ equals the nominal discount rate obtained by continuously discounting at the maximal local squared Hansen ratio rate $\d \tilde K$.}
\end{quote}

In view of $\SR^2 = \frac{\HR^2}{1-\HR^2}$, we may interpret $\widehat K = \frac{1}{1- \Delta \widetilde K}\sint \widetilde K$ as the cumulative maximal local squared Sharpe ratio. By the Yor formula, one has $\Exp(-\widetilde K)\Exp(\widehat K)=1$, which gives
\[1+2\vMV(0) = \frac{1}{1-2\uMV(0)} = \frac{1}{\Exp(-\widetilde K)_T} = \Exp(\widehat K). \]
 This yields that 
\begin{quote}
\emph{the maximal global squared Sharpe ratio $2\vMV(0)$ equals the nominal rate of return obtained by continuously compounding at the maximal local squared Sharpe ratio rate $\d\widehat K$.}
\end{quote}

In models without fixed jump times, the local squared Hansen ratio and the local squared Sharpe ratio coincide, i.e., $\widetilde K = \widehat K$.

\subsection{Monotone mean--variance analysis}
\emph{Mutatis mutandis}, everything we have seen in the previous subsection translates to monotone mean--variance analysis. For a random variable $W\in L^0_+- L^2_+$, define the monotone Hansen ratio $\MHR_W$ by
\[\MHR_W = \sup_{Y\geq0}\HR_{W-Y}\]
and the monotone Sharpe ratio by
\[\MSR_W = \sup_{Y\geq0}\SR_{W-Y}.\]
Here and below, the suprema are taken over $Y\geq0$ such that $W-Y\in L^2$.  Since the function $(-1,1)\ni x\mapsto x/\sqrt{1-x^2}$ is monotone increasing, the link between the two quantities is provided by
\begin{align*}
\MSR_W ={}& \sup_{Y\geq0}\SR_{W-Y} = \sup_{Y\geq0}\frac{\HR_{W-Y}}{\sqrt{1-\HR^2_{W-Y}}} = \frac{\MHR_{W}}{\sqrt{1-\MHR^2_{W}}},
\end{align*}
which mirrors \eqref{eq:SRHR} in that 
\begin{equation}\label{eq:MSRMHR} 
1+\MSR_W^2 = \frac{1}{1-\MHR^2_W}.
\end{equation}
However, unlike the classical MV analysis, for $W\in L^2$ we cannot expect $\MSR_{-W}=-\MSR_W$ to hold.  

By the arguments in the previous subsection, for $\E[W]\in(0,\infty]$ with $W^-\neq 0$ we have
\begin{align}\label{eq:MHR2}
\begin{split}
\MHR^2_W = {}&\sup_{Y\geq0}\max_{\alpha\geq 0}\E[2\utilMV(\alpha(W-Y))] \\
         = {}&\max_{\alpha\geq 0}\sup_{Y\geq0}\E[2\utilMV(\alpha(W-Y))] \\
         = {}&\max_{\alpha\geq 0}\E[2\utilMMV(\alpha W)].
\end{split}				
\end{align}
On the second line, the optimal $Y$ for given $\alpha>0$ is one that satisfies $\alpha Y = (\alpha W -1)\vee 0$, indeed, $Y = (W-1/\alpha)\vee 0$; see also Figure~\ref{F:1}.

Observe that for maximization of $\MHR_W$ over some subset of $L^0_+- L^2_+$ containing $0$, one can disregard random variables $W$ with negative mean since their MHR is non-positive. For such $W\in L^2$ one does not obtain a nice formula for MHR like \eqref{eq:MHR2} but it is not difficult to see that $\HR(W-n\indicator{A})\to -\sqrt{\P[A]}$ as $n\uparrow\infty$, hence for diffuse $W\in L^2$ with negative mean one has $\MHR_W=0$.

We conclude from \eqref{eq:MHR2} that $\aMMV(0)$ has the highest monotone Hansen ratio, hence also the highest monotone Sharpe ratio thanks to conversion \eqref{eq:MSRMHR}, among zero-cost strategies in $\AdmMMV$. The optimal amount to set aside is $Y = (\aMMV(0)\sint R_T -1)\vee 0$, thus by \eqref{eq:250113} we also have 
\[2\uMMV(0) = \HR^2_{(\aMMV(0)\sint R_T)\wedge1} = \E[2\utilMV((\aMMV(0)\sint R_T)\wedge1)].\]

Consider now an arbitrary but fixed admissible strategy $\vartheta\in\AdmMMV$ with terminal wealth $\vartheta\sint R_T$ such that $\E[\vartheta\sint R_T]\in(0,\infty]$ and $(\vartheta\sint R_T)^-\neq0$. By restricting ourselves to trading only multiples of $\vartheta\sint R_T$, we obtain
\[\E[2\utilMMV(\vartheta\sint R_T)]\leq \max_{\alpha\in\R}\E[2\utilMMV(\alpha\vartheta\sint R_T)] = \MHR^2_{\vartheta\sint R_T},\]
i.e., twice the expected monotone quadratic utility of a given strategy with positive (or infinite) mean provides a lower bound on the squared monotone Hansen ratio of the strategy. These observations are illustrated in Example~\ref{E:2}.

Finally, we briefly summarise the link between local and global utility in the monotone case, which too is completely analogous to the previous subsection. The expression for maximal expected monotone quadratic utility from Proposition~\ref{P1} and Theorem~\ref{T2} translates to 
\begin{equation*}
1-2\uMMV(0) = \Exp(-\widetilde K^{\MMV})_T,
\end{equation*}
where we may interpret $\widetilde K^{\MMV}=B^{2\utilMMV\circ (\lMMV\sint R)}$ as the cumulative maximal local squared MHR. 
Using the conversion between the (monotone) Hansen and the (monotone) Sharpe ratio, we also obtain
\[1+2\vMMV(0) = \Exp(\widehat K^{\MMV})_T, \]
where $\widehat K^{\MMV} = \frac{1}{1- \Delta \widetilde K^{\MMV}}\sint \widetilde K^{\MMV}$ is the cumulative local squared monotone Sharpe ratio.

\section{Examples}\label{S:4}
Example~\ref{E:1} illustrates that the locally optimal strategy $\lMMV$ may not be unique even if there is no redundancy among the traded assets. Example~\ref{E:2} contrasts classical mean--variance and monotone mean--variance optimal portfolios in a continuous-time setting. Example~\ref{E:3} shows that the equivalences in Theorem~\ref{T8} fail when $R$ is not ($\sigma$-locally) square integrable. Example~\ref{E:4} illustrates that $\QMMV$ can be equivalent to $\P$ without being a $\sigma$-martingale measure for $R$, again highlighting the importance of the ($\sigma$-local) square integrability assumption in Theorem~\ref{T8}.  Example~\ref{E:5} constructs a model with square-integrable yield $R$, where the maximal Sharpe ratio available by trading is finite but the maximal monotone Sharpe ratio is infinite. Example~\ref{E:6} exhibits a complete market model with an equivalent martingale measure, where both the MV utility and the MMV utility are infinite. 

\begin{example}\label{E:1}
 Consider a one-period model with four atoms and two assets, whose yield distribution is given by
\begin{center}
\begin{tabular}{ccccc}
\hline & $\omega _{1}$ & $\omega _{2}$ & $\omega _{3}$ & $\omega _{4}$ \\ 
$\P[\{\omega \}]$ & 0.2 & 0.6 & 0.1 & 0.1 \\ 
$\Delta R^{1}(\omega )$ & -0.5 & 0.5 & 1 & 1.2 \\ 
$\Delta R^{2}(\omega )$ & -0.5 & 0.5 & 1.2 & 1\\[0.1ex]
\hline
\end{tabular}\smallskip
\end{center}
Choosing $q = (0.62,0.18,0.1,0.1)$ yields an equivalent martingale measure $\Q$ for $R$, hence the model is free of arbitrage. Observe that in this model, all separating measures are martingale measures for $R$.
One easily verifies that $\hat q = (0.5,0.5,0,0)$ defines a martingale measure $\QMMV$ whose density satisfies
$\Var(\frac{\d \QMMV}{\d\P})=\frac{2}{3}$.
The duality constraint in Proposition~\ref{P:Granada241017} and \eqref{eq:vMMVx} yield that twice the maximal expected quadratic utility cannot exceed 
$1-(1+\frac{2}{3})^{-1} = 0.4$. On the other hand, one has 
$$\E[2\utilMMV(R^1)] = \E[2\utilMMV(R^2)] = 0.4,$$ 
hence we have two distinct optimal strategies, each investing one dollar in one of the assets, respectively, and borrowing one dollar at the zero risk-free rate. Any convex combination of these two strategies is again optimal. The wealth of all optimal strategies coincides below the bliss point of the monotone quadratic utility function $\utilMMV$. 

In this example, 2/3 is the maximal squared monotone Sharpe ratio among zero-cost strategies and 0.4 (twice the maximal expected utility) is the maximal squared monotone Hansen ratio among zero-cost strategies.   
\end{example}

The calculations in the next three examples are easily implemented using the drift formula \eqref{eq:Cierne251205}, bearing in mind that 
$g\circ (\lambda\sint R) = g(\lambda\mkern2mu\id)\circ R = g(\lambda(\e^\id-1))\circ X$
by composition (Proposition~\ref{P:composition}) for all suitably regular $g$ and all $\lambda\in L(R)$. In the examples, we specify the numerical value of $b^{X[0]}$ with the conversion $b^{X[1]}=b^{X[0]}+\int_{[-1,1]}xF^X(\d x)$. The drift formula \eqref{eq:Cierne251205} in one dimension then simplifies to 
\[
b^{\xi\circ X} = \xi'(0)b^{X[0]}+\frac{1}{2}\xi''(0)c^{X} + \int_{\R} \xi(x)F^X(\d x).
\]

\begin{example}\label{E:2}
Let $X$ be a L\'evy process with characteristics $(b^{X[0]}=\mu,\sigma^2,F^X= N(0,\gamma^2))$ relative to the activity process $A_t=t$, where $N(\cdot,\cdot)$ denotes the cumulative normal distribution with a given mean and variance, respectively. We will use the specific numerical values  $\mu=0.2$, $\sigma=0.2$, and $\gamma = 0.1$, which are broadly consistent with the empirical distribution of the logarithmic returns of a well-performing stock. We shall also fix a one-year time horizon, $T=1$.

Consider a time-homogenous model with a single risky asset, whose yield is given by $R = (\e^{\id}-1)\circ X$. Theorem~\ref{T4} yields 
\begin{align*}
\lMV ={}& \frac{b^{\id\circ R}}{b^{\id^2\circ\mkern-1mu R}} = \frac{b^{(\e^{\id}-1)\circ\mkern-1mu X}}{b^{(\e^{\id}-1)^2\circ\mkern-1mu X}} 
= \frac{\mu+\sigma^2/2+\e^{\gamma^2/2}-1}{\sigma^2+\e^{2\gamma^2}-2\e^{\gamma^2/2}+1}
\approx 4.4844, \\
b^{2\utilMV\circ (\lMV\sint R)} ={}& \frac{(b^{\id\circ R})^2}{b^{\id^2\circ\mkern-1mu R}} = 
 \frac{(\mu+\sigma^2/2+\e^{\gamma^2/2}-1)^2}{\sigma^2+\e^{2\gamma^2}-2\e^{\gamma^2/2}+1} \approx 1.0091,\\
2\vMV(0) ={}&\exp\bigs(b^{2\utilMV\circ (\lMV\sint R)}T\bigs)-1\approx 1.7430.
\end{align*}
These calculations also follow from the classical mean--variance utility theory; see, for example, \citet*[Proposition~3.6, Lemma~3.7, Corollary~3.20, and Proposition~3.28]{cerny.kallsen.07}. The optimal strategy for the expected quadratic utility reads
\begin{equation*}
\aMV(0) = \lMV\Exp\bigs( -\lMV\sint R \bigs)_{-} 
\end{equation*}
with wealth
\begin{equation}\label{eq:Granada241119}
\aMV(0)\sint R_T = 1-\Exp\bigs( -\lMV\sint R \bigs)_T. 
\end{equation}

The quadratic utility $\utilMV$ attains its maximum at 1. Setting aside all the wealth above 1 therefore increases the expected quadratic utility and hence the Hansen ratio of the resulting wealth
\begin{equation}\label{eq:Granada241120}
(\aMV(0)\sint R_T)\wedge 1 = 1-\Exp\bigs( -\lMV\sint R \bigs)_T\vee0.
\end{equation}
To explore the change from \eqref{eq:Granada241119} to \eqref{eq:Granada241120}, observe that the stochastic exponential $\Exp( -\lMV\sint R )$ switches sign with intensity $\theta^{\MV} = \int_{\R} \indicator{ \lMV(\e^x-1)>1}F^X(\d x)\approx 2.2057\%$. Using the properties of the Poisson distribution, the probability of switching exactly once is $\exp({-\theta^{\MV} T})\theta^{\MV}T\approx 2.1575\%$ while the probability of switching an odd number of times is $\P[\Exp\bigs( -\lMV\sint R \bigs)<0]=\frac{1-\exp({-2\theta^{\MV}})}{2} \approx 2.1577\%$. Thus the first switch captures essentially all the probability that the wealth $\aMV(0)\sint R_T$ will finish above the bliss point.
 
We shall now compute the first two moments of the truncated wealth $(\aMV(0)\sint R_T)\wedge 1$ with the aim of subsequently obtaining its Hansen and Sharpe ratio, respectively. For $p\in\{0,1,2\}$, let $\phi^{\MV}_+(p) =\E[|\Exp(-\lMV\sint R)_T|^p\indicator{\{\Exp(-\lMV\sint R)_T>0\}}]$ and $\phi^{\MV}_-(p) =\E[|\Exp(-\lMV\sint R)_T|^p\indicator{\{\Exp(-\lMV\sint R)_T<0\}}]$. Using the Mellin transform technique for signed stochastic exponentials developed in \citet*[Example~4.4]{cerny.ruf.23.spa}, we have
\begin{align*}
2\phi^{\MV}_{\pm}(p) ={}& \exp\big(b^{(|1-\id|^p\indicator{\id\neq1} -1)\circ (\lMV\sint R)}T\big)
\pm\exp\big(b^{(|1-\id|^p(\indicator{\id<1}-\indicator{\id>1})-1)\circ (\lMV\sint R)}T\big). 
\end{align*}

Observe that $\P[\Exp(-\lMV\sint R)_T<0] = \phi^{\MV}_-(0)$. We further have 
\begin{align*}
\E[(\aMV(0)\sint R_T)\wedge 1 ] ={}& 1-\phi^{\MV}_+(1) \approx 0.63373,\\
\E[((\aMV(0)\sint R_T)\wedge 1)^2 ] ={}& 1-2\phi^{\MV}_+(1)+\phi^{\MV}_+(2) \approx 0.63136,\\
\E[\aMV(0)\sint R_T] - \E[(\aMV(0)\sint R_T)\wedge 1 ] ={}&  \phi^{\MV}_-(1)\approx 0.0017,\\
\E[(\aMV(0)\sint R_T)^2] - \E[((\aMV(0)\sint R_T)\wedge 1)^2 ] ={}&  2\phi^{\MV}_-(1)+\phi^{\MV}_-(2)\approx 0.0041.
\end{align*}
The mean and the second moment of $\aMV(0)\sint R_T$ are equal to $1-\exp\bigs(-b^{2\utilMV\circ (\lMV\sint R)}T\bigs) \approx 0.6354$. The ``setting aside'' of gains above $1$ produces an increase in the squared Hansen ratio from $\HR^2_{\aMV(0)\sint R_T} =0.6354$ to $\HR^2_{(\aMV(0)\sint R_T)\wedge1} = 0.6361$. This difference, here $6.78\times 10^{-4}$,  is roughly equal to $\phi^{\MV}_-(2)\approx 6.69\times 10^{-4}$. The squared Sharpe ratio is related to the squared Hansen ratio via $\SR^2 = \frac{\HR^2}{1-\HR^2}$. This quantity increases from $\exp\bigs(b^{2\utilMV\circ (\lMV\sint R)}T\bigs)-1 \approx 1.7430$ to approximately $1.7481$. The difference, here approximately $5.11\times 10^{-3}$, is closely matched by $(1+2\vMV(0))^2\phi^{\MV}_-(2)\approx 5.03\times 10^{-3}$.

Let us now consider the optimal expected monotone quadratic utility strategy 
$$\aMMV(0) = \lMMV\Exp\bigs( -\left(\id\wedge1\right) \circ (\lMMV\sint R) \bigs)_{-}.$$ 
Numerical optimization of the local expected monotone quadratic utility yields $\lMMV \approx 4.5143$, which is slightly higher than $\lMV\approx 4.5130$. The strategy $\aMMV(0)$ stops investing in the risky asset once its wealth reaches or exceeds the bliss point 1.  The arrival intensity of jumps that satisfy $\lMMV\Delta R\geq1$ is $\theta^{\MMV} = \int_{\R} \indicator{ \lMMV(\e^x-1)\geq1}F^X(\d x)\approx 2.2699\%$. The probability of finishing at or above  the bliss point is the complement of no such jump occurring, i.e., $\P[\aMMV(0)\sint R_T\geq 1] = 1-\exp(-\theta^{\MMV}T)\approx2.244\%$. Thus the optimal expected monotone quadratic utility investment slightly increases the probability of wealth being above the bliss point, from $2.158\%$ for the classical expected quadratic utility strategy. 

The squared monotone Sharpe ratio of the optimal strategy $\aMMV(0)$, i.e., the squared Sharpe ratio of the truncated wealth $(\aMMV(0)\sint R_T)\wedge 1$, reads
\[
2\vMMV(0) ={}\exp\bigs(b^{2\utilMMV\circ (\lMMV\sint R)}T\bigs)-1\approx 1.7482.
\] 
We conclude that $(\aMMV(0)\sint R_T)\wedge 1$ outperforms $(\aMV(0)\sint R_T)\wedge 1$ very marginally in terms of the squared Sharpe ratio.
\end{example}

\begin{example}\label{E:3}
Fix $T=1$. Consider an $\R$-valued  L\'evy process $X$ with characteristics 
$$\left(b^{X[0]}=0,0,F^X(\d x)= (\e^{4x}\indicator{x<0}+\e^{-x}\indicator{x>0})\d x\right)$$
and let $R=(\e^{\id}-1)\circ X$. Then $\E[R_T]=\infty$ and  $\locutilMV(\lambda)=-\infty\indicator{\lambda\neq 0}$, hence we have $\lMV=0$ and the variance-optimal measure $\QMV =\P$ is an equivalent separating measure over $\AdmMV$ but not a $\sigma$-martingale measure (indeed, $R$ is not even $\sigma$-special). 

On the other hand, we obtain numerically $\lMMV\approx 1.108$ with $b^{\id\mkern1mu\utilMMV'(\lMMV\mkern1mu\id)\circ R}=0$. The arrival intensity of jumps that satisfy $\lMMV\Delta R\geq1$ reads 
\[\theta = \int_{\R} \indicator{ \lMMV(\e^x-1)\geq1}F^X(\d x) = \frac{\lMMV}{1+\lMMV}\approx 0.5256.\]
Thus the minimal-variance separating measure $\QMMV$ from Theorem~\ref{T5}\ref{T5.2} is a non-equivalent $\sigma$-martingale measure for $R$ with 
\[\P\bigg[\frac{\d\QMMV}{\d\P}=0\bigg] = \P[\Exp(-(\id\wedge1)\circ (\lMMV\sint R))_T=0] = 1-\e^{-\theta T}\approx 40.88\%.\] 
In this example, the conditions \ref{T8.i} and \ref{T8.v} of Theorem~\ref{T8} are satisfied but \ref{T8.ii}--\ref{T8.iv} fail.
\end{example}

\begin{example}\label{E:4}
Fix $T=1$. Consider an $\R$-valued  L\'evy process $X$ with characteristics 
$$\left(b^{X[0]}=0,0,F^X(\d x)= (10\e^{x}\indicator{x<0}+3\e^{-3x/2}\indicator{x>0})\d x\right)$$
and let $R=(\e^{\id}-1)\circ X$. Then $\E[R_T]=-1$ but $\E[(R_T\vee0)^2]=\infty$, hence $\locutilMMV(\lambda)<0 $ for $\lambda>0$ by Jensen's inequality and  $\locutilMMV(\lambda)=-\infty $ for $\lambda<0$ due to the lack of square integrability. This yields $\lMMV=0$ and $\QMMV =\P$ is an equivalent separating measure over $\AdmMMV$ but not a $\sigma$-martingale measure since $b^R=b^{(\e^{\id}-1)\circ X}=-1$. In this example, conditions \ref{T8.i}--\ref{T8.v} of Theorem~\ref{T8} are satisfied but $\QMMV$ is not a $\sigma$-martingale measure for $R$.
\end{example}

\begin{example}\label{E:5}
Fix $T=2$ and consider the sequence of fixed times $(\tau_n)_{n\in\N}$ given by $\tau_n = T-\frac{1}{n}$. Let $(X_n)_{n\in\N}$ be a sequence of independent random variables such that $\P[X_n=-\frac{1}{3n^3}]=\frac{1}{2}-\frac{1}{4n^2}$, $\P[X_n=\frac{1}{2n^2}]=\frac{1}{2}$, and $\P[X_n=1]=\frac{1}{4n^2}$. Let $R$ be constant on the intervals $[0,\tau_1)$, $[\tau_n,\tau_{n+1})$ for all $n\in\N$, and $[T,\infty)$. Suppose further $R_0=0$ and $\Delta R_{\tau_n} = X_n$ for all $n\in\N$. Then $R$ is a square-integrable semimartingale with independent increments and we may choose $A$ to be piecewise constant with $\Delta A_{\tau_n} = \frac{1}{n^2}$ for all $n\in\N$. Easy calculations yield 
\begin{align*}
b^{\id\circ R}_{\tau_n} = {}& n^2\E[\Delta R_{\tau_n}] 
= n^2\left(-\frac{1}{3n^3}\left(\frac{1}{2}-\frac{1}{4n^2}\right)+\frac{1}{4n^2}+\frac{1}{4n^2}\right) 
= \frac{1}{2}+\mathcal{O}\left(\frac{1}{n}\right),\\
b^{\id^2\circ R}_{\tau_n} = {}& n^2\E[(\Delta R_{\tau_n})^2] 
= n^2\left(\frac{1}{9n^6}\left(\frac{1}{2}-\frac{1}{4n^2}\right)+\frac{1}{8n^4}+\frac{1}{4n^2}\right) 
= \frac{1}{4}+\mathcal{O}\left(\frac{1}{n^2}\right),
\end{align*}
which gives 
\begin{align*}
\lMV_{\tau_n}= {}& \frac{b^{\id\circ R}_{\tau_n}}{b^{\id^2\circ R}_{\tau_n}} = 2+\mathcal{O}\left(\frac{1}{n}\right),\\
\HR^2_{X_n} = {}& \frac{(\E[X_n])^2}{\E[X_n^2 ]} = \frac{1}{n^2}\frac{(b^{\id\circ R}_{\tau_n})^2}{b^{\id^2\circ R}_{\tau_n}} 
= \frac{1}{n^2}+\mathcal{O}\left(\frac{1}{n^3}\right),
\end{align*}
and 
\[
\int_0^\infty 2\locutilMV_{u}(\lMV_{u})\d A_u = \sum_{n\in\N}\frac{1}{n^2}\frac{(b^{\id\circ R}_{\tau_n})^2}{b^{\id^2\circ R}_{\tau_n}}
=\sum_{n\in\N}\HR^2_{X_n}<\infty.
\]

For the optimal local expected monotone quadratic utility strategy we have 
\begin{align*}
\lMMV_{\tau_n} ={}& \frac{b^{\id\indicator{\id<1}\circ R}_{\tau_n}}{b^{\id^2\indicator{\id<1}\circ  R}_{\tau_n}} 
=\frac{\E[X_n\indicator{\{X_n<1\}} ]}{\E[X_n^2\indicator{\{X_n<1\}} ]}
=\frac{-\frac{1}{3n^3}\left(\frac{1}{2}-\frac{1}{4n^2}\right)+\frac{1}{4n^2}}{\frac{1}{9n^6}\left(\frac{1}{2}-\frac{1}{4n^2}\right)+\frac{1}{8n^4}} 
= 2n^2 +\mathcal{O}(n), \\
\MHR^2_{X_n}={}&\frac{1}{n^2}b^{(\id\wedge1)\circ (\lMMV\sint R)}_{\tau_n}  =\E[(\lMMV_{\tau_n}X_n)\wedge1] 
=\frac{\left(-\frac{1}{3n^3}\left(\frac{1}{2}-\frac{1}{4n^2}\right)+\frac{1}{4n^2}\right)^2}{\frac{1}{9n^6}\left(\frac{1}{2}-\frac{1}{4n^2}\right)+\frac{1}{8n^4}} +\frac{1}{4n^2} =\frac{1}{2}+\mathcal{O}\left(\frac{1}{n}\right), 
\end{align*}
hence 
\[
\int_0^T 2\locutilMMV_{u}(\lMMV_{u})\d A_u = \sum_{n\in\N}\frac{1}{n^2}b^{(\id\wedge1)\circ (\lMMV\sint R)}_{\tau_n}=\sum_{n\in\N}\MHR^2_{X_n} = \infty.
\] 
Theorem~\ref{T3} now yields that the monotone Sharpe ratio has no upper bound over strategies in $\AdmMMV$, i.e., $\vMMV(0)=\infty$. 

To understand this result intuitively, fix some  $N\in\N$ and observe that the optimal expected monotone quadratic utility strategy $\aMMV(0)$ is well defined up to time $\tau_N$. In the language of Section~\ref{S:3}, Theorem~\ref{T3} asserts that by trading optimally up to time $\tau_N$ we obtain the squared monotone Sharpe ratio of 
\[
\MSR^2_{\aMMV(0)\sint R_{\tau_N}} = \prod_{n=1}^N (1+\MSR^2_{X_n}) -1 \geq \sum_{n=1}^N \MSR^2_{X_n}.
\]
Since $\MSR^2_{X_n}=\vfrac{\MHR^2_{X_n}}{(1-\MHR^2_{X_n})}=1+\mathcal{O}\left(\frac{1}{n}\right)$, the cumulative squared local monotone Sharpe ratio increases without bound as $N\uparrow\infty$, hence the squared monotone Sharpe ratio of the optimal trade up to time $\tau_N$, too, is unbounded for $N\uparrow\infty$.
\end{example}

\begin{example}\label{E:6}
Fix $T$ and $(\tau_n)_{n\in\N}$ as in Example~\ref{E:5}. Let $(X_n)_{n\in\N}$ be a sequence of  independent random variables such that $\P[X_n=-\frac{n+1}{n^3+1}]=\frac{n^3}{n^3+1}$ and $\P[X_n=\frac{n^3-n}{n^3+1}]=\frac{1}{n^3+1}$. Let $R$ be piecewise constant as in Example~\ref{E:5} with  $\Delta R_{\tau_n} = X_n\indicator{n>1}$ for all $n\in\N$. Then $R$ is an instantaneously arbitrage-free square-integrable semimartingale with independent increments. We have
$\E[X_n] = -\frac{n}{n^3+1}$, $\E[X^2_n] = \frac{n^2(n+1)}{(n^3+1)^2}$, $\HR^2_{X_n} =\MHR^2_{-X_n} = \frac{1}{n+1}$, hence $\sum_{n=2}^\infty\HR^2_{X_n}=\infty$ and both the MV utility and MMV utility are infinite as illustrated in the previous example. 

We further have $\lMMV=\lMV$, with 
\[\lMV_{\tau_n} = \frac{\E[X_n]}{\E[X^2_n]}\indicator{n>1} = -\frac{n^3+1}{n(n+1)}\indicator{n>1},\]
does not integrate $R$ on [0,T] and the variance-optimal separating measure \eqref{eq:QMV} is not well defined. There is, in fact, no separating measure in the sense of Definition~\ref{D:separating} since infinite MMV utility is incompatible with the existence of such a measure by Proposition~\ref{P:Granada241017}. 

On the other hand, observe that the piecewise constant process $V$ with
\[\Delta V_{\tau_n} = \frac{n+(n^3+1)X_n}{n^2}\indicator{n>1},\]
is a martingale with independent increments satisfying $\Delta V>-1$. Then $\Exp(V)>0$ is a martingale by \cite[Theorem~4.1]{cerny.ruf.23.spa} and the measure $\Q$ defined by $\frac{\d\Q}{\d\P}=\Exp(V)_T\in L^1\setminus L^2$ is an equivalent $\sigma$-martingale measure for $R$. 
\end{example}

\section{Related studies}\label{S:5}
\subsection{Monotone mean--variance preferences}
The notion of monotone MV efficiency has been formulated here for the first time, although it is implicitly present in previous works, for example, \citet*{maccheroni.al.09} and \citet*{cui.al.12}. The non-monotonicity of the Sharpe ratio is folklore, going back to at least \citet*[Table~1]{hodges.98.forc}.  Our definition of monotone mean--variance utility \eqref{eq:MMV1}, taken from \citet*{cerny.20}, exploits the notion of \emph{monotone hull} from \citet*{filipovic.kupper.07}. The definition is consistent with the original definition of \citet*{maccheroni.al.06} as seen in \cite[Theorem~2.1]{maccheroni.al.09} but the domain we consider extends beyond $L^2$. The monotone Sharpe ratio is defined and analysed in \citet*{cerny.03,cerny.20} and \citet*{zhitlukhin.19}. 

\subsection{Optimal monotone mean--variance portfolios}
The literature on optimal dynamic portfolio allocation with monotone mean--variance utility is sparse. A few studies have observed that MV and MMV optimal strategies coincide in specific models:  \citet*{trybula.zawisza.19} look at diffusive stochastic volatility models; \citet*{shen.zou.22.sifin} consider a GBM model with constraints; \citet*{hu.shi.xu.23.sifin} examine diffusive models with random coefficients under constraints. \citet*{cerny.20} points out that such results are generic in all semimartingale models with continuous prices admitting a local martingale measure with square-integrable density. \citet*{strub.li.20} reach a subset of conclusions in \citet*{cerny.20} using a framework that is less general, in particular, assuming the reverse H\"older inequality for their equivalent martingale measure.

Three further studies, \citet*{li.guo.tian.24} and \citet*{shi.xu.24} in the context of reinsurance and \citet*{li.liang.pang.25.mor} in the context of financial markets, consider asset prices with square-integrable jumps. Here it is observed that the optimal MV and MMV strategies sometimes differ. \citet*{cerny.20} shows that this phenomenon, too, can be understood in terms of the classical quadratic hedging theory. In full generality (see, e.g., \citet*{cerny.kallsen.07}) for locally square-integrable $S$, there is a process $L>0$ with $L_->0$ and a process $a$ such that the density of the variance-optimal $\sigma$-martingale measure is given by $L\Exp(-a\sint S)/L_0$. Furthermore, the wealth of an efficient zero-cost strategy whose mean and second moment coincide is given by $1-\Exp(-a\sint S)$. It follows \cite[Theorem~5.6]{cerny.20} that the MV and MMV optimal portfolios coincide if and only if the optimal zero-cost wealth of the quadratic criterion does not exceed the bliss point which occurs if and only if $a\Delta S\leq1$. This line of reasoning has recently been revisited by \citet*{li.liang.pang.25.orl}. Observe, however, that in \citet*{cerny.20} the set of admissible strategies for the MMV preference is larger than the set of MV admissible strategies in line with general semimartingale expected utility theory, while in \citet{li.liang.pang.25.orl} the two sets are assumed to be the same.

The only paper on MMV efficient portfolios whose scope is similar to our work is \citet*{li.liang.pang.23.arxiv}. Their financial market $R$ is modelled by a L\'evy process whose jumps are compound Poisson with $\Delta R\geq -1$. They further assume that jumps of $R$ have finite 8-th moment with $b^R>0$. In contrast, we place no moment restrictions of $R$, allow for jumps at fixed times, and consider general time-inhomogeneous increments with jumps of infinite variation. Observe that the activity of such processes may not be absolutely continuous with respect to calendar time.

To the best of our knowledge, the optimal trading strategy and its link to minimax measures have never been explicitly characterized for the monotone mean--variance utility in the L\'evy setting, let alone for general semimartingales with independent increments. Our construction naturally deals with non-uniqueness of the optimizer illustrated in Example~\ref{E:1}. The construction also shows that the wealth of any two optimizers must coincide on the domain of monotonicity of $\utilMMV$, which offers another way of seeing that the minimax separating measure $\QMMV$ (equal to the rescaled marginal monotone quadratic utility at the optimum) is determined uniquely.

\subsection{Expected monotone quadratic utility}
The literature below deals exclusively with expected utility maximization. We have shown in Proposition~\ref{P1} that the maximization of the expected monotone quadratic utility produces MMV efficient portfolios. The literature on optimal expected utility portfolios for $R$ with independent increments typically considers only L\'evy processes and further assumes $\Delta R\geq -1$. The state-of-the-art results relevant to our work are those of \citet*{bender.niethammer.08} with $p=q=2$ in their notation. 

\citet*{bender.niethammer.08} assume $R$ is special (Standing Assumption (ii)), which we show is not required. In Proposition 2.6, they \emph{establish} existence of the local optimizer for a single risky asset under ``Assumption (H)'', stronger than our assumption of no instantaneous arbitrage, together with either $b^R>0$ or $R$ square integrable. Their existence result in one dimension therefore does not cover Example~\ref{E:4}.  In arbitrary dimension, \emph{assuming} existence of the local optimizer with property \ref{T6.ii} of Theorem~\ref{T6} (called ``condition $C^a_2$'' in their work), they establish existence and explicit form of the optimal trading strategy. Our description of the optimal strategy $\aMMV(0)$ is more streamlined than the corresponding expressions in  \cite[Section~6]{bender.niethammer.08}. 

Elsewhere, \citet*{nutz.12.mafi} considers expected power utility maximization for L\'evy processes but not the power needed here. \citet*{goll.rueschendorf.01} and \citet*{cawston.vostrikova.14} assume strictly convex $\util$; this rules out $\utilMMV$. Beyond L\'evy processes, \citet*{kallsen.muhle-karbe.10.aap} consider It\^o semimartingales with independent increments; \citet*{fujiwara.10} considers general semimartingales with independent increments. Neither of these papers obtains a result analogous to our Theorem~\ref{T1} on the integrability of the local optimizer.

\subsection{Minimax measures}
Local martingale measures whose density has the smallest $L^2$-norm are treated in \citet*{jeanblanc.al.07} and \citet*{bender.niethammer.08}. Only the L\'evy case where $ R$ is special and $\Delta R>-1$ is considered. Theorem~2.4 in \cite{bender.niethammer.08} yields the implication \ref{T6.ii} $\Rightarrow$ \ref{T6.i} of our Theorem~\ref{T6} without asserting the existence of $\lMMV$. 

\citet*{bellini.frittelli.02} have a minimax theorem specifically suited to $\utilMMV$, but with admissibility restricted to wealth bounded uniformly from below. This rules out our optimizer in most continuous-times models including the Black--Scholes model.

In other developments, \citet*[Theorem~3.1]{fujiwara.miyahara.03} in one dimension and \citet*[Theorem~A]{esche.schweizer.05} in arbitrary dimension show that the minimal entropy equivalent \emph{martingale measure} (should it exist) preserves the L\'evy property of the yield process $R$. Here we show in Theorem~\ref{T5} that the minimax \emph{separating measure} $\QMMV$ (which may or may not be equivalent and/or a martingale measure) preserves the L\'evy property and, more broadly, the independent increments property of the yield process $R$.

\subsection{Classical mean--variance preferences}
Our analysis has some repercussions for the literature dealing with the classical Markowitz efficient portfolios. Here one typically assumes locally square-integrable yield and the existence of an equivalent $\sigma$-martingale measure with square-integrable density (e.g., \citet*{cerny.kallsen.07}). Theorem~\ref{T4} is new in that it does not assume either condition and instead deduces the optimal solution from the local expected quadratic utility maximization under the absence of instantaneous arbitrage. In our setting, the (signed) variance-optimal measure $\QMV$ exists if and only if the mean--variance utility $\vMV(0)$ is finite, however it is not always a $\sigma$-martingale measure but only a separating measure as seen in Example~\ref{E:3}.

The processes $\widehat K$ and $\widetilde K$ in Subsection~\ref{SS:3.1} are well-defined in our setting when $\vMV(0)<\infty$. They coincide with the eponymous objects in \citet[Lemma~1]{schweizer.94} when $R$ is square integrable. The intepretation of $\widetilde K$ as the cumulative maximal local squared Hansen ratio is new. The role of the Hansen ratio in quadratic portfolio optimization has previously been highlighted in \citet{cerny.20.arxiv}.

\section{Conclusions}\label{S:6}
The paper tackles optimal monotone mean--variance portfolio selection with a view of obtaining explicit formulae in the widest possible class of models in order to draw conclusions that are not skewed by the specificity of the setup. Since the maximization of the MMV utility itself is not easily phrased as a time-consistent optimization problem, we exploit the fact that MV and MMV preferences are translation-invariant hulls of expected utility. This link has previously been missed in the MMV literature (and largely also in the MV literature), leading to correspondingly more involved analyses. This point of view also yields clean economic interpretation that we bring to light by consistently applying the language of predictable variations.

The construction of the optimal strategy is not performed through some compactness principle  as in general semimartingale settings with arbitrary utility functions (cf.~\citet*[Theorem~5.4]{biagini.cerny.20}) but locally. Our construction is  explicit whereby the initial inputs (the characteristics of the yield process) are transformed into the local expected utility by a universal formula for predictable variations that can subsequently be optimized by standard convex numerical procedures. The new approach allows us to obtain a solution under the minimal assumption of no instantaneous arbitrage, which is weaker and easier to verify than the classical existence of a $\sigma$-martingale measure. Indeed the condition is weaker than the ``no unbounded profit with bounded risk'', as seen in an example with $A_t=t$, $b^R_t=t^{-1/2}\indicator{t>0}$, $c^R_t=1$, and $F_t^R=0$, for all $t\geq0$, which satisfies \eqref{eq:instNA_PII} but allows unbounded profits with bounded risk since $(b^R)^2/c^R$ is not integrable; see \citet{karatzas.kardaras.07}. Furthermore, we are able to consider completely general asset returns with independent increments without any moment restrictions. The finiteness of MMV utility does not need to be assumed \emph{ex ante}; it is checked  by looking at the local expected utility of the local optimizer that we show exists regardless.

To accomplish this research programme, we make two broader contributions to the theory of stochastic processes: (A) we consider $\sigma$-special processes --- a wider class of processes to which a drift rate can be assigned; (B) we formulate a necessary and sufficient integrability criterion for (multivariate) semimartingale stochastic integrals in terms of predictable linear--quadratic variations. Apart from being useful in our construction, item (A) brings further benefit in simplifying drift calculations for predictable variations (Lemma~\ref{L:240130}) since it removes the need for compatible truncations (cf.~\citet*[Proposition~5.2]{cerny.ruf.22.ejp}). The novel formulation in item (B) allows us to link local expected utility maximization with integrability; this link is made explicit here for the first time but we expect this observation to be useful in the broader literature on local expected utility maximization. 

In Section~\ref{S:3} we have provided an economic interpretation of the MMV utility in terms of the monotone Sharpe ratio and we have shown how the global MMV utility arises from the continuous compounding of local squared monotone Sharpe ratios. The monotone Sharpe ratio further determines the slope of the MMV efficient frontier, as argued in Remark~\ref{R:risk aversion 1}.

\appendix


\section{First steps}\label{appx:A}
\begin{proof}[Proof of Proposition~\ref{P1}]
To begin with, observe that for $x<1$ the correspondence 
$$\vartheta\mapsto \left(1- x\right)\vartheta $$ 
maps $\AdmMMV$ onto itself by the cone property of $\AdmMMV$. Since we have $1-2\utilMMV = ((1-\id)\vee0)^2$, the expected utility of these two strategies satisfies
$$1-2\E[\utilMMV(x+\left(1- x\right)\vartheta\sint R_T)] 
= (1- x)^{2}\left(1-2\E[\utilMMV(\vartheta\sint R_T)]\right).$$
On taking infima over $\vartheta\in\AdmMMV$, one obtains 
\begin{equation}\label{eq:uMMVx}
1-2\uMMV(x) = (1-x)^{2}\left(1-2\uMMV(0)\right), \qquad x<1,
\end{equation}
and, should an optimal strategy $\aMMV(0)$ exist, also
$$ \aMMV(x) =\left(1-x\right)^+\aMMV(0),\qquad x\in\R,$$
since for $x\geq1$, it is optimal not to trade.

The MMV utility formula \eqref{eq:MMV2} and \eqref{eq:uMMVx} now yield 
\begin{equation}\label{eq:vMMV0}
 2\vMMV(0) = 2\sup_{x\in\R}\, \{\uMMV(x)-x\} 
= \left\{\begin{array}{cl} \left(1-2\uMMV(0)\right)^{-1}-1,&\qquad  2\uMMV(0)<1,\\[1ex]
\infty,&\qquad  2\uMMV(0)=1,
\end{array}\right.
\end{equation}
which shows the equivalence of \ref{T1.i} and \ref{T1.ii}.

For $2\uMMV(0)<1$, the supremum in \eqref{eq:vMMV0} is attained uniquely at  
$$\xMMV= 1-\left(1-2\uMMV(0)\right)^{-1}<1.$$
Thus if \ref{T1.i} or \ref{T1.ii} holds and $\aMMV(0)$ exists, then the optimal zero-cost risky investment strategy for the MMV utility is given by 
$$\bMMV(0) = \aMMV(\xMMV)=(1-\xMMV)\aMMV(0)=\left(1-2\uMMV(0)\right)^{-1}\aMMV(0)=\left(1+2\vMMV(0)\right)\aMMV(0).$$ 
The translation invariance of MMV utility now yields \eqref{eq:vMMVx} and \eqref{eq:bMMV}.

Conversely, suppose \ref{T1.i} or \ref{T1.ii} holds and a MMV optimizer $\bMMV(0)\in\AdmMMV$ exists. Then from \eqref{eq:MMV2}, there is $\hat \eta$ such that $\vMMV(0) = \E[\utilMMV(\bMMV(0)\sint R_T-\hat\eta)]+\hat\eta$, hence we may take $\aMMV(-\hat\eta)=\bMMV(0)$. Optimality (i.e., $\vMMV(0)\geq0$) and the upper bound $\utilMMV\leq \frac{1}{2}$ gives $\hat\eta\geq -\frac{1}{2}$. The first part of the proof now yields that we may take $\aMMV(0) = \frac{\aMMV(-\hat\eta)}{1+\hat\eta} = \frac{\bMMV(0)}{1+\hat\eta}$. 
\end{proof}


\section{Parametrization of trading strategies}\label{appx:B}

\begin{proof}[Proof of Proposition~\ref{P:lambda to alpha and back}]
Observe that $\utilMMV$ and $\utilMMV'-1= -(\id\wedge1)$ are universal representing functions in the sense of \cite[Definition~3.4]{cerny.ruf.22.ejp}. 
We have that $\alpha$ integrates $R$ because $\lambda$ integrates $R$ and $\Exp\left( -\left(\id\wedge1\right) \circ (\lambda\sint R) \right)_{-}$ is locally bounded. Next, we have
\begin{equation*}
 \alpha  = \lambda\Exp\left( -\left(\id\wedge1\right) \circ (\lambda\sint R) \right)_{-} = \lambda\Exp\left( \left(-1\vee \id\right) \circ (-\lambda\sint R) \right) _{-},
\end{equation*}
by composition (Proposition~\ref{P:composition}).  Observe that $\utilMMV' = (1-\id)^+ = 1-\id\wedge1$. Lemma~\ref{L:SDE} with $X=-\lambda\sint R$ now gives 
\begin{equation}\label{eq:Grana240513}
\begin{split}
\utilMMV'(\alpha\sint R) ={}& (1-\alpha\sint R)^+ 
= (1-\lambda\Exp\left( \left(-1\vee \id\right) \circ (-\lambda\sint R) \right) _{-}\sint R)^+
\\
={}& \Exp\left( \left(-1\vee \id\right) \circ (-\lambda\sint R) \right) 
=\Exp\left( -\left(\id\wedge1\right) \circ (\lambda\sint R) \right),
\end{split}
\end{equation}
which yields \eqref{eq:Granada241014}, \ref{20241230.a}, and \ref{20241230.b}. 
Since $2\utilMMV =1-(1-\id\wedge1)^2=1-(\utilMMV')^2$, the utility of wealth satisfies
\begin{equation*}
\begin{split}
2\utilMMV(\alpha\sint R) 
={}& 1- \Exp\left( -\left(\id\wedge1\right) \circ (\lambda\sint R) \right) ^2\\
={}&1-\Exp\big( \big(\left(1-\id\wedge1\right)^{2}-1\big) \circ (\lambda\sint R) \big),
\end{split}
\end{equation*}
where the first equality follows from \eqref{eq:Grana240513} and the second  from \cite[Proposition~4.3 and Corollary~3.20]{cerny.ruf.22.ejp}. The explicit form of $\utilMMV$ now yields \eqref{eq:terminal utility}.

To show the converse statement, \ref{20241230.a} yields that $\lambda$ is well defined and integrates $R$. By associativity of stochastic integrals, on $\lc0,\sigma\rc$ one has 
$$1-\alpha\sint R = 1+\left(1-\alpha\sint R_-\right)\sint (-\lambda\sint R) = \Exp\left(-\lambda\sint R\right).$$
This and \eqref{eq:Cierne250101} gives 
\[ \alpha\indicator{\lc0,\sigma\rc} = \lambda (1-\alpha\sint R_-)\indicator{\lc0,\sigma\rc} = \lambda\Exp\left(-\lambda\sint R\right)_-\indicator{\lc0,\sigma\rc} = \lambda \Exp\left( -\left(\id\wedge1\right) \circ (\lambda\sint R) \right)_{-}\] 
and \ref{20241230.b} yields \eqref{eq:alpha}.
\end{proof}


\section{Local expected utility maximization}\label{appx:C}
\begin{proof}[Proof of Proposition~\ref{P:Cierne240807}]
For deterministic $\lambda$, the realized local utility has independent increments by \cite[Proposition~2.15]{cerny.ruf.23.spa}. Observe further that $2\utilMMV(\lambda\Delta R_t)\leq \max_{x\in\R}2\utilMMV(x)=1$ for all $t\in[0,T]$ and that $\P[2\utilMMV(\lambda\Delta R_t)= 1]=1$ only if there is instantaneous arbitrage at time $t$, i.e., $\P[\lambda\Delta R_t\geq 1] = 1$. In view of \eqref{eq:NA_t}, we therefore have $\P[2\utilMMV(\lambda\Delta R_t)= 1]<1$ and hence $\E\left[-2\utilMMV(\lambda\Delta R_t)\right]>-1$, for all $t\in [0,T]$. Identity \eqref{eq:terminal utility} and  \cite[Theorem~4.1]{cerny.ruf.23.spa} with $Y=-2\utilMMV(\lambda\,\id) \circ R$ now yield the equivalence of \ref{P:Cierne240807.i} and \ref{P:Cierne240807.ii} as well as \eqref{eq:Cierne240808}.
\end{proof}

\begin{proof}[Proof of Proposition~\ref{P:Cierne240808}]
\noindent Step 1: We claim that the local expected utility  $\locutilMMV$ given by
\begin{equation*}
\locutil(\lambda)=  b^{ R[1]}\lambda - \frac{1}{2}\lambda c^{ R}\lambda^\top
+\int_{\R^d} \left(\utilMMV(\lambda x)-\lambda h(x)\right)F^{ R}(\d x),\qquad \lambda\in\R^d,
\end{equation*}
is a ``normal integrand'' on $[0,T]\times \R^d$ in the sense of \citet[Definition~14.27]{rockafellar.wets.98}. Indeed, 
\[
(t,\lambda)\mapsto \left(b_t^{ R[1]}+\int_{\R^d}(x\indicator{|x|\leq1}-h(x))F_t^X(\d x)\right)\lambda - \frac{1}{2}\lambda c_t^{ R}\lambda^\top
\] 
is a normal integrand by \cite[Example~14.29]{rockafellar.wets.98}. Furthermore, we have 
\[ \utilMMV(\lambda x)-\lambda x\indicator{|x|\leq1}\leq \frac{1}{2}\indicator{|x|>1}\leq |x|^2\wedge1,\qquad  (x,\lambda)\in\R^d\times\R^d,\]
$t\mapsto (|\id|^2\wedge1) F_t^R$ is a finite transition kernel from $([0,T],\mathcal{B}([0,T]))$ to $(\R^d,\mathcal{B}(\R^d))$, and   
\[(x,\lambda)\mapsto\frac{\utilMMV(\lambda x)-\lambda x\indicator{|x|\leq1}}{|x|^2\wedge1}\indicator{x\neq0},\] 
being measurable in $x$ and continuous in $\lambda$, is a normal integrand on $\R^d \times \R^d$ by \cite[Example~14.29]{rockafellar.wets.98}. By \citet[Lemma~2.55 and Theorem~1.20.4]{pennanen.perkkio.24}, 
\[
(t,\lambda)\mapsto \int_{\R^d} \left(\utilMMV(\lambda x)-\lambda h(x)\right)F_t^{ R}(\d x)
=\int_{\R^d} \frac{\utilMMV(\lambda x)-\lambda h(x)}{|x|^2\wedge1}\indicator{x\neq0}(|x|^2\wedge1)F_t^{ R}(\d x) 
\]
is a normal integrand on $[0,T]\times \R^d$, which establishes the claim.\medskip

\noindent Step 2: We claim that $\lambda\mapsto \locutilMMV_t(\lambda)$ has a maximizer for every $t\in[0,T]$.
Fix $\lambda\in\R^d$. To simplify the notation, observe that by \cite[Corollary~5.8]{cerny.ruf.22.ejp} one has
\begin{align*}
b^{(\lambda R)[1]}&{} =   b^{ R[1]}\lambda +\int_{\R^d} \left(\lambda x\indicator{|\lambda x|\leq 1}-\lambda h(x)\right)F^{ R}(\d x),\\
c^{\lambda R}   &{}   =  \lambda c^R\lambda^\top,\\
F^{\lambda R} &{} \text{ is the image measure of $F^R$ via the mapping $\lambda\mkern2mu \id$}.
\end{align*}
This yields
\[ \locutil(\lambda y) = b^{(\lambda R)[1]}y-\frac{1}{2}c^{\lambda R}y^2 + \int_{\R} (\utilMMV(xy)-xy\indicator{|x|\leq1})F^{\lambda R}(\d x), \quad y\in\R.\]

We shall now examine the behaviour of $\psi(y) = \locutil(\lambda y)$ for $y>0$ in a number of mutually exclusive subcases.
\begin{enumerate}[(a)]
\item For $c^{\lambda R}>0$ or $F^{\lambda R}(\R_-)>0$, Lemma~\ref{L:230717} yields $\lim_{y\to\infty}\frac{\psi(y)}{y}=-\infty$.
\item The case $c^{\lambda R}=0$ and $F^{\lambda R}(\R_-)=0$ admits two subcases as per condition \eqref{eq:instNA_PII}.  
\begin{enumerate}[(i)] 
\item Either $F^{\lambda R}(\R_+)>0$ and $b^{(\lambda R)[1]} - \int_{\R} x\indicator{|x|\leq 1}F^{\lambda R}(\d x)<0$, yielding $\lim_{y\to\infty}\frac{\psi(y)}{y} <0$ by Lemma~\ref{L:230717}; or
\item $F^{\lambda R}(\R_+)=0$ and $b^{(\lambda R)[1]}=0$, in which case $\psi(y)=0$ for all $y\in\R$. 
\end{enumerate}
\end{enumerate}
For each fixed $\lambda$ and all $t\in[0,T]$, we therefore have either $\lim_{y\to\infty} \frac{\locutil_t(\lambda y)}{y}<0$, which yields  $\lim_{y\to\infty} \locutil_t(\lambda y) = -\infty$, or $\locutil_t(\lambda y)=0$ for all $y\in\R$. By \citet*[Theorem~27.3]{rockafellar.70}, the function $\locutil_t$ attains its maximum in $\R^d$ for all $t\in[0,T]$.\medskip 

\noindent Step 3: By \citet[Theorem~14.37]{rockafellar.wets.98}, measurable selection applied to the normal integrand $\locutilMMV$ establishes \ref{Cierne240808.1}. 
By optimality, $0=\locutilMMV(0)\leq\locutilMMV(\lMMV)$. Since $\utilMMV$ is bounded from above, we have $\locutilMMV(\lMMV)<\infty$, yielding \ref{Cierne240808.2}.  
The first claim in \ref{Cierne240808.3} follows from $0\leq\locutilMMV(\lMMV)$ together with the inequality 
\[-\id^2\indicator{\id\leq-2}\leq \utilMMV\indicator{\id\leq-2}\leq-\frac{1}{2}\id^2\indicator{\id\leq-2}. \] 
Then $\psi$ given by $\psi(y) = \locutil(\lMMV y)$ for $y\geq 0$ is finite-valued and differentiable on $(0,\infty)$. By optimality of $\lMMV$, $\psi$ attains it maximum at $1$, hence $\psi'(1)=0$. Differentiation under the integral sign completes the proof.
\end{proof}

The next lemma studies the asymptotic slope of the local expected utility in a given direction. In this paper, it is used with $X=\lambda R$.

\begin{lemma}\label{L:230717}
Let $X$ be an $\R$-valued independent-increments semimartingale with characteristics $(b^{X[1]},c^X,F^X,A)$. Consider the function $\psi:\R\to [-\infty,\infty)$ given by 
\begin{equation*} 
\psi(y)= b^{X[1]}y - \frac{1}{2}  c^X y^2 +\int_{\R} \left(\utilMMV(xy)-xy\indicator{|x|\leq 1}\right)F^X(\d x).
\end{equation*}
Then 
\begin{equation}\label{eq:230718}
\begin{split}
 \infty>\lim_{y\to\infty}\frac{\psi(y)}{y} ={}& b^{X[1]}-\int_{\R} x\indicator{0<x\leq 1}F^X(\d x)
-\infty\left(\indicator{c^X>0}+\indicator{F^X(\R_-)>0}\right),
\end{split}
\end{equation}
with the convention $\infty\times0  = 0$.  
\end{lemma}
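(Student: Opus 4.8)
The plan is to treat the triplet $(b^{X[1]},c^X,F^X)$ as fixed, to recall that $\int_\R(|x|^2\wedge1)F^X(\d x)<\infty$, and to analyse the three terms of $\psi(y)/y$ separately. The linear term contributes $b^{X[1]}$ and the quadratic term contributes $-\tfrac12 c^X y$, which tends to $-\infty\cdot\indicator{c^X>0}$ with the stated convention; all the genuine work is in the integral term. Writing
\[
g_y(x) = \frac{1}{y}\bigl(\utilMMV(xy)-xy\indicator{|x|\leq1}\bigr),
\]
I would first record the explicit form $\utilMMV(u)=(u\wedge1)-\tfrac12(u\wedge1)^2$, equal to $\tfrac12$ for $u\geq1$ and to $u-\tfrac12u^2$ for $u\leq1$, and then compute the pointwise limit of $g_y(x)$ as $y\to\infty$: it equals $-x\indicator{0<x\leq1}$ for $x>0$, equals $0$ for $x=0$, and equals $-\infty$ for $x<0$.

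The key structural observation, which I would establish by differentiating in $y$ on the two regions $xy\leq1$ and $xy\geq1$, is that for every fixed $x\neq0$ the map $y\mapsto g_y(x)$ is decreasing on $(0,\infty)$: indeed $y\mapsto \utilMMV(xy)/y$ is decreasing for $x>0$, and for $x<0$ one has $g_y(x)=-\tfrac12x^2y$ on $[-1,0)$ and $g_y(x)=x-\tfrac12x^2y$ on $(-\infty,-1)$, both decreasing. This monotonicity is precisely what lets me interchange limit and integral without hunting for an integrable dominating function. I would then split $\int_\R g_y\,F^X(\d x)$ into its pieces over $(0,\infty)$ and $(-\infty,0)$. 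On $(0,\infty)$ the function $g_1$ is $F^X$-integrable, being bounded by $\tfrac12 x^2$ near the origin, where $\int(|x|^2\wedge1)F^X<\infty$, and by $\tfrac12$ on $\{x>1\}$, a set of finite $F^X$-mass; so the decreasing form of monotone convergence yields $\int_{(0,\infty)}g_y\,F^X(\d x)\to-\int_{(0,1]}x\,F^X(\d x)\in[-\infty,0]$.

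On $(-\infty,0)$ one has $g_y\leq0$, so the integral is well defined in $[-\infty,0]$; if $F^X(\R_-)=0$ it is identically $0$, whereas if $F^X(\R_-)>0$ I would pick $\varepsilon>0$ with $F^X((-\infty,-\varepsilon])>0$ and use $g_y(x)\leq-\tfrac12\varepsilon^2y$ on that set to force $\int_{(-\infty,0)}g_y\,F^X(\d x)\to-\infty$. Combining the two pieces gives no $\infty-\infty$ ambiguity, since by monotonicity the $(0,\infty)$ piece stays bounded above by $\int_{(0,\infty)}g_1\,F^X<\infty$, and the integral term's limit is therefore $-\int_{(0,1]}x\,F^X(\d x)-\infty\cdot\indicator{F^X(\R_-)>0}$; adding back the linear and quadratic contributions produces \eqref{eq:230718}. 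The bound $\lim_{y\to\infty}\psi(y)/y\leq b^{X[1]}<\infty$ is then immediate, as every correction subtracted from $b^{X[1]}$ is nonnegative. The main obstacle is exactly this interchange of limit and integral in the presence of a possibly non-integrable $\int_{(0,1]}x\,F^X$ and possibly heavy negative tails; the monotonicity of $y\mapsto g_y(x)$ is the device that resolves it cleanly.
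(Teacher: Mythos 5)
Your proof is correct and follows essentially the same route as the paper's: compute the pointwise limits of $y^{-1}\left(\utilMMV(xy)-xy\indicator{|x|\leq1}\right)$, observe their monotonicity in $y$, and interchange limit and integral via monotone/dominated convergence, with the finiteness of $\int_{\R}(|x|^2\wedge1)F^X(\d x)$ and of $F^X(\{x>1\})$ supplying integrability of the starting function. The only difference is packaging: the paper splits into four regions $(0,1]$, $(1,\infty)$, $[-1,0)$, $(-\infty,-1)$, using dominated convergence on $\{x>1\}$ and monotone convergence elsewhere, whereas you fold the entire positive half-line into a single application of decreasing monotone convergence with integrable majorant $g_1$ and settle the negative half-line by a direct divergence estimate.
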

\begin{proof} 
Observe that $0\geq y^{-1}\left(\utilMMV(xy)-xy\right)\indicator{0<x\leq 1}\downarrow -x\indicator{0<x\leq 1}$ for $y\uparrow \infty$,
and, similarly, 
$$0\geq y^{-1}\left(\utilMMV(xy)-xy\right)\indicator{-1\leq x<0}\downarrow -\infty\indicator{-1\leq x<0},\qquad  y\uparrow \infty.$$
Monotone convergence now yields 
\begin{equation}\label{eq:mono1}
\begin{split}
\lim_{y\to\infty} y^{-1}\int_{\R} \left(\utilMMV(xy)-xy\right)\indicator{|x|\leq 1}F^X(\d x) ={}& -\int_{\R} x\indicator{0<x\leq 1}F^X(\d x)
-\infty \indicator{F^X([-1,0))>0}.
\end{split}
\end{equation}
The right-hand side is well defined with the convention $\infty\times0  = 0$ in the last term.

One also has 
$ 0\leq y^{-1}\utilMMV(xy)\indicator{1<x}\leq \frac{1}{2}\indicator{1<x}$ for $y\geq 1$.
Since $F^X([1,\infty))<\infty$, dominated convergence now yields
\begin{equation*}
\lim_{y\uparrow\infty} y^{-1}\int_{\R} \utilMMV(xy)\indicator{1<x}F^X(\d x) = 0.
\end{equation*}

As the final ingredient, monotone convergence yields
\begin{equation}\label{eq:mono2}
0\geq \lim_{y\to\infty} y^{-1}\int_{\R} \utilMMV(xy)\indicator{x<-1}F^X(\d x) = -\infty \indicator{F^X((\infty,-1))>0},
\end{equation}
with the convention $\infty\times0  = 0$.

Pulling \eqref{eq:mono1}--\eqref{eq:mono2} together yields \eqref{eq:230718}.
\end{proof}


\section{\texorpdfstring{$\sigma$}{Sigma}-special processes and semimartingale characteristics}\label{appx:D}
With \citet*{js.03}, we write $(b^{X[1]},c^X,F^X,A)$ for the characteristic triplet of some $\R^d$-valued semimartingale $X$ relative to some activity process $A$. Recall $X[1]=h\circ X$ with $h$ given in \eqref{eq:truncation}. The process $X[1]$ is special and its predictable compensator is precisely $B^{X[1]}= b^{X[1]}\sint A$. Furthermore, for special $X$ one clearly has 
\begin{equation}\label{eq:bX}
b^X = b^{X[1]} + \int_{\R^d} (x-h(x))F^X(\d x).
\end{equation} 
\begin{definition}
We say that a semimartingale $X$ is $\sigma$-special if there is a sequence of predictable sets $\{D_n\}_{n\in\N}$ increasing to $\Omega\times [0,\infty)$ such that $\indicator{D_{n}}\sint X$ is special for all $n\in\N$. 
\end{definition}
Observe that $|\id-h|\leq |\id|\indicator{|\id|>1}\leq |\id-h|+d\indicator{|\id|>1}$ for the $L^1$ norm on $\R^d$. Therefore, by \citet*[Lemma~3.2]{kallsen.04}, $X$ is $\sigma$-special if and only if $\int_{\R^d} |x-h(x)|F^X(\d x)$ is finite $(\P\otimes A)$-almost everywhere and $X$ is special if and only if $\int_{\R^d} |x-h(x)|F^X(\d x)$ integrates $A$. This motivates the following proposition.

\begin{proposition}\label{P:sigma-special}
If a semimartingale $X$ is $\sigma$-special, then $b^X$ is well defined by \eqref{eq:bX}. Furthermore, for any predictable set $D$ such that $\indicator{D}\sint X$ is special, we have $b^X\indicator{D}\in L(A)$ and
$ B^{\indicator{D}\sint X} = (b^{X}\indicator{D})\sint A$.
\end{proposition}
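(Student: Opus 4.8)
The plan is to derive both assertions from the characterization of $\sigma$-special processes recorded just above the proposition: by \cite[Lemma~3.2]{kallsen.04}, $X$ is $\sigma$-special exactly when $\int_{\R^d}|x-h(x)|F^X(\d x)$ is finite $(\P\otimes A)$-almost everywhere, and special exactly when this density integrates $A$. For well-definedness, suppose $X$ is $\sigma$-special. Then $\int_{\R^d}|x-h(x)|F^X(\d x)<\infty$ $(\P\otimes A)$-a.e., so the vector integral $\int_{\R^d}(x-h(x))F^X(\d x)$ converges absolutely almost everywhere; since $X[1]$ is always special, $b^{X[1]}$ is finite $(\P\otimes A)$-a.e., and hence $b^X$ given by \eqref{eq:bX} is a well-defined predictable process.

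For the compensator formula, I would set $Y=\indicator{D}\sint X$ and compute the differential characteristics of $Y$ from those of $X$ under integration of the $\R^d$-valued $X$ against the scalar, $\{0,1\}$-valued predictable process $\indicator{D}$ (acting componentwise), using the transformation rules for characteristics under stochastic integration (cf.~\cite[Corollary~5.8]{cerny.ruf.22.ejp} and \cite{js.03}). Since $\Delta Y=\indicator{D}\Delta X$ and $\indicator{D}^2=\indicator{D}$, one gets $F^Y=\indicator{D}F^X$ and $c^Y=\indicator{D}c^X$. The only delicate point is the truncated drift: the general formula yields $b^{Y[1]}=\indicator{D}b^{X[1]}+\int_{\R^d}(h(\indicator{D}x)-\indicator{D}h(x))F^X(\d x)$, and I expect the verification that this retruncation integral vanishes to be the main obstacle. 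It does vanish identically, because $\indicator{D}\in\{0,1\}$ and $h(0)=0$ force $h(\indicator{D}x)=\indicator{D}h(x)$ pointwise; hence $b^{Y[1]}=\indicator{D}b^{X[1]}$.

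Finally I would invoke the hypothesis that $Y$ is special. Applying \eqref{eq:bX} to $Y$ and substituting the characteristics just computed gives
\[
b^Y = b^{Y[1]}+\int_{\R^d}(x-h(x))F^Y(\d x)=\indicator{D}\left(b^{X[1]}+\int_{\R^d}(x-h(x))F^X(\d x)\right)=b^X\indicator{D}.
\]
Because $Y$ is special, $b^Y$ integrates $A$ and its predictable compensator is $B^Y=b^Y\sint A$. Substituting $b^Y=b^X\indicator{D}$ then yields $b^X\indicator{D}\in L(A)$ and $B^{\indicator{D}\sint X}=(b^X\indicator{D})\sint A$, as claimed.
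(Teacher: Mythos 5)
Your proof is correct and follows essentially the same route as the paper's: both rest on the \cite[Lemma~3.2]{kallsen.04} characterization for well-definedness of $b^X$, and then compute the differential characteristics of $\indicator{D}\sint X$ and apply \eqref{eq:bX} to that special process. The only difference is cosmetic — you make explicit the re-truncation identity $h(\indicator{D}x)=\indicator{D}h(x)$ (valid since $\indicator{D}\in\{0,1\}$ and $h(0)=0$), which the paper's proof uses silently when it writes $b^{(\indicator{D}\sint X)[1]}=b^{X[1]}\indicator{D}$.
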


\begin{proof}
If $X$ is $\sigma$-special, then the right-hand side of \eqref{eq:bX} is well defined by the argument above. Observe that
\begin{align*}
B^{\indicator{D}\sint X} ={}& b^{\indicator{D}\sint X}\sint A = \left( b^{(\indicator{D}\sint X)[1]} + \int_{\R^d} (x-h(x))F^{\indicator{D}\sint X}(\d x)\right)\sint A\\
={}&\left( b^{X[1]}\indicator{D} + \int_{\R^d} (x-h(x))\indicator{D}F^{X}(\d x)\right)\sint A,
\end{align*}
yielding the statement.
\end{proof}

Our aim is to provide a systematic way of computing compensators of $\sigma$-special semimartingales obtained from certain representations that one can loosely think of as variations. To this end, we first introduce a (sub)class of predictable functions that is sufficiently general for our purposes.
\begin{definition}
Let $\tI^{n}(X)$ denote the set of all predictable functions $\xi: [0,\infty)\times\Omega\times\R^d \rightarrow \R^n$  such that the following properties hold.
	\begin{enumerate}[label={\rm(\arabic{*})}, ref={\rm(\arabic{*})}]
		\item  $\xi(0) = 0$.
		\item $x \mapsto \xi(x)$ is twice differentiable at zero.
		\item $D \xi(0) \in L(X)$.
		\item $D^2 \xi(0) \in L([X,X]^c)$.
		\item $(\xi -   D  \xi(0) \, \id) \in L(\mu^{{X}})$.
	\end{enumerate}
We write $\tI(X)=\bigcup_{n\in\N}\tI^n(X)$ and for $\xi\in\tI(X)$ we define 
\begin{equation}\label{eq:circ tI}
\xi\circ X =  D\xi(0)\sint X + \frac{1}{2}D^2\xi(0)\sint [X,X]^c + (\xi - D\xi(0)\id)*\mu^X.
\end{equation}
\end{definition}

The following proposition provides sufficient conditions for a composition of functions in $\tI$ to remain in $\tI$. For a proof see \cite[Theorem~3.18]{cerny.ruf.22.ejp}.
\begin{proposition}\label{P:composition}
	Let $\xi \in \tI(X)$. Moreover, fix $\psi \in \tI(\xi\circ X)$ such that $D\psi(0)$ is locally bounded.
		Then $\psi(\xi) \in \tI(X)$ and we have $\psi \circ (\xi \circ X) = \psi(\xi) \circ X$.
\end{proposition}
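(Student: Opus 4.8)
The plan is to reduce the claimed identity to a chain-rule computation on the three building blocks of the ``$\circ$'' operation---the continuous martingale part, the continuous quadratic variation, and the jump measure---after first checking that $\psi(\xi)$ really belongs to $\tI(X)$.

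First I would read off the components of $Y:=\xi\circ X$ from \eqref{eq:circ tI}. The two finite-variation terms $\frac12 D^2\xi(0)\sint[X,X]^c$ and $(\xi-D\xi(0)\id)*\mu^X$ carry no continuous martingale part, so $Y^c=D\xi(0)\sint X^c$ and hence $D^2\psi(0)\sint[Y,Y]^c=(D\xi(0)^\top D^2\psi(0)\,D\xi(0))\sint[X,X]^c$. Reading off the jumps gives $\Delta Y=\xi(\Delta X)$, so that $\mu^Y$ is the image of $\mu^X$ under the predictable map $(t,x)\mapsto(t,\xi_t(x))$; consequently, for every predictable $g$ with $g_t(0)=0$ one has the transfer identity $g*\mu^Y=g(\xi)*\mu^X$, where the vanishing at the origin is exactly what discards the spurious times at which $\Delta X\neq0$ but $\xi(\Delta X)=0$.

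Next I would verify the five defining conditions for $\psi(\xi)\in\tI(X)$. Since $\xi_t(0)=0=\psi_t(0)$, the composite vanishes at the origin, and the second-order chain rule gives $D(\psi(\xi))(0)=D\psi(0)\,D\xi(0)$ together with $D^2(\psi(\xi))(0)=D\psi(0)\,D^2\xi(0)+D\xi(0)^\top D^2\psi(0)\,D\xi(0)$, settling (1) and (2). Local boundedness of $D\psi(0)$ combined with $D\xi(0)\in L(X)$ yields (3), and, applied termwise to the previous Hessian formula together with $D^2\psi(0)\in L([Y,Y]^c)$ and the reduction of $[Y,Y]^c$ above, yields (4). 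For (5) I would feed $g=\psi-D\psi(0)\id$ into the transfer identity to convert the hypothesis $(\psi-D\psi(0)\id)\in L(\mu^Y)$ into an integrability statement against $\mu^X$, and then absorb the linear correction $D\psi(0)(\xi-D\xi(0)\id)$, which lies in $L(\mu^X)$ by condition (5) for $\xi$ and local boundedness of $D\psi(0)$.

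Finally I would expand $\psi\circ Y$ using \eqref{eq:circ tI} and substitute the three pieces of $Y$: associativity turns $D\psi(0)\sint Y$ into $D\psi(0)D\xi(0)\sint X+\frac12 D\psi(0)D^2\xi(0)\sint[X,X]^c+D\psi(0)(\xi-D\xi(0)\id)*\mu^X$, the term $\frac12 D^2\psi(0)\sint[Y,Y]^c$ becomes $\frac12(D\xi(0)^\top D^2\psi(0)D\xi(0))\sint[X,X]^c$, and $(\psi-D\psi(0)\id)*\mu^Y$ becomes $(\psi(\xi)-D\psi(0)\xi)*\mu^X$ by the transfer identity. Collecting the $\sint X$ terms gives $D(\psi(\xi))(0)\sint X$, the two $\sint[X,X]^c$ terms combine via the Hessian formula into $\frac12 D^2(\psi(\xi))(0)\sint[X,X]^c$, and in the $*\mu^X$ terms the two copies of $D\psi(0)\xi$ cancel, leaving $(\psi(\xi)-D\psi(0)D\xi(0)\id)*\mu^X=(\psi(\xi)-D(\psi(\xi))(0)\id)*\mu^X$; these three pieces are precisely $\psi(\xi)\circ X$. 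The hard part will be condition (5) in the membership step: one must transport the control on $\psi-D\psi(0)\id$ back along the image measure, reconcile it with the linear correction coming from $\xi-D\xi(0)\id$, and make sure the compensators of the jump integrals on the two sides match so that the raw transfer identity survives compensation---it is precisely the local boundedness of $D\psi(0)$ that keeps all the linear terms in the right integrability classes throughout.
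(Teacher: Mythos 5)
Your argument is correct, but it is not the paper's route: the paper gives no proof of this proposition at all, deferring to \cite[Theorem~3.18]{cerny.ruf.22.ejp}, where composition is established for the wider class $\I(X)\supset\tI(X)$ whose jump part is the $\sigma$-localised integral $\star\,\mu^X$ and where the sets $\mathcal{H}_X=\{\nu^X(\{\cdot\})=0\}$ intervene. Your proof is instead a direct verification tailored to $\tI(X)$, and it is sound: $Y^c=D\xi(0)\sint X^c$ (hence $[Y,Y]^c=D\xi(0)\sint[X,X]^c\,D\xi(0)^\top$) because the other two terms in \eqref{eq:circ tI} have finite variation; $\Delta Y=\xi(\Delta X)$ because $\xi(0)=0$, which gives the transfer identity $g*\mu^Y=g(\xi)*\mu^X$ for predictable $g$ with $g(0)=0$ as an identity between pathwise sums; the five membership conditions for $\psi(\xi)$ follow from the second-order chain rule, with local boundedness of $D\psi(0)$ keeping each term in the correct integrability class; and the final collection of terms uses only associativity of stochastic and Stieltjes integration. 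Two remarks. First, your closing worry about ``compensators of the jump integrals'' is moot: nothing in \eqref{eq:circ tI} is compensated --- that is exactly why the direct proof for $\tI$ is shorter than the cited proof for $\I$, and why the transfer identity can be applied verbatim (first to $|\psi-D\psi(0)\id|$ to get integrability, then to the signed function). Second, ``twice differentiable at zero'' should be read via second-order Peano expansions so that it is visibly stable under composition (one uses $\xi(x)=O(|x|)$ to convert $o(|\xi(x)|^2)$ into $o(|x|^2)$); with that reading, your Hessian formula $D^2(\psi(\xi))(0)=D\psi(0)D^2\xi(0)+D\xi(0)^\top D^2\psi(0)\,D\xi(0)$ is the correct chain rule. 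In short, the paper's citation buys generality (composition in the full class $\I(X)$, needed elsewhere in the calculus), while your argument buys a self-contained, elementary proof of exactly the statement used in this paper.
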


\begin{remark}
In \cite{cerny.ruf.22.ejp}, the $\circ$ operation is defined for predictable functions from a wider class $\I(X)\supset \tI(X)$. Observe that the two definitions are consistent on $\tI(X)$. Indeed, by \cite[Definition~3.8]{cerny.ruf.22.ejp}, for $\R^d$-valued $X$ and $\R$-valued $\xi\in\I(X)$ one has
\begin{equation}\label{eq:FlightToBA241229}
\xi\circ X =  \indicator{\mathcal{H}_X}D\xi(0)\sint X + \frac{1}{2}D^2\xi(0)\sint [X,X]^c 
+ (\xi - \indicator{\mathcal{H}_X}D\xi(0)\id)\star \mu^X.
\end{equation}
where $\mathcal{H}_X=\{\nu^X(\{\cdot\})=0\}$ and $\star$ is the $\sigma$-localised form of the finite variation jump integral $*$. We further have $\indicator{\mathcal{H}_X^c}\id \in L_\sigma(\mu^X)$ by \cite[Proposition~3.8]{cerny.ruf.21.bej}. From this and $\indicator{\mathcal{H}_X^c}D\xi(0)\in L(X)$, we obtain $\indicator{\mathcal{H}_X^c}D\xi(0)\id\in L_\sigma(\mu^X)$ and $\indicator{\mathcal{H}_X^c}D\xi(0)\sint X = \indicator{\mathcal{H}_X^c}D\xi(0)\id\star\mu^X$ by \cite[Proposition~3.9]{cerny.ruf.21.bej}. The consistency of the two definitions now follows from \eqref{eq:FlightToBA241229}.
\end{remark}

\begin{lemma}\label{L:240130}
For an $\R^d$-valued semimartingale $X$ and a predictable $\R$-valued function $\xi\in\tI(X)$, the following are equivalent.
\begin{enumerate}[(i)]
\item\label{L:240130.i} $\xi\circ X$ is $\sigma$-special;
\item\label{L:240130.ii} $\int_{\R^d} |\xi(x)|\indicator{|\xi(x)|>1}F^X(\d x)$ is finite $(\P\otimes A)$-almost everywhere.
\end{enumerate}

Furthermore, if either of the two conditions holds, then  
\begin{equation}\label{eq:Cierne251205}
b^{\xi\circ X} = D\xi(0)b^{X[1]}+\frac{1}{2}\mathrm{trace} \big(D^2\xi(0)c^{X}\big) + \int_{\R^d} \left(\xi(x)-D\xi(0)h(x)\right)F^X(\d x)
\end{equation}
and $\xi\circ X$ is special if and only if $\int_{\R^d} |\xi(x)|\indicator{|\xi(x)|>1}F^X(\d x)$ integrates $A$.
\end{lemma}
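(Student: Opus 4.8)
The plan is to reduce both claims to the characterization of $\sigma$-special processes from Appendix~\ref{appx:D}, applied to the $\R$-valued semimartingale $Y:=\xi\circ X$. The starting point is to read the jumps off \eqref{eq:circ tI}: the term $\tfrac12 D^2\xi(0)\sint[X,X]^c$ is continuous, while $D\xi(0)\sint X$ jumps by $D\xi(0)\Delta X$ and $(\xi-D\xi(0)\id)*\mu^X$ jumps by $\xi(\Delta X)-D\xi(0)\Delta X$; summing gives $\Delta Y=\xi(\Delta X)$. Since $\xi$ is a predictable function, the compensator $F^{Y}$ of the jump measure is then the image of $F^X$ under $\xi$, so that $\int_{\R} g(y)\,F^{Y}(\d y)=\int_{\R^d} g(\xi(x))\,F^X(\d x)$ for every Borel $g$ with $g(0)=0$ for which either side is defined.

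For the equivalence \ref{L:240130.i}$\iff$\ref{L:240130.ii} and the specialness criterion, I would invoke the observation preceding Proposition~\ref{P:sigma-special} (from \citet*[Lemma~3.2]{kallsen.04}): the scalar $Y$ is $\sigma$-special iff $\int_{\R}|y-h_1(y)|\,F^{Y}(\d y)=\int_{\R}|y|\indicator{|y|>1}\,F^{Y}(\d y)$ is finite $(\P\otimes A)$-almost everywhere, and special iff this quantity integrates $A$, where $h_1(y)=y\indicator{|y|\le1}$ is the one-dimensional truncation. By the image-measure identity this integral equals $\int_{\R^d}|\xi(x)|\indicator{|\xi(x)|>1}\,F^X(\d x)$, which is precisely the quantity in \ref{L:240130.ii}; this settles both the equivalence and the final \enquote{special if and only if} clause.

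For the drift formula I would assume \ref{L:240130.i}, so that $b^{Y}$ is well defined via \eqref{eq:bX} and Proposition~\ref{P:sigma-special} gives $b^{Y}=b^{Y[1]}+\int_{\R}(y-h_1(y))\,F^{Y}(\d y)$. The crux is to compute $b^{Y[1]}$ by stripping the large jumps through the composition rule. The truncation $h_1$ lies in $\tI(Y)$ (its only nontrivial requirement, $(h_1-\id)\in L(\mu^{Y})$, holds because a càdlàg path has finitely many jumps of modulus exceeding $1$ on compacts) and $Dh_1(0)=1$ is locally bounded, so Proposition~\ref{P:composition} gives $Y[1]=h_1\circ Y=h_1\circ(\xi\circ X)=\xi^{(1)}\circ X$, where $\xi^{(1)}:=h_1(\xi)=\xi\indicator{|\xi|\le1}\in\tI(X)$. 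Since $\xi^{(1)}$ is bounded by $1$, the jumps of $\xi^{(1)}\circ X$ are bounded and $\xi^{(1)}\circ X$ is special; the drift formula under compatible truncations \cite[Proposition~5.2]{cerny.ruf.22.ejp} then applies, and as $\xi^{(1)}$ agrees with $\xi$ near the origin one has $D\xi^{(1)}(0)=D\xi(0)$ and $D^2\xi^{(1)}(0)=D^2\xi(0)$, so
\begin{equation*}
b^{Y[1]}=D\xi(0)b^{X[1]}+\tfrac12\mathrm{trace}\big(D^2\xi(0)c^X\big)+\int_{\R^d}\big(\xi^{(1)}(x)-D\xi(0)h(x)\big)F^X(\d x).
\end{equation*}
With $\xi^{(2)}:=\xi\indicator{|\xi|>1}$, adding $\int_{\R}(y-h_1(y))F^{Y}(\d y)=\int_{\R^d}\xi^{(2)}(x)F^X(\d x)$ and using $\xi^{(1)}+\xi^{(2)}=\xi$ yields \eqref{eq:Cierne251205}.

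The main obstacle is exactly what rules out a direct term-by-term compensation of \eqref{eq:circ tI}: under \ref{L:240130.ii} the process $X$ itself need not be $\sigma$-special, since $\xi$ may tame the large jumps of $X$ (as for $\xi=\id\wedge1$ in the main application), so neither $D\xi(0)\sint X$ nor the finite-variation jump term $(\xi-D\xi(0)\id)*\mu^X$ need admit a drift even when their sum does. Passing to $\xi^{(1)}=h_1(\xi)$ truncates in the output coordinate and turns $\xi^{(1)}\circ X$ into a genuinely special process, to which the classical drift formula applies; the missing large-jump contribution is then restored explicitly through $\int_{\R^d}\xi^{(2)}(x)F^X(\d x)$, which is finite $(\P\otimes A)$-almost everywhere precisely under \ref{L:240130.ii}.
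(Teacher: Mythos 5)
Your proposal is correct and follows essentially the same route as the paper's own proof: both establish the equivalence by pushing $F^X$ forward through $\xi$ and invoking \citet*[Lemma~3.2]{kallsen.04}, and both obtain the drift formula by splitting $\xi$ into $\xi\indicator{|\xi|\leq1}+\xi\indicator{|\xi|>1}$, identifying $(\xi\circ X)[1]$ with $(\xi\indicator{|\xi|\leq1})\circ X$, computing the drift of this truncated part, and adding back the large-jump integral $\int_{\R^d}\xi(x)\indicator{|\xi(x)|>1}F^X(\d x)$. The one deviation is that you cite \citet*[Proposition~5.2]{cerny.ruf.22.ejp} for the drift of $(\xi\indicator{|\xi|\leq1})\circ X$, whereas the paper deliberately re-derives this step by $\sigma$-localised term-by-term compensation of the \'Emery representation precisely to avoid the compatible-truncation hypothesis of that proposition; your citation is nonetheless legitimate here because $\xi\indicator{|\xi|\leq1}$ is already truncated in the output coordinate, so the compatibility requirement is vacuously satisfied.
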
 

\begin{proof}
Observe that $F^{\xi\circ X}$ is the push-forward measure of $F^X$ via the mapping $\xi$ (\cite[Corollary~5.8]{cerny.ruf.22.ejp}), hence $\int_{\R} |y|\indicator{|y|>1}F^{\xi\circ X}(\d y) = \int_{\R^d} |\xi(x)|\indicator{|\xi(x)|>1}F^X(\d x)$. The equivalence between \ref{L:240130.i} and \ref{L:240130.ii} as well as the  equivalence in the last claim now follow from \citet*[Lemma~3.2]{kallsen.04}. 
To argue the formula for $b^{\xi\circ X}$, assume that $\xi\circ X$ is $\sigma$-special. By Proposition~\ref{P:sigma-special}, $b^{\xi\circ X} = b^{(\xi\circ X)[1]}+\int_{\R} y\indicator{|y|>1}F^{\xi\circ X}(\d y)$ is well defined. Observe that $X-X[1]$ has finitely many jumps on compacts, hence $D\xi(0)\in L(X[1])$. The \'Emery formula \eqref{eq:circ tI} now yields
\[
(\xi\circ X)[1] =  D\xi(0)\sint X[1] + \frac{1}{2}D^2\xi(0)\sint [X,X]^c + (\xi\indicator{|\xi|\leq1} - D\xi(0)h) * \mu^X,
\] 
whereby one obtains by $\sigma$-localisation that
$$ b^{(\xi\circ X)[1]} = D\xi(0)b^{X[1]}+\frac{1}{2}\mathrm{trace} \big(D^2\xi(0)c^{X}\big) 
+ \int_{\R^d} \left(\xi(x)\indicator{|\xi(x)|\leq1}-D\xi(0)h(x)\right)F^X(\d x).$$
By image measure properties, $\int_{\R} y\indicator{|y|>1}F^{\xi\circ X}(\d y) = \int_{\R^d} \xi(x)\indicator{|\xi(x)|>1}F^X(\d x)$, which completes the proof of the drift formula. 
\end{proof}

We next formulate drift computation under an absolutely continuous change of measure, which will be used in the proof of Theorem~\ref{T6}.

\begin{lemma}\label{L:Metabief250115}
Let $Y$ be special such that $\Delta Y\geq -1$, $\Delta B^Y>-1$ and $M=\Exp(Y)/\Exp(B^Y)$ is a uniformly integrable martingale. Define $\Q$ by $\d\Q/\d\P=M_\infty$. For a $\P$-semimartingale $U$, the following are equivalent. 
\begin{enumerate}[(i)]
\item\label{L:Metabief250115:i} $U$ is $\Q$-$\sigma$-special.
\item\label{L:Metabief250115:ii} $U +  [U , Y]$ is $\P$-$\sigma$-special on the stochastic interval $M_->0$.
\end{enumerate}
If one of these conditions holds, then one has
\begin{equation*}
b^{U,\Q} = \frac{b^{U +  [U , Y]}}{1+ \Delta B^{Y}}  \qquad \text{on the stochastic interval $M_->0$}.
\end{equation*}
\end{lemma}
\begin{proof}
Fix a predictable set $D$ and apply \cite[Theorem~5.4]{cerny.ruf.23.spa} with $Z = \Exp(Y)$ and $V=\indicator{D}\sint U$, observing that $\Log(Z)=Y$ on the stochastic interval $M_->0$. The identity $\indicator{D}\sint(U+[U,Y])=V+[V,Y]$ yields the equivalence between \ref{L:Metabief250115:i} and \ref{L:Metabief250115:ii}. Suppose now \ref{L:Metabief250115:i} or \ref{L:Metabief250115:ii} holds and let $D$ belong to the corresponding $\sigma$-localising sequence. \cite[Theorem~5.4]{cerny.ruf.23.spa} then yields 
\[ 
\indicator{D}b^{U,\Q} = b^{V,\Q} = \frac{b^{V +  [V , Y]}}{1+ \Delta B^{Y}} = \frac{\indicator{D}b^{U +  [U , Y]}}{1+ \Delta B^{Y}}
\qquad \text{on the stochastic interval $M_->0$,}
\]
which completes the proof.
\end{proof}


\section{Optimal local expected utility strategy integrates \texorpdfstring{$R$}{R}}\label{appx:E}

\begin{proof}[Proof of Proposition~\ref{P:London241109}]
Fix $n\in\N$. Item~\ref{Cierne240808.3} of Proposition~\ref{P:Cierne240808} and Lemma~\ref{L:240130} yield that 
$(\id\wedge1)\circ (\lMMVn\sint R)$ and $(\id\wedge1)^2\circ (\lMMVn\sint R)$ are $\sigma$-special with 
$b^{(\id\wedge1)\circ (\lMMVn\sint R)} = b^{(\id\wedge1)^2\circ (\lMMVn\sint R)}$, which in view of \eqref{eq:utilMMV} gives 
\begin{equation}\label{eq:241007.2}
2b^{\utilMMV\circ (\lMMVn\sint R)} = b^{(\id\wedge1)\circ (\lMMVn\sint R)} 
=  b^{(\id\wedge1)^2\circ (\lMMVn\sint R)}
\end{equation}
and also
\begin{equation}\label{eq:Cierne250104}
b^{\id\indicator{\id\leq1}\circ (\lMMVn\sint R)} = b^{(\id^2\indicator{\id\leq1})\circ (\lMMVn\sint R)}\geq0.
\end{equation}
The latter yields the first claim. 

The easy inequalities
$0\leq -\id \indicator{\id<-1}\leq \id^2 \indicator{\id<-1}\leq (\id\wedge1)^2$
give 
$$
0\leq -b^{\id\indicator{\id< -1}\circ (\lMMVn\sint R)}  \leq  b^{(\id\wedge1)^2\circ (\lMMVn\sint R)},
$$
while \eqref{eq:Cierne250104} can be rephrased as 
\[
0\leq b^{\id\indicator{|\id|\leq1}\circ (\lMMVn\sint R)}
= b^{\indicator{\id\leq 1}\id^2\circ (\lMMVn\sint R)}-b^{\id\indicator{\id< -1}\circ (\lMMVn\sint R)} .
\]
Combining the two, one has 
$$ 
0\leq b^{\id\indicator{|\id|\leq1}\circ (\lMMVn\sint R)}\leq 2 b^{(\id\wedge1)^2\circ (\lMMVn\sint R)}.
$$
In view of $b^{(\id^2\wedge1)\circ (\lMMVn\sint R)}\leq b^{(\id\wedge1)^2\circ (\lMMVn\sint R)}$ and \eqref{eq:241007.2}, the proof of the second inequality in \eqref{eq:key inequality} is complete.
On the other hand, we have
\[
2b^{\utilMMV\circ (\lMMVn\sint R)} ={} b^{(\id\wedge1)\circ (\lMMVn\sint R)}
\leq {} b^{\id\indicator{|\id|\leq1}\circ (\lMMVn\sint R)} + b^{\indicator{\id>1}\circ (\lMMVn\sint R)}
\]
and the first inequality in \eqref{eq:key inequality} follows easily.
\end{proof}

\begin{proof}[Proof of Theorem~\ref{T1}]
Lemmata~\ref{L:armistice1} and \ref{L:armistice2} with Proposition~\ref{P:London241109} yield the equivalence of \ref{T2.i} and \ref{T2.ii}. 
Item~\ref{Cierne240808.3} of Proposition~\ref{P:Cierne240808} and Lemma~\ref{L:240130} yield that 
$(\id\wedge1)\circ (\lMMV\sint R)$ and $(\id\wedge1)^2\circ (\lMMV\sint R)$ are $\sigma$-special with 
$b^{(\id\wedge1)\circ (\lMMV\sint R)} = b^{(\id\wedge1)^2\circ (\lMMV\sint R)}$,
which in view of \eqref{eq:utilMMV} gives
\begin{equation}\label{eq:Cierne250104b}
 2\locutilMMV(\lMMV) = 2b^{\utilMMV\circ (\lMMV\sint R)} = b^{(\id\wedge1)\circ (\lMMV\sint R)} =  b^{(\id\wedge1)^2\circ (\lMMV\sint R)}.
\end{equation}
Since
$
|\utilMMV|\indicator{|\utilMMV|> 1} \leq 2\mkern2mu\id^2\indicator{\id\leq0}\leq 2 (\id\wedge1)^2,
$
and 
$b^{(\id\wedge1)^2\circ (\lMMV\sint R)}= 2b^{\utilMMV\circ (\lMMV\sint R)}$, 
 one has that 
\[
b^{|\utilMMV|\indicator{|\utilMMV|>1}\circ (\lMMV\sint R)}\leq 2 b^{(\id\wedge1)^2\circ (\lMMV\sint R)}
= 4b^{\utilMMV\circ (\lMMV\sint R)}=4\locutilMMV(\lMMV)
\] 
integrates $A$. Lemma~\ref{L:240130} now yields that $\utilMMV\circ (\lMMV\sint R)$ is special. The proof that $(\id\wedge1)\circ (\lMMV\sint R)$ and  $(\id\wedge1)^2\circ (\lMMV\sint R)$ are special is analogous. The last assertion of the theorem now follows from the identity \eqref{eq:Cierne250104b}.
\end{proof}

The following proposition details one step in the proof of Theorem~\ref{T4}, whose notation we adopt without further reference.
\begin{proposition}\label{P:Cierne250106}
For the optimal local expected quadratic utility strategy $\lMV$ let $\lMVn = \lMV\indicator{|\lMV|\leq n}$ for $n\in\N$. Then we have $b^{\id\mkern1mu\indicator{|\id|\leq1}\circ (\lMVn\sint R)}\geq0$ and   
\begin{equation}\label{eq:Cierne250106}
b^{\utilMV\circ (\lMVn\sint R)}\leq b^{(\id^2\wedge1)\circ (\lMVn\sint R)} + b^{\id\mkern1mu\indicator{|\id|\leq1}\circ (\lMVn\sint R)} 
\leq 6 b^{\utilMV\circ (\lMVn\sint R)},\qquad n\in\N.
\end{equation}
\end{proposition}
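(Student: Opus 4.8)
The plan is to run the blueprint of Proposition~\ref{P:London241109} verbatim, with the monotonized utility $\utilMMV$ replaced by $\utilMV$ and the bliss-point cap $\id\wedge1$ replaced by $\id$. First I would fix $n$ and pass to the scalar picture: since $\lMVn$ is bounded it integrates $R$, and by composition \cite[Remark~3.19]{cerny.ruf.22.ejp} every variation $g\circ(\lMVn\sint R)$ is $g\circ$ of the \emph{scalar} semimartingale $Y:=\lMVn\sint R$, whose triplet I abbreviate $(b,c,F)=(b^{Y[1]},c^{Y},F^{Y})$. The drift formula \eqref{eq:Cierne251205} then gives the closed forms
\begin{equation*}
b^{\id\mkern1mu\indicator{|\id|\leq1}\circ(\lMVn\sint R)}=b,\qquad b^{(\id^2\wedge1)\circ(\lMVn\sint R)}=c+\int_{\R}(u^2\wedge1)F(\d u),
\end{equation*}
while $b^{\id\circ(\lMVn\sint R)}=b+\int u\indicator{|u|>1}F(\d u)$ and $b^{\id^2\circ(\lMVn\sint R)}=c+\int u^2F(\d u)$, the last two presupposing that $Y$ and $\id^2\circ Y$ are $\sigma$-special.

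The single genuine input is the first-order condition. Writing $\psi(y)=\locutilMV(\lMV y)$ and using that $\lMV$ maximizes $\locutilMV$ $t$-by-$t$, I would differentiate under the integral sign (as in Step~3 of the proof of Proposition~\ref{P:Cierne240808}) to get $\psi'(1)=0$, i.e.
\begin{equation*}
\lMV b^{R[1]}-\lMV c^{R}\lMV^{\top}+\int_{\R^d}\bigl(\utilMV'(\lMV x)\lMV x-\lMV h(x)\bigr)F^{R}(\d x)=0 .
\end{equation*}
Translated into the triplet of $Y$ this collapses to the clean identity $b^{\id\circ(\lMVn\sint R)}=b^{\id^2\circ(\lMVn\sint R)}$. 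As in the monotone case it is a pointwise ($A$-almost everywhere) identity, hence survives truncation: on $\{|\lMV|\leq n\}$ one has $\lMVn=\lMV$ and it holds, while on $\{|\lMV|>n\}$ one has $\lMVn=0$ and all drifts vanish. Since $2\utilMV=2\id-\id^2$ and the drift is linear in the integrand, it follows that $b^{\utilMV\circ(\lMVn\sint R)}=\tfrac12 b^{\id^2\circ(\lMVn\sint R)}=\tfrac12\bigl(c+\int u^2F\bigr)=:P\geq0$, and solving the identity for $b$ yields $b=c+\int u^2\indicator{|u|\leq1}F+\int(u^2-u)\indicator{|u|>1}F$. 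Because $u^2-u=u(u-1)\geq0$ for $|u|>1$, this already gives the first claim $b^{\id\mkern1mu\indicator{|\id|\leq1}\circ(\lMVn\sint R)}=b\geq0$.

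With $M:=b+c+\int(u^2\wedge1)F$ denoting the middle term, the two inequalities then reduce to pointwise polynomial nonnegativity after inserting the value of $b$. A short computation gives $M-P=\tfrac32 c+\tfrac32\int u^2\indicator{|u|\leq1}F+\tfrac12\int\bigl((u-1)^2+1\bigr)\indicator{|u|>1}F\geq0$ and $6P-M=c+\int u^2\indicator{|u|\leq1}F+\int(2u^2+u-1)\indicator{|u|>1}F\geq0$, the latter because $2u^2+u-1=(2u-1)(u+1)\geq0$ once $|u|>1$. These are exactly the asserted bounds $b^{\utilMV\circ(\lMVn\sint R)}\leq M\leq 6\,b^{\utilMV\circ(\lMVn\sint R)}$; note how the algebra forces the upper constant $6$ and a lower constant $1$ (rather than the $2$ of the monotone case, where $2\utilMMV$ is tied to $\id\wedge1$ instead of to $\id^2$).

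I expect the main obstacle to be bookkeeping rather than conceptual: justifying that $\id\circ(\lMVn\sint R)$, $\id^2\circ(\lMVn\sint R)$ and $\utilMV\circ(\lMVn\sint R)$ are $\sigma$-special, so that \eqref{eq:Cierne251205} and Lemma~\ref{L:240130} apply and the drifts are truncation-independent. This requires $\int(\lMV x)^2\indicator{|\lMV x|>1}F^{R}(\d x)<\infty$, the mean--variance counterpart of item~\ref{Cierne240808.3} of Proposition~\ref{P:Cierne240808}; I would obtain it precisely as there, from $0=\locutilMV(0)\leq\locutilMV(\lMV)<\infty$ together with the two-sided quadratic decay $\utilMV(w)\leq-\tfrac12 w^2+|w|$. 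Everything downstream is the elementary polynomial algebra displayed above.
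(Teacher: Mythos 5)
Your proof is correct and rests on essentially the same approach as the paper's: the sole substantive input in both is the first-order condition for local optimality of $\lMV$ (shown to survive truncation), which gives $\sigma$-specialness and the drift identity $b^{\id\circ(\lMVn\sint R)}=b^{\id^2\circ(\lMVn\sint R)}$, hence $2b^{\utilMV\circ(\lMVn\sint R)}=b^{\id^2\circ(\lMVn\sint R)}$, after which both claims follow by elementary comparisons. The only difference is presentational: the paper manipulates drifts via pointwise functional inequalities such as $0\leq-\id\,\indicator{\id<-1}\leq\id^2\,\indicator{\id<-1}$ and $\utilMV\leq\id\wedge1$, whereas you expand all drifts in the scalar triplet $(b,c,F)$ and verify polynomial non-negativity, arriving at the identical constants.
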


\begin{proof}
Fix $n\in\N$ and let $X=\lMVn\sint R$. The first-order conditions yield that 
$\id\circ X$ and $\id^2\circ X$
are $\sigma$-special with $b^{\id\circ X} = b^{\id^2\circ X}$,
which in view of $\utilMV = \id-\frac{1}{2}\id^2$ gives 
$2b^{\utilMV\circ X} = b^{\id\circ X}  =  b^{\id^2\circ X}$
and also $b^{\id\indicator{\id\leq1}\circ X} \geq b^{(\id^2\indicator{\id\leq1})\circ X}\geq0$.
The latter yields the first claim. 

The easy inequalities $0\leq -\id \indicator{\id<-1}\leq \id^2 \indicator{\id<-1}$ give 
\[b^{\id\indicator{|\id|\leq1}\circ X} \leq b^{\id\circ X} + b^{\id^2 \indicator{\id<-1}\circ X}\leq4b^{\utilMV\circ X}.\]
Together with $b^{(\id^2\wedge1)\circ X}\leq b^{\id^2\circ X} = 2 b^{\utilMV\circ X}$, this establishes the second inequality in 
\eqref{eq:Cierne250106}. On the other hand, $\utilMV\leq\id\wedge1$ yields
$ 
b^{\utilMV\circ X} \leq{} b^{(\id\wedge1)\circ X}
\leq {} b^{\id\indicator{|\id|\leq1}\circ X} + b^{\indicator{\id>1}\circ X}
$
and the first inequality in \eqref{eq:Cierne250106} follows easily.
\end{proof}


\section{Optimal monotone mean--variance portfolios}\label{appx:F}
\begin{proof}[Proof of Theorem~\ref{T2}]
Assume $\int_0^T \locutilMMV_t(\lMMV_t) \d A_t< \infty$ and let 
\begin{equation}\label{eq:a(0)}
\aMMV(0) = \lMMV\Exp\bigs( -\left(\id\wedge1\right) \circ (\lMMV\sint R) \bigs)_{-}.
\end{equation}
Then by Theorem~\ref{T1} and Proposition~\ref{P:lambda to alpha and back},  $\aMMV(0)$ integrates $R$ and $\utilMMV\circ(\lMMV\sint R)$ is special. By Proposition~\ref{P:Cierne240807}, one has $\utilMMV(\aMMV(0)\sint R_T)\in L^1$ with  
\begin{equation}\label{eq:uMMV(0)}
\E[2\utilMMV(\aMMV(0)\sint R_T)] = 1-\Exp\bigs(-2B^{\utilMMV \circ  (\lMMV\sint R)}\bigs)_T<1.
\end{equation}

We next show admissibility of $\aMMV(0)$. Since $\utilMMV(\aMMV(0)\sint R_T)\in L^1$ and $(\utilMMV')^2=1-2\utilMMV$, Proposition~\ref{P:lambda to alpha and back} gives 
\[
\utilMMV'(\aMMV(0)\sint R_T)=\Exp(U)_T\in L^2
\]
with $U=-\left(\id\wedge1\right) \circ (\lMMV\sint R) = (\utilMMV'-1)\circ (\lMMV\sint R)$. We claim $\Delta B^U>-1$.
Indeed, the first-order conditions for $\lMMV$ in \eqref{eq:Granada241018.2} give
\[-\Delta B^U_t = \E[-\Delta U_t] = \E[(\Delta U_t)^2], \]
and $\Var(\Delta U_t)\geq0$ yields  $\Delta B^U_t\geq -1$, with $\Delta B^U_t = -1$ if and only if $\Delta U_t = -1$. The latter is ruled out by the absence of instantaneous arbitrage, which proves $\Delta B^U>-1$. \citet*[Theorem~4.1]{cerny.ruf.23.spa} now yields that $U$ is special, hence $\Exp(B^U)>0$. The multiplicatively compensated process $\frac{\Exp(U)}{\Exp(B^U)}$ is then a (square-integrable) martingale by \cite[Theorems~1.2~and~4.1]{cerny.ruf.23.spa}. In view of $\Exp(B^U)>0$, the Doob inequality yields 
\[
L^2\ni\sup_{t\in[0,T]}|\Exp(U)_t| = \sup_{t\in[0,T]}|\utilMMV'(\aMMV(0)\sint R)_t|,
\] 
hence $\aMMV(0)\in\AdmMMV$ by \eqref{eq:250111.MMV}.

As the next step, we shall show that $\aMMV(0)$ has the highest expected utility in $\AdmMMV$, i.e.,
\begin{equation}\label{eq:max utility}
\max_{\alpha\in\AdmMMV}\E[\utilMMV(\alpha\sint R_T)] = \E[\utilMMV(\aMMV(0)\sint R_T)].
\end{equation}
Fix an arbitrary $\alpha\in\AdmMMV$ and denote by  $\tau$  the first time $\utilMMV'(\alpha\sint R)=(1-\alpha\sint R)^+$ becomes zero. Observe that $\alpha\indicator{\lc 0, \tau\rc}\in\AdmMMV$ and that the terminal utility of $\alpha\indicator{\lc 0, \tau\rc}$ is at least as high as that of $\alpha$. Furthermore,
\[
Z= \left(\utilMMV'(\alpha\indicator{\lc 0, \tau\rc}\sint R)\right)^2 = 1-2\utilMMV(\alpha\indicator{\lc 0, \tau\rc}\sint R)
\]
is absorbed in zero if it ever hits zero. Admissibility of $\alpha\indicator{\lc 0, \tau\rc}$ means that $\sup_{t\in[0,T]}Z_t$ is integrable, hence $Z\geq0$ is special. 

We shall now apply the converse direction of Proposition~\ref{P:lambda to alpha and back} on a certain stochastic interval.
Denoting by $\tau^Z_c$ the first time $Z$ reaches zero continuously, we have that $\Log(Z)$ is well defined and special on $\lc 0,\tau^Z_c\lc$. Letting
\[
\lambda = \frac{\alpha\indicator{\lc 0, \tau\rc}}{1-\alpha\indicator{\lc 0, \tau\rc}\sint R_-}\indicator{\lc0,\tau^Z_c\lc},
\]
we obtain  
\[ 
B^{\Log(Z)}\indicator{\lc0,\tau^Z_c\lc} = -2B^{\utilMMV \circ (\lambda\sint R)}\indicator{\lc0,\tau^Z_c\lc}
\]
by Proposition~\ref{P:lambda to alpha and back}. By optimality of $\lMMV$ we have 
\begin{equation}\label{eq:Granada241012}
(2B^{\utilMMV \circ (\lMMV\sint R)}-2B^{\utilMMV \circ (\lambda\sint R)})\indicator{\lc0,\tau^Z_c\lc} \text{ is non-decreasing on $\lc0,\tau^Z_c\lc$,} 
\end{equation}
hence by Lemma~\ref{L:London241228} applied on $\lc0,\tau^Z_c\lc$ with $X=-2B^{\utilMMV \circ (\lambda\sint R)}$ and $X'=-2B^{\utilMMV \circ (\lMMV\sint R)}$ one obtains
\begin{align*}
\liminf_{t\uparrow\tau^Z_c}\Exp(B^{\Log(Z)})_t={}& \liminf_{t\uparrow\tau^Z_c}\Exp(-2B^{\utilMMV \circ (\lambda\sint R)})_t\\
\geq {}&\liminf_{t\uparrow\tau^Z_c}\Exp(-2B^{\utilMMV \circ (\lMMV\sint R)})_t
\geq \Exp(-2B^{\utilMMV \circ (\lMMV\sint R)})_T>0.
\end{align*} 
 
\cite[Theorem~3.1]{cerny.ruf.23.spa} now yields that 
\[ M = \frac{Z}{\Exp(B^{\Log(Z)})}\indicator{\lc0,\tau^Z_c\lc}
\] is a local martingale on $[0,T]$. Furthermore, $M$ is uniformly integrable (hence a martingale) since $\sup_{t\in[0,T]}Z_t$ is integrable and $\frac{1}{\Exp(B^{\Log(Z)})}\indicator{\lc0,\tau^Z_c\lc}\leq \frac{1}{\Exp(-2B^{\utilMMV \circ (\lMMV\sint R)})_T}$ by \eqref{eq:Granada241012}.

Recalling \eqref{eq:uMMV(0)}, we have
\begin{align*}
\frac{1-2\E\left[\utilMMV(\alpha\indicator{\lc 0, \tau\rc}\sint R_T)\right]}{1-2\E\left[\utilMMV(\aMMV(0)\sint R_T)\right]} = 
\frac{\E\left[Z_T\right]}{\Exp(-2B^{\utilMMV \circ (\lMMV\sint R)})_T} = 
\E\left[M_T\frac{\Exp(B^{\Log(Z)})_T\indicator{\{\tau^Z_c>T\}}}{\Exp(-2B^{\utilMMV \circ (\lMMV\sint R)})_T}\right]\geq \E\left[M_T\right] = 1, 
\end{align*}
where the inequality follows from \eqref{eq:Granada241012}. After rearranging, one has 
\begin{equation*}
\E\left[\utilMMV(\alpha\sint R_T)\right] \leq \E\left[\utilMMV(\alpha\indicator{\lc 0, \tau\rc}\sint R_T)\right] \leq \E\left[\utilMMV(\aMMV(0)\sint R_T)\right]
\end{equation*}
for all $\alpha\in\AdmMMV$, which proves \eqref{eq:max utility}.
\smallskip

In \eqref{eq:uMMV(0)} and \eqref{eq:max utility} we have established $2\uMMV(0)=1-\Exp\bigs(-2B^{\utilMMV \circ  (\lMMV\sint R)}\bigs)_T<1$ and shown $\aMMV(0)\in\AdmMMV$ in \eqref{eq:a(0)} to be the maximizer. The rest of the statement now follows from Proposition~\ref{P1}.
\end{proof}

\begin{proof}[Proof of Theorem~\ref{T3}]
For each $n\in\N$, let 
\begin{equation*}
\lMMVn_t = \lMMV_t\indicator{|\lMMV_t|\leq n}\indicator{\locutilMMV_t(\lMMV_t)\leq n}.
\end{equation*}
Then $\lMMVn$ is bounded, therefore it integrates $R$. By Proposition~\ref{P:Cierne240808}, we have 
$$ 0\leq \locutilMMV_t(\lMMVn_t)\uparrow  \locutilMMV_t(\lMMV_t)<\infty,\qquad t\in[0,T], \quad n\uparrow \infty.$$
An analog of Theorem~\ref{T1} with $\lMMV$ replaced by $\lMMVn$ yields that $\utilMMV(\lMMVn\,\id)\circ R$ is special. 

Letting $\alpha^n = \lMMVn\Exp\bigs( -\left(\id\wedge1\right) \circ (\lMMVn\sint R) \bigs) _{-}$, Proposition~\ref{P:Cierne240807} and monotonicity of $B^{\utilMMV(\lMMVn\,\id) \circ R}$ yield 
$$2u^n=\E[2\utilMMV(\alpha^n\sint R_T)] = 1-\Exp\bigs( -2B^{\utilMMV(\lMMVn\,\id) \circ R} \bigs)_T\leq 
1-\exp\bigs( -2B^{\utilMMV(\lMMVn\,\id) \circ R}_T \bigs).$$
Since 
\[
B^{\utilMMV(\lMMVn\,\id) \circ R}_T = \int_0^T \locutilMMV_t(\lMMVn_t)\d A_t\uparrow \int_0^T \locutilMMV_t(\lMMV_t) \d A_t= \infty,
\] 
we have $2u^n\uparrow1$, which proves the first assertion in view of \eqref{eq:utilMMV}. 

To see the second assertion, observe that by an analogy of Proposition~\ref{P1} with $\bigcup_{\lambda\geq 0}\lambda \alpha^n$ in place of $\AdmMMV$, $\alpha^n$ in place of $\aMMV(0)$, and $u^n$ in place of $\uMMV(0)$, the strategies $\beta^n = (1-2u^n)^{-1}\alpha^n$ satisfy $2V_{\MMV}(\beta^n\sint R_T) =\left(1-2u^n\right)^{-1}-1 \uparrow\infty$ as $n\uparrow \infty$ since $2u^n\uparrow1$.
\end{proof}


\section{Duality}\label{appx:G}

\begin{proof}[Proof of Proposition~\ref{P:Granada241017}]
Let $\Q$ be a probability measure with square-integrable density. The function $h_\Q:\R\to[0,\infty]$, 
\[
h_\Q(y)= \E\left[\frac{1}{2}\left(y\frac{\d\Q}{\d\P}-1\right)^2\right] + \infty\indicator{y<0},
\]
is proper convex and lower semicontinuous. Consider now the function $\uMMV^\Q:\R\to\R_+$, given by  
$$\uMMV^\Q(x)=\sup_{{X\in L^1(\Q),\, \E^{\Q}[X]\leq0}}\{\E[\utilMMV(x + X)]\},\qquad x\in\R.$$
In economic terms,  $\uMMV^\Q(x)$ measures the maximal expected monotone quadratic utility under $\P$ in a statically complete market with pricing measure $\Q$ for an agent with  initial wealth $x$. By \citet*[Lemma~4.3]{biagini.cerny.11} we have 
$\uMMV^\Q(x) = \min_{y\geq0} \left\{xy+h_\Q(y)\right\}$ for all $x\in\R$.
The Fenchel-Moreau theorem (e.g., \cite[Theorem~12.2]{rockafellar.70}) and the closedness of $h_\Q$ now yield 
\begin{equation}\label{eq:Granada241018}
\sup_{x\in\R}\,\{\uMMV^\Q(x)-xy\}=h_\Q(y),\qquad y\in\R.
\end{equation} 

For a separating measure $\Q$ we thus have 
\begin{align*}
\vMMV(0)  =   {}& \sup_{\vartheta\in\AdmMMV}\,\{V_{\MMV}(\vartheta\sint R_T)\}
\leq  {} \sup_{{X\in L^1(\Q),\, \E^{\Q}[X]\leq0}}\, \{V_{\MMV}(X)\}\\
					=   {}& \sup_{x\in\R}\,\{\uMMV^\Q(x)-x\} = h_{\Q}(1) = \frac{1}{2}\Var\left(\frac{\d\Q}{\d\P}\right),
\end{align*}
where the inequality follows by inclusion, the second equality follows from \eqref{eq:MMV2}, and the last two equalities follow from \eqref{eq:Granada241018} and the explicit form of $h_\Q$.
\end{proof} 

\begin{proof}[Proof of Theorem~\ref{T5}]
(1): By Theorems~\ref{T2} and \ref{T3}, one has $\vMMV(0)<\infty$ if and only if $\int_0^T \locutilMMV_t(\lMMV_t) \d A_t<\infty$. Letting 
$$U=-\left(\id\wedge1\right) \circ(\lMMV\sint R),$$ 
it is shown in the proof of Theorem~\ref{T2} that $U$  is special with $\Delta B^U>-1$ and that $\vfrac{\Exp(U)}{\Exp(B^U)}$ is a square-integrable martingale, which yields the claim.\smallskip

\noindent (2): By Theorem~\ref{T2}, $\aMMV(0) = \lMMV\Exp(U)_-$ maximizes the expected monotone quadratic utility. Proposition~\ref{P:lambda to alpha and back} further gives $\utilMMV'(\aMMV(0)\sint R_T) = \Exp(U)_T=\Exp(B^U)_T\vfrac{\d\QMMV}{\d \P}$. Fix $\vartheta\in\AdmMMV$. By optimality of $\aMMV(0)$, the function $\psi:[0,\infty) \to \R$ given by $\psi(y) =\E[\utilMMV( \aMMV\sint R_T+y\vartheta\sint R_T)]$ attains its maximum at $0$. By dominated convergence and optimality, one obtains 
\[
0\geq \psi_+'(0) = \E[\utilMMV'( \aMMV(0)\sint R_T)\vartheta\sint R_T]>-\infty,
\] 
which shows that $\vartheta\sint R_T\in L^1(\QMMV)$ and $\E^{\QMMV}[\vartheta\sint R_T]<0$. Since $\vartheta\in\AdmMMV$ was arbitrary, this proves that $\QMMV$ is a separating measure. \smallskip

\noindent (3): Theorem~\ref{T2} together with \eqref{eq:Granada241018.2} yields 
\[
1+2\vMMV(0) = \Exp\bigs(-2B^{\utilMMV \circ  (\lMMV\sint R)}\bigs)_T^{-1} = \Exp\bigs(B^{U}\bigs)_T^{-1}.
\]
On the other hand, the Yor formula yields
\[ 
\Exp(U)^2 = \Exp\bigs( \left((1-(\id\wedge1))^2-1\right) \circ(\lMMV\sint R) \bigs) = \Exp\bigs( -2\utilMMV \circ(\lMMV\sint R) \bigs),
\]
which together with \cite[Theorem~4.1]{cerny.ruf.23.spa} and \eqref{eq:Granada241018.2} gives
\[
\E\bigg[\Big(\frac{\d \QMMV}{\d \P}\Big)^2\bigg] = \frac{\Exp\bigs(-2B^{\utilMMV \circ  (\lMMV\sint R)}\bigs)_T}{\Exp\bigs(B^{U}\bigs)_T^2} = \Exp\bigs(B^{U}\bigs)_T^{-1}=1+2\vMMV(0).
\]
This completes the proof of item (3).\smallskip

\noindent(4): In view of (3), Proposition~\ref{P:Granada241017} yields that the density of $\QMMV$ has the smallest variance among all separating measures. Arguing by contradiction, suppose there is another separating measure, say $\Q'$ whose density has the same variance. Strict convexity of $\id^2$ now yields that  $\QMMV/2 +\Q'/2$ is another separating measure with strictly smaller variance, which contradicts the already established optimality of $\QMMV$.\smallskip

\noindent(5): By \citet*[Proposition~4.2]{cerny.ruf.22.ejp}, for $\Delta R>-1$ one has $\log(\Exp(R)) = \log(1+\id)\circ  R$. The claim follows from the explicit form of $\QMMV$ and \cite[Proposition~2.15 and Corollary~5.10]{cerny.ruf.23.spa}.
\end{proof}

\begin{proof}[Proof of Theorem~\ref{T6}]
Fix $i\in\{1,\ldots,d\}$.  Letting $V=R^{(i)}$ and $Y=-\left(\id\wedge1\right) \circ(\lMMV\sint R)$, we have $V+[V,Y] = \id_i\utilMMV'(\lMMV\mkern1mu\id)\circ R$.  Lemma~\ref{L:Metabief250115} now yields \ref{T6.ii} $\Rightarrow$ \ref{T6.i}. To argue the converse direction, Lemma~\ref{L:Metabief250115} yields that $b^{V+[V,Y]}$ is well-defined and equal to zero on all paths where $\d\QMMV/\d\P>0$, hence everywhere since $V+[V,Y]$ has independent increments under $\P$ and $\P[\d\QMMV/\d\P>0]>0$ by \cite[Lemma~4.2]{cerny.ruf.23.spa}. 
\end{proof}

\begin{proof}[Proof of Proposition~\ref{P:Metabief250115}]
Working under $\QMMV$, we have that $\vartheta\sint R$ has independent increments by Theorem~\ref{T5}\ref{T5.3} and $\vartheta\sint R_T$ is integrable by  Theorem~\ref{T5}\ref{T5.2}. By Theorem~\ref{T:250107}, the  $\sigma$-martingale $\vartheta\sint R$ is a martingale.
Next, fix $i\in\{1,\ldots,d\}$. If the jumps of $R^{(i)}$ are bounded from below, then  $R^{(i)}=1\sint R^{(i)}$ is the wealth of a deterministic admissible strategy by Lemma~\ref{L:Cierne250106}, and the claim follows. 
\end{proof}

\begin{proof}[Proof of Theorem~\ref{T7}]
Let $\Q$ be a (signed) measure with square-integrable density. Consider the function $\uMV^\Q:\R\to\R_+$, given by  
$$\uMV^\Q(x)=\sup_{{X\in L^2(\P),\, \E^{\Q}[X]=0}}\{\E[\utilMV(x + X)]\},\qquad x\in\R.$$
By the Fenchel inequality we have $\utilMV(x + X)\leq (x+X)y\frac{\d\Q}{\d\P} + \frac{1}{2}(1-y\frac{\d\Q}{\d\P})^2$ for all $y\in\R$ and $X \in L^2(\P)$, hence
$\uMV^\Q(x) \leq \inf_{y\in\R}\left\{xy +\frac{1}{2}\E\left[\left(1-y\frac{\d\Q}{\d\P}\right)^2\right]\right\},$
whereby the Fenchel--Moreau theorem (e.g., \cite[Theorem~12.2]{rockafellar.70}) gives  
\begin{equation}\label{eq:Granada251217}
\sup_{x\in\R}\,\{\uMV^\Q(x)-xy\}\leq \frac{1}{2}\E\left[\left(1-y\frac{\d\Q}{\d\P}\right)^2\right],\qquad y\in\R.
\end{equation}   
If $\Q$ is separating over $\AdmMV$, we thus have 
\begin{align*}
\vMV(0)  =   {}& \sup_{\vartheta\in\AdmMV}\,\{V_{\MV}(\vartheta\sint R_T)\}
\leq  {} \sup_{{X\in L^2(\P),\, \E^{\Q}[X]=0}}\, \{V_{\MV}(X)\}\\
					=   {}& \sup_{x\in\R}\,\{\uMV^\Q(x)-x\} \leq  \frac{1}{2}\Var\left(\frac{\d\Q}{\d\P}\right),
\end{align*}
where the first inequality follows by inclusion,  the second equality follows from \eqref{eq:MV2}, and the second inequality is obtained from \eqref{eq:Granada251217}.

Let now $U = (\utilMV'-1)\circ (\lMV\sint R) = -\lMV\sint R$. Repeating the argument in the second paragraph of the proof of Theorem~\ref{T2} yields $\Delta B^U>-1$. By Theorem~\ref{T4}, $\aMV(0) = \lMV\Exp(U)_-$ maximizes the expected quadratic utility. A calculation analogous to Proposition~\ref{P:lambda to alpha and back} gives 
$$\utilMV'(\aMV(0)\sint R_T) = \Exp(U)_T=\Exp(B^U)_T\frac{\d\QMV}{\d \P}.$$ 
 Optimality of $\aMV(0)$ yields $\E[\utilMV'(\aMV(0)\sint R_T)(\vartheta\sint R_T)]=0$ for all $\vartheta\in\AdmMV$, which shows that $\QMV$ is separating over $\AdmMV$. The equality $2\vMV(0) = \Var(\d\QMV/\d\P)$ follows by direct calculation involving expectations of stochastic exponentials of processes with independent increments. The last part of the proof is fully analogous to the proof of Theorem~\ref{T6}.
\end{proof}

\begin{proof}[Proof of Theorem~\ref{T8}]

\noindent \ref{T8.i} $\Rightarrow$ \ref{T8.ii} Observe that for $\sigma$-locally square-integrable $R$, the local utilities $\locutilMV$ and $\locutilMMV$ are finite and differentiable on the whole $\R^d$. \citet[Section~27, p.~264]{rockafellar.70} yields that $\partial\locutilMV(\lMV)=\partial\locutilMMV(\lMMV)=0$ characterizes all local optimizers. In view of $\utilMMV'-\utilMV'=(\id-1)\indicator{\id>1}$, the inequality $\lMV\Delta R\leq1$  gives 
\[0=\partial\locutilMV(\lMV)=b^{\id\utilMV'(\lMV\mkern1mu\id)\circ R}=b^{\id\utilMMV'(\lMV\mkern1mu\id)\circ R}=\partial\locutilMMV(\lMV),\] 
hence $\lMV$ is a local optimizer of $\locutilMMV$. \smallskip

\noindent \ref{T8.ii} $\Rightarrow$ \ref{T8.iii} This follows from the explicit expressions for $\vMV(0)$ and $\vMMV(0)$ in Theorems~\ref{T2} and \ref{T4}, where one may take $\lMMV=\lMV$ by \ref{T8.ii}, see also Remark~\ref{R:nonuniqueness}.\smallskip  

\noindent \ref{T8.iii} $\Rightarrow$ \ref{T8.iv} Theorems~\ref{T5} and \ref{T7} and \ref{T8.iii} yield $\Var(\frac{\d\QMV}{\d\P}) = \Var(\frac{\d\QMMV}{\d\P})$. Since $\AdmMV\subset\AdmMMV$, $\QMMV$ is a separating measure over $\AdmMV$. Strict convexity of the second moment and the minimality established in Theorem~\ref{T7} yield  $\QMMV=\QMV$. \smallskip

\noindent \ref{T8.iv} $\Rightarrow$ \ref{T8.v} Since $\QMMV$ is a probability measure, this is trivial.\smallskip

\noindent\ref{T8.v} $\Rightarrow$ \ref{T8.i} The martingale property of $\vfrac{\Exp( -\lMV\sint R )}{\Exp(B^{-\lMV\sint R})}$ and $\frac{\d\QMV}{\d\P}\geq0$ give $\Exp(-\lMV\sint R)\geq0$ as well as $\P[\Exp( -\lMV\sint R )_t>0]>0$ for all $t\in[0,T]$. This then yields $\lMV_t\Delta R_t\leq1$ on the set $\{\Exp( -\lMV\sint R )_t>0\}$, hence everywhere by independence of increments. \smallskip

\noindent The sequel follows by Theorem~\ref{T7} in view of the first-order condition $b^{\id\utilMV'(\lMV\mkern1mu\id)\circ R}=0$ argued in the proof of \ref{T8.i} $\Rightarrow$ \ref{T8.ii}.
\end{proof}

\section{Auxiliary statements}
\begin{lemma}\label{L:SDE}
For any semimartingale $X$, the SDE
\begin{equation}\label{SDE}
 W = 1+ W^+_-\sint X 
\end{equation}
has a unique solution 
\begin{equation}\label{eq:Granada240927}
W =  1 + \Exp((-1\vee\id)\circ X)_-\sint X
\end{equation}
that furthermore satisfies 
$$ W^+ = \Exp((-1\vee\id)\circ X).$$
\end{lemma}
\begin{proof}
Since the function $x\mapsto x^+$ is Lipschitz, the SDE \eqref{SDE} has a unique (semimartingale) solution $W$ by \citet*[Theorem~V.6]{protter.05}. Denote by $\tau$ the first time $\Delta X\leq -1$. Then  $\Exp(X^\tau)^+=\Exp(X^\tau)\indicator{\lc0,\tau\lc}$ by \citet*{js.03} and $\Exp(X^\tau)^+_-=\Exp(X^\tau)_-\indicator{\lc0,\tau\rc}$. Since 
$$\Exp(X^\tau) =  1+\Exp(X^\tau)_- \sint X^\tau = 1+\Exp(X^\tau)_-\indicator{\lc0,\tau\rc}\sint X,$$ 
we have $W=\Exp(X^\tau)$. Observe that $-1\vee\id$ is a universal representing function in the sense of \cite[Definition~3.4]{cerny.ruf.22.ejp}, hence 
$(-1\vee\id)\circ X$ is well defined. In view of \cite[Proposition~3.13(1)]{cerny.ruf.22.ejp}, $\tau$ is also the first time $(-1\vee\id)\circ X$ jumps by $-1$. By \citet*{js.03} again, $\Exp((-1\vee\id)\circ X)=\Exp(X)$ on $\lc0,\tau\lc$ and $\Exp((-1\vee\id)\circ X)_\tau=0$, hence 
\[ W^+ = \Exp(X^\tau)^+ =\Exp(X^\tau)\indicator{\lc0,\tau\lc} = \Exp((-1\vee\id)\circ X).\]
In combination with \eqref{SDE}, this yields \eqref{eq:Granada240927}. 
\end{proof}

We next characterize integrability in terms of predictable characteristics. We first present a somewhat simplified version of \citet*[Theorem~III.6.30]{js.03}, where we additionally interpret the predictable criterion as a certain predictable variation. 
\begin{theorem}\label{T:int}
For an  $\R^d$-valued predictable process $\zeta$ and  an $\R^d$-valued semimartingale $X$, the following are equivalent.
\begin{enumerate}[(i)]
\item\label{Tint.i} $\zeta\in L(X)$.
\item\label{Tint.ii} The predictable process $\eta$ defined by
\begin{align*}
\eta = \zeta c^X\zeta^\top + \int_{\R^d} ((\zeta x)^2\wedge1)F^X(\d x) 
+ \left| \zeta b^{X[1]} + \int_{\R^d} \left(\zeta x \indicator{|\zeta x|\leq 1} - \zeta h(x) \right)F^X(\d x)\right|
\end{align*} 
integrates $A$.
\end{enumerate}
Furthermore, if either of these conditions holds, then  one has 
\[
\eta = b^{(\id^2\wedge1)\circ (\zeta\sint X)} + \big|b^{\id\indicator{|\id|\leq1}\circ (\zeta\sint X)}\big|.
\]
\end{theorem}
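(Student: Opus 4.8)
The plan is to read off the equivalence \ref{Tint.i}$\iff$\ref{Tint.ii} from the classical integrability criterion \cite[Theorem~III.6.30]{js.03}, and to prove the concluding identity through the drift formula \eqref{eq:Cierne251205} of Lemma~\ref{L:240130}. The guiding observation is that $\eta$ is a sum of three \emph{non-negative} predictable processes --- the continuous part $\zeta c^X\zeta^\top$, the jump part $\int_{\R^d}((\zeta x)^2\wedge1)F^X(\d x)$, and the absolute drift $|\zeta b^{X[1]}+\int_{\R^d}(\zeta x\indicator{|\zeta x|\leq1}-\zeta h(x))F^X(\d x)|$ --- so that $\eta$ integrates $A$ if and only if each summand does, and these three requirements are exactly the ones appearing in \cite[Theorem~III.6.30]{js.03}.

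I would establish \ref{Tint.i}$\Rightarrow$\ref{Tint.ii} together with the identity in a single computation. Assuming $\zeta\in L(X)$, the integral $\zeta\sint X$ is a genuine $\R$-valued semimartingale. Both $\id^2\wedge1$ and $\id\indicator{|\id|\leq1}$ vanish at $0$, are twice differentiable there, and are bounded; hence they are universal representing functions lying in $\tI(\zeta\sint X)$, and because their modulus does not exceed $1$, Lemma~\ref{L:240130} shows that $(\id^2\wedge1)\circ(\zeta\sint X)$ and $\id\indicator{|\id|\leq1}\circ(\zeta\sint X)$ are special. Composition (Proposition~\ref{P:composition}) gives $(\id^2\wedge1)\circ(\zeta\sint X)=((\zeta\id)^2\wedge1)\circ X$ and $\id\indicator{|\id|\leq1}\circ(\zeta\sint X)=(\zeta\id\,\indicator{|\zeta\id|\leq1})\circ X$, and evaluating \eqref{eq:Cierne251205} with $D\xi(0)=0$, $D^2\xi(0)=2\zeta^\top\zeta$ in the first case and $D\xi(0)=\zeta$, $D^2\xi(0)=0$ in the second yields
\begin{align*}
b^{(\id^2\wedge1)\circ(\zeta\sint X)} &= \zeta c^X\zeta^\top + \int_{\R^d}\left((\zeta x)^2\wedge1\right)F^X(\d x),\\
b^{\id\indicator{|\id|\leq1}\circ(\zeta\sint X)} &= \zeta b^{X[1]} + \int_{\R^d}\left(\zeta x\indicator{|\zeta x|\leq1}-\zeta h(x)\right)F^X(\d x).
\end{align*}
Adding the first line to the absolute value of the second reproduces $\eta$; since each process is special, its drift integrates $A$, so $\eta$ integrates $A$. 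This proves \ref{Tint.ii} and the concluding identity at once.

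For the reverse implication \ref{Tint.ii}$\Rightarrow$\ref{Tint.i} the drift calculus is unavailable, since it presupposes that $\zeta\sint X$ already exists; here I would appeal to the sufficiency direction of \cite[Theorem~III.6.30]{js.03}, whereby the three integrability conditions force $\zeta\in L(X)$. The main obstacle, and the only genuinely delicate point, is the bookkeeping of truncations: the characteristics of $X$ are taken relative to the $\R^d$-truncation $h$ of \eqref{eq:truncation}, whereas the scalar process $\zeta\sint X$ carries its own truncation at level $1$, and one must check that the drift condition of \cite[Theorem~III.6.30]{js.03} coincides with our third summand. The correction $\zeta x\indicator{|\zeta x|\leq1}-\zeta h(x)$ is precisely what reconciles the two truncations --- it is supported on the set where the $\R^d$- and scalar truncations disagree --- so the care lies in confirming this matching rather than in any deep estimate.
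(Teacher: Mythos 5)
Your forward implication \ref{Tint.i}$\Rightarrow$\ref{Tint.ii}, together with the concluding identity, is essentially the paper's own argument: composition to rewrite the two variations as $((\zeta\id)^2\wedge1)\circ X$ and $\zeta\id\,\indicator{|\zeta\id|\leq1}\circ X$, specialness because these processes have bounded jumps (equivalently, Lemma~\ref{L:240130} with its integrability condition trivially satisfied), and then the drift formula \eqref{eq:Cierne251205}. That half is complete and correct.

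The gap is in \ref{Tint.ii}$\Rightarrow$\ref{Tint.i}. You delegate it to the sufficiency part of \cite[Theorem~III.6.30]{js.03} and assert that all that remains is to ``confirm the matching'' of the third summand of $\eta$ with the drift condition there, ``rather than any deep estimate.'' But $\eta$ as defined in \ref{Tint.ii} --- with the componentwise truncation $h$ of \eqref{eq:truncation} and with jump integrals taken over all of $\R^d$ --- is not literally the hypothesis of that theorem, and the two expressions do not coincide by bookkeeping alone: they differ by terms supported on $\{|x|>1\}$. The paper's proof accordingly performs two steps that your plan omits. First, invariance of truncations replaces $b^{X[1]}$ by $b^{X[\tilde h]}$, where $\tilde h=\id\,\indicator{|\id|\leq1}$ is the norm truncation. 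Second --- and this is precisely the estimate you declare unnecessary --- the discrepancy terms must be absorbed: one bounds
\[
\left|\int_{\R^d} \zeta x \,\indicator{|\zeta x|\leq 1}\left(\indicator{|x|\leq1}-1\right)F^X(\d x)\right| \leq \int_{\R^d} \indicator{|x|>1}F^X(\d x),
\]
and the right-hand side integrates $A$ because $\indicator{|\id|>1}\in L(\nu^X)$, i.e., jumps of size larger than one are finite in number. Only after adding and subtracting such terms and invoking the triangle inequality does one arrive at an expression --- with truncation $\tilde h$ and jump integrals restricted to $\{|x|\leq1\}$ --- to which \cite[Theorem~III.6.30]{js.03} applies. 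These steps are elementary, but they are the actual content of the reverse direction; your proposal identifies the right tool and the right obstacle, yet mischaracterizes the obstacle as a tautological check and leaves it unresolved.
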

\begin{proof}
\ref{Tint.i}$\implies$\ref{Tint.ii} and the sequel:\ $\zeta\in L(X)$ yields that $Y=\zeta\sint X$ is a semimartingale. Since $(\id^2\wedge1)\in\tI(\zeta\sint X)$ and $\id\indicator{|\id|\leq 1}\in\tI(\zeta\sint X)$, composition (Proposition~\ref{P:composition}) yields that $(\zeta\mkern2mu\id)^2\wedge1\in\tI(X)$, $\zeta\mkern2mu\id\indicator{|\zeta\mkern1mu\id|\leq1}\in\tI(X)$ and 
\begin{align*}
\left(\id^2\wedge1\right)\circ (\zeta\sint X) ={}& \left((\zeta\mkern2mu\id)^2\wedge1\right)\circ X,\\
\id\indicator{|\id|\leq1}\circ (\zeta\sint X) ={}& \zeta\mkern2mu\id\indicator{|\zeta\mkern1mu\id|\leq1}\circ X.
\end{align*}
These two semimartingales are special since their jumps are bounded. An explicit expression for the predictable compensator of these two quantities via Lemma~\ref{L:240130} yields \ref{Tint.ii} and the sequel. 

\ref{Tint.ii}$\implies$\ref{Tint.i} Let $\tilde h = \id\mkern2mu\indicator{|\id|\leq1}$ and $X[\tilde h] = \tilde h\circ X$. By invariance of truncations, one has 
\[
\eta = \zeta c^X\zeta^\top + \int_{\R^d} ( (\zeta x)^2\wedge1)F^X(\d x) 
+ \left| \zeta b^{X[\tilde h]} + \int_{\R^d} \left(\zeta x \indicator{|\zeta x|\leq 1} - \zeta \tilde h(x) \right)F^X(\d x)\right|.
\]
Observe that $\indicator{|\id|> 1}\in L(\nu^X)$, hence 
\[ \left|\int_{\R^d} \zeta x \indicator{|\zeta x|\leq 1}(\indicator{|x|\leq1}-1)F^X(\d x)\right| \leq \int_{\R^d} \indicator{|x|>1}F^X(\d x)\]
integrates $A$. Since $\eta$ integrates $A$, the triangle inequality yields
\[\zeta c^X\zeta^\top + \int_{\R^d} ( (\zeta x)^2\wedge1)\indicator{|x|\leq1}F^X(\d x) 
+ \left| \zeta b^{X[\tilde h]} + \int_{\R^d} \left(\zeta x \indicator{|\zeta x|\leq 1}\indicator{|x|\leq1} - \zeta \tilde h(x) \right)F^X(\d x)\right|  \]
integrates $A$. \citet*[Theorem~III.6.30]{js.03} now gives \ref{Tint.i}.
\end{proof}

\begin{corollary}\label{C:integrability}
Let $\zeta$ be an $\R^d$-valued predictable process and $X$ an $\R^d$-valued semimartingale. Then, for each $n\in\N$, $\zeta^n = \zeta\indicator{|\zeta|\leq n}$ integrates $X$ and the following are equivalent.
\begin{enumerate}[(i)]
\item $\zeta\in L(X)$.
\item $\lim_{n\uparrow \infty} \left(b^{(\id^2\wedge1)\circ (\zeta^n\sint X)} + \big|b^{\id\indicator{|\id|\leq1}\circ (\zeta^n\sint X)}\big|\right)$ integrates $A$.
\end{enumerate}
\end{corollary}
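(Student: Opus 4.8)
The plan is to read off the corollary directly from Theorem~\ref{T:int} by applying that theorem to each truncation $\zeta^n=\zeta\indicator{|\zeta|\leq n}$ and then letting $n\uparrow\infty$. First I would dispose of the preliminary assertion: since $\zeta^n$ is bounded (by $n$ in the chosen norm), it integrates every semimartingale, so $\zeta^n\in L(X)$. This both proves the first claim of the corollary and makes the ``furthermore'' part of Theorem~\ref{T:int} available for each $\zeta^n$.

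Next, for fixed $n$ I would apply Theorem~\ref{T:int} with $\zeta$ replaced by $\zeta^n$. Writing
\[
\eta^n = \zeta^n c^X (\zeta^n)^\top + \int_{\R^d}\bigl((\zeta^n x)^2\wedge1\bigr)F^X(\d x) + \Bigl|\zeta^n b^{X[1]} + \int_{\R^d}\bigl(\zeta^n x\indicator{|\zeta^n x|\leq1} - \zeta^n h(x)\bigr)F^X(\d x)\Bigr|,
\]
the ``furthermore'' clause identifies $\eta^n = b^{((\zeta^n\mkern2mu\id)^2\wedge1)\circ X} + \bigl|b^{\zeta^n\mkern2mu\id\indicator{|\zeta^n\mkern1mu\id|\leq1}\circ X}\bigr|$, using exactly the composition identities $(\id^2\wedge1)\circ(\zeta^n\sint X)=((\zeta^n\mkern2mu\id)^2\wedge1)\circ X$ and $\id\indicator{|\id|\leq1}\circ(\zeta^n\sint X)=\zeta^n\mkern2mu\id\indicator{|\zeta^n\mkern1mu\id|\leq1}\circ X$ established in the proof of Theorem~\ref{T:int}. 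Thus the bracketed process appearing inside the limit in condition~(ii) is precisely $\eta^n$.

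The crucial observation is then that $\eta^n=\eta\indicator{|\zeta|\leq n}$, where $\eta$ is the process that Theorem~\ref{T:int} associates with the \emph{full} process $\zeta$. Indeed, on $\{|\zeta|\leq n\}$ one has $\zeta^n=\zeta$, so substituting into the displayed expression gives $\eta^n=\eta$ there, while on $\{|\zeta|>n\}$ one has $\zeta^n=0$ and every summand of $\eta^n$ vanishes. Since $\eta\geq0$ and $\indicator{|\zeta|\leq n}\uparrow1$ pointwise (as $\zeta$ is finite-valued), it follows that $\eta^n\uparrow\eta$, so the monotone limit in~(ii) exists and equals $\eta$. At this point I would record that $\eta$ is finite $(\P\otimes A)$-almost everywhere: the Gaussian term and the truncated second-moment term are manifestly finite for finite $\zeta$, while the drift integrand cancels to first order in $|x|$ near the origin and is bounded against the finite mass $F^X(\{|x|>\varepsilon\})$ away from it; this guarantees that the limit $\eta^n\uparrow\eta$ is meaningful rather than identically $+\infty$.

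Finally I would conclude by invoking Theorem~\ref{T:int} for $\zeta$ itself: the equivalence there states that $\zeta\in L(X)$ holds if and only if $\eta$ integrates $A$. Combining this with the identification $\lim_{n\uparrow\infty}\eta^n=\eta$ obtained above yields the equivalence of~(i) and~(ii). No genuine obstacle arises in this argument---the entire analytic content is supplied by Theorem~\ref{T:int}---so the only points requiring care are the bookkeeping of the composition identities for the truncated strategies and the verification that $\eta$ is pointwise finite, which together make the passage to the monotone limit legitimate.
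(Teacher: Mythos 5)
Your proof is correct and follows essentially the same route as the paper's: truncate, apply Theorem~\ref{T:int} to each bounded $\zeta^n$ to identify the bracketed process in (ii) as $\eta^n$, observe $\eta^n=\eta\indicator{|\zeta|\leq n}\uparrow\eta$, and conclude by Theorem~\ref{T:int} applied to $\zeta$ itself. Your additional verification that $\eta$ is pointwise finite is harmless but not needed, since the monotone limit exists in $[0,\infty]$ and condition (ii) (integrability against $A$) already forces finiteness $(\P\otimes A)$-almost everywhere.
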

\begin{proof}
The boundedness of $\zeta^n$ for each $n\in\N$ yields $\zeta^n\in L(X)$. Theorem~\ref{T:int} with $\zeta^n$ in place of $\zeta$ now yields that 
\begin{align*}
\eta^n ={}& b^{(\id^2\wedge1)\circ (\zeta^n\sint X)} + \big|b^{\id\indicator{|\id|\leq1}\circ (\zeta^n\sint X)}\big|\\
={}&  \zeta^n c^X(\zeta^n)^\top + \int_{\R^d} ( (\zeta^n x)^2\wedge1)F^X(\d x) 
+ \left| \zeta^n b^{X[1]} + \int_{\R^d} \left(\zeta^n x \indicator{|\zeta^n x|\leq 1} - \zeta^n h(x) \right)F^X(\d x)\right|
\end{align*}
is well defined. In view of $0\leq\eta^n\uparrow\eta$, the claim follows by Theorem~\ref{T:int}.
\end{proof}

\begin{lemma}\label{L:London241228}
Suppose $X$ and $X'$ are $\R$-valued semimartingales such that $\Delta X'>-1$ and $X-X'$ is non-decreasing. Then $\frac{\Exp(X)}{\Exp(X')}$ is non-decreasing. 
\end{lemma}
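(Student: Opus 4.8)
The plan is to show that the strictly positive ratio $Q := \Exp(X)/\Exp(X')$ is itself the stochastic exponential $\Exp(Y)$ of a \emph{non-decreasing} semimartingale $Y$. Once this is in place, the identity $\d Q = Q_-\,\d Y$ together with $Q_->0$ gives $\d Q\ge 0$, which is the claim.

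First I would record the positivity facts. Since $\Delta X' > -1$, the Doléans-Dade exponential $\Exp(X')$ is strictly positive with $\Exp(X')_- > 0$. Because $X - X'$ is non-decreasing it has non-negative jumps, $\Delta(X - X') \ge 0$, whence $\Delta X = \Delta X' + \Delta(X - X') > -1$ as well; consequently $\Exp(X) > 0$ with $\Exp(X)_- > 0$ too. Thus $Q$ is a strictly positive semimartingale with strictly positive left limits, so it suffices to exhibit a non-decreasing exponent $Y$ with $Q=\Exp(Y)$.

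Next I would compute that exponent. Using the inverse formula $1/\Exp(X') = \Exp(\check X')$ with $\check X' = -X' + [X',X']^c + \sum_{s\le\cdot}(\Delta X'_s)^2/(1+\Delta X'_s)$, and Yor's product formula $\Exp(U)\Exp(V) = \Exp(U + V + [U,V])$, one gets $Q = \Exp(Y)$ for $Y = X + \check X' + [X, \check X']$. The crucial structural input is that $X - X'$, being of finite variation, has vanishing continuous local-martingale part, so $X^c = (X')^c$. This forces all continuous quadratic-(co)variation corrections to cancel, since $[X^c, (X')^c] = [(X')^c, (X')^c] = [X',X']^c$, leaving $Y = (X - X') + (\text{a pure-jump process})$, which is of finite variation. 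The key computation is then the jump: writing $a = \Delta X_s$ and $b = \Delta X'_s$ (so $b > -1$ and $a \ge b$), the jump of $Y$ simplifies algebraically to $\Delta Y_s = (a - b)/(1 + b) = \Delta(X - X')_s/(1 + \Delta X'_s)\ge 0$, while the continuous part of $Y$ equals $(X - X')^{\mathrm{cont}}$, which is non-decreasing. Hence $Y = (X-X')^{\mathrm{cont}} + \sum_{s\le\cdot}\Delta(X-X')_s/(1+\Delta X'_s)$ is non-decreasing, and $\d Q = Q_-\,\d Y\ge 0$ finishes the argument.

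I expect the main obstacle to be the bookkeeping in identifying $Y$ and confirming it is of finite variation: the pieces $X$ and $\check X'$ individually carry local-martingale parts, and these cancel only after invoking $X^c = (X')^c$, which is precisely where the hypothesis that $X-X'$ is non-decreasing enters at the level of the continuous part. One must also check that the pure-jump correction converges to a genuine finite-variation process, which is guaranteed by the standard theory of inverse stochastic exponentials under $\Delta X' > -1$. Everything else collapses to the one-line jump identity $\Delta Y_s = \Delta(X-X')_s/(1+\Delta X'_s)$.
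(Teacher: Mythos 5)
Your proof is correct and follows essentially the same route as the paper: both exhibit the ratio $\Exp(X)/\Exp(X')$ as a single stochastic exponential $\Exp(Y)$ whose exponent $Y=(X-X')^c+\sum_{s\leq\cdot}\Delta(X-X')_s/(1+\Delta X'_s)$ is non-decreasing, and this is exactly the process your computation produces. The only difference is instrumental: where you invoke the classical Dol\'eans-Dade inverse formula and Yor's product formula and cancel the continuous brackets by hand (via $X^c=(X')^c$, which is where the finite-variation hypothesis on $X-X'$ enters), the paper obtains the same exponent in one step from the generalized Yor formula and the \'Emery formula of its $\circ$-calculus.
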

\begin{proof}
By the generalised Yor formula \cite[Proposition~4.1]{cerny.ruf.22.ejp} and composition (Proposition~\ref{P:composition}), we have 
\[
\frac{\Exp(X)}{\Exp(X')}=\Exp\bigg(\bigg(\frac{1+\id_1}{1+\id_2}-1\bigg)\circ(X,X')\bigg)=\Exp\bigg(\frac{\id_1}{1+\id_2}\circ(X-X',X')\bigg).
\] 
Since $X-X'$ is of finite variation, the \'Emery formula \eqref{eq:circ tI} yields  
\[
\frac{\id_1}{1+\id_2}\circ(X-X',X') = (X-X')^c+[X',X']^c+\sum_{t\leq\cdot}\frac{\Delta (X-X')_t}{1+\Delta X'_t}
\] 
and the claim follows since the right-hand side is non-decreasing.
\end{proof}

\begin{lemma}\label{L:Cierne250106}
Let $X$ be an $\R$-valued semimartingale with independent increments starting at $0$ such that $(\id^2\indicator{\id<-1})\circ X$ is special. Then $\sup_{t\in[0,T]}X_t^-\in L^2$. 
\end{lemma}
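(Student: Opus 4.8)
The plan is to split $X$ according to the size of its jumps and to exploit that upward jumps can only help the negative part. Writing $P=(\id\,\indicator{\id>1})\circ X$ and $Q=(\id\,\indicator{\id<-1})\circ X$ for the accumulated large positive and large negative jumps, and $\bar X=X-P-Q$ for the part whose jumps are bounded by $1$ in absolute value, one has $X=\bar X+P+Q$ with $P\ge0$ non-decreasing and $Q\le0$ non-increasing. Since $\int_{\R}(|x|^2\wedge1)F^X(\d x)$ integrates $A$, so does the large-jump intensity $\int_{\R}\indicator{|x|>1}F^X(\d x)$, and hence $P$ and $Q$ are well-defined finite-variation pure-jump processes. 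Dropping the non-negative $P$ gives $X_t\ge\bar X_t+Q_t$, whence $X_t^-\le(\bar X_t)^-+|Q_t|$; as $|Q|$ is non-decreasing this yields $\sup_{t\le T}X_t^-\le\sup_{t\le T}(\bar X_t)^-+|Q_T|$. It thus suffices to show that $|Q_T|$ and $\sup_{t\le T}(\bar X_t)^-$ both lie in $L^2$.

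For the large negative jumps I would use the hypothesis. Note that $[Q,Q]=(\id^2\indicator{\id<-1})\circ X$, so by Lemma~\ref{L:240130} the assumption that $(\id^2\indicator{\id<-1})\circ X$ is special means precisely that $C_T:=\int_0^T\!\int_{\R}x^2\indicator{x<-1}F^X(\d x)\,\d A_t<\infty$, and $\E[[Q,Q]_T]=C_T$. Since $X$ has independent increments, its characteristics are deterministic (Subsection~\ref{SS:2.1}), so $\Var(Q_T)\le\E[[Q,Q]_T]=C_T$ and, using $|x|\le x^2$ on $\{x<-1\}$, also $|\E[Q_T]|\le C_T$. Hence $\E[Q_T^2]=\Var(Q_T)+(\E[Q_T])^2\le C_T+C_T^2<\infty$, i.e. $|Q_T|\in L^2$.

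For the bounded part, $\bar X$ has jumps bounded by $1$ and is therefore special; write $\bar X=B+M$ with $B=B^{\bar X}$ its (deterministic) predictable drift and $M=\bar X-B$ a local martingale with $M_0=0$. Determinism of the characteristics gives $\sup_{t\le T}|B_t|\le\Var(B)_T<\infty$ and $|\Delta B|\le1$. Using the Kunita--Watanabe bound $[M,M]\le2[\bar X,\bar X]+2[B,B]$ together with $[\bar X,\bar X]=\id^2\circ\bar X$, Lemma~\ref{L:240130} yields $\E[[\bar X,\bar X]_T]=\int_0^T\!\big(c^X+\int_{|x|\le1}x^2F^X(\d x)\big)\d A_t<\infty$, while $[B,B]_T=\sum_{t\le T}(\Delta B_t)^2\le\Var(B)_T<\infty$. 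Thus $\E[[M,M]_T]<\infty$, so $M$ is a square-integrable martingale and Doob's $L^2$-inequality gives $\E[\sup_{t\le T}M_t^2]\le4\E[M_T^2]<\infty$. Consequently $\sup_{t\le T}(\bar X_t)^-\le\sup_{t\le T}|B_t|+\sup_{t\le T}|M_t|\in L^2$, and combining with the previous paragraph yields $\sup_{t\le T}X_t^-\in L^2$.

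The main obstacle is the bookkeeping at fixed times of discontinuity, where $B$ may jump and the cross term $[\bar X,B]$ is non-trivial; the Kunita--Watanabe estimate and the determinism of the characteristics (which makes $B$ and $C_T$ bounded constants) are exactly what make the $L^2$ bound for $M$ go through, while everything else reduces to routine compensator bookkeeping via Lemma~\ref{L:240130}.
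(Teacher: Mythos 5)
Your proposal is correct in substance and rests on the same core mechanism as the paper's proof: discard the large positive jumps (which can only help $X^-$), compensate what remains, and control the running supremum of the martingale part via Doob's $L^2$ inequality, the deterministic drift (a consequence of independent increments) contributing only a constant. The difference lies in the decomposition. The paper removes only the jumps $\geq 1$, setting $Y=(\id\,\indicator{\id<1})\circ X$ so that $\sup_{t\le T}X_t^-\le\sup_{t\le T}Y_t^-$; it then notes that $Y$ is special (the hypothesis makes the large negative jumps integrable), that $\id^2\circ Y=(\id^2\indicator{\id<-1})\circ X+\id^2\indicator{|\id|\le1}\circ X$ is special, and applies Doob \emph{once} to $Y-B^Y$ via $\E\big[\big(\sup_{t\le T}|Y_t-B^Y_t|\big)^2\big]\le4\,\E\big[\id^2\circ(Y-B^Y)_T\big]\le4\,B^{\id^2\circ Y}_T$. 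You instead split off the large negative jumps $Q=(\id\,\indicator{\id<-1})\circ X$ and treat them by a separate moment bound, applying Doob only to the bounded-jump part $\bar X$. Your route has the expository merit of isolating exactly where the hypothesis enters (only in the $Q$ estimate; the $\bar X$ part uses generic facts about characteristics), at the cost of an extra estimate and the Kunita--Watanabe bookkeeping for $[M,M]$.

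That extra estimate is where your write-up has its one genuine soft spot. You assert $\Var(Q_T)\le\E[[Q,Q]_T]$ ``since the characteristics are deterministic,'' and then compute $\E[Q_T^2]=\Var(Q_T)+(\E[Q_T])^2$. But at that point neither $Q_T\in L^2$ nor $Q_T\in L^1$ has been established, so the variance and the mean are not yet known to be defined, and the inequality itself is not immediate: because of possible fixed times of discontinuity, $Q-B^Q$ need not have $[Q-B^Q,Q-B^Q]=[Q,Q]$, and the correct statement is $\Var(Q_T)=\E[[Q,Q]_T]-\sum_{s\le T}(\Delta B^Q_s)^2$, obtained from the compensated jump integral. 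The patch is standard: first $\E[|Q_T|]=\int_0^T\int_\R|x|\indicator{x<-1}F^X(\d x)\,\d A_t\le C_T$ by the compensator formula for nonnegative integrands, so $Q_T\in L^1$ and the identity for $\E[Q_T^2]$ holds in $[0,\infty]$; then truncate, $Q^n=(\id\,\indicator{-n<\id<-1})\circ X$, for which $Q^n-B^{Q^n}$ is a genuine $L^2$ martingale with $\Var(Q^n_T)=\E[[Q^n,Q^n]_T]-[B^{Q^n},B^{Q^n}]_T\le C_T$, and let $n\uparrow\infty$ using Fatou. The paper's decomposition sidesteps this entirely: the large negative jumps sit inside the single martingale $Y-B^Y$, whose second moment is controlled directly by the compensator of $\id^2\circ Y$ — that is precisely what the single Doob application buys.
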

\begin{proof}
Observe that $Y=(\id\indicator{\id\leq 1})\circ X$ has independent increments and is special. Since $\sup_{t\in[0,T]}X_t^-\leq \sup_{t\in[0,T]}Y_t^-$, it suffices to show the statement with $Y$ in place of $X$. We have $Y^-\leq (Y-B^Y)^-+(B^Y)^-$, hence
\[ \sup_{t\in[0,T]}Y_t^- \leq \sup_{t\in[0,T]}|Y_t-B^Y_t| + \sup_{t\in[0,T]}|B^Y_t|,\]
where the second term on the right-hand side is a deterministic constant. Since $\id^2\indicator{|\id|\leq1}\circ X$ has bounded jumps, we have $\id^2\circ Y = (\id^2\indicator{\id<-1})\circ X + \id^2\indicator{|\id|\leq1}\circ X$ is special.  
The Doob inequality and the fact that $\id^2\circ Y$ has independent increments now yield
\[\E\left[\left(\textstyle\sup_{t\in[0,T]}|Y_t-B^Y_t|\right)^2\right] \leq 4\E[\id^2\circ(Y-B^Y)_T] \leq 4B^{\id^2\circ Y}_T, \]
which completes the proof.
\end{proof}

The next theorem yields a simple proof that a semimartingale with independent increments is a local martingale if and only if it is a martingale, a claim previously established in \citet*[Theorem~5.8]{shiryaev.cherny.02}.

\begin{theorem}[Integrability of processes with independent increments] \label{T:250107}
For an $\R$-valued semimartingale $X$ with independent increments starting at $0$, the following are equivalent.
\begin{enumerate}[(i)]
\item\label{T:250107.1} $X^T$ is special.
\item\label{T:250107.2} $\sup_{t\in[0,T]} |X_t|\in L^1$.
\item\label{T:250107.3} $X_T\in L^1$. 
\end{enumerate}
Moreover, if one of these conditions holds, then $\E[X_t] = B^X_t$ for all $ t\in[0,T]$. In particular, if $X^T$ is a $\sigma$-martingale, then any of these conditions implies that $X^T$ is a martingale.
\end{theorem}

\begin{proof}
We first establish that the sum of two independent random variables is integrable if and only if each random variable is integrable. Indeed, let $U$ and $V$ be independent. Consider $v\geq0$ such that $\P[|V|\leq v]>0$. Then
\begin{align*}
\E[|U|] = \E\big[|U| \,\big|\, |V|\leq v\big]  = \E\big[|U+V-V| \,\big|\, |V|\leq v\big]\leq \E\big[|U+V| \,\big|\, |V|\leq v\big] + v<\infty,
\end{align*}
which asserts the claim.
\smallskip

\noindent \ref{T:250107.1} $\Rightarrow$ \ref{T:250107.2} Let $Y=|\id|\indicator{|\id|>1}\circ X^T$ and observe that $Y$ is an non-decreasing process with independent increments. By \cite[Proposition~II.2.29]{js.03}, $X^T$ is special if and only if $Y$ is special. Let $(\tau_n)_{n\in\N}$ be a localizing sequence for the local martingale $Y-B^Y$. By monotone convergence we have 
\[\E[Y_T]=\lim_{n\uparrow\infty}\E[Y_{\tau_n}]=\lim_{n\uparrow\infty}\E[B^Y_{\tau_n}]\leq B^Y_T,\]
hence $Y_T\in L^1$. Next, observe that $Z = \id\indicator{|\id|\leq 1}\circ X^T$ has independent increments with bounded jumps. Lemma~\ref{L:Cierne250106} yields $\sup_{t\in[0,T]}|Z_t|\in L^2\subset L^1$. The inequality 
$$\sup_{t\in[0,T]}|X_t|\leq Y_T+\sup_{t\in[0,T]}|Z_t|$$ 
now concludes.

\noindent \ref{T:250107.3} $\Rightarrow$ \ref{T:250107.1}
Since $X_T-X_t$ and $X_t$ are independent, we have that $X_t\in L^1$ for all $ t\in[0,T]$. For the deterministic process $B$ defined by $B_t=\E[X^T_t]$, $t\geq 0$,  $X-B$ has independent increments with zero mean, hence $X^T-B$ is a martingale with a version whose paths are c\`adl\`ag almost surely. Since $B$ is deterministic, this yields that $B$ is a finite-variation semimartingale and therefore $X^T=B+(X^T-B)$ is special. 
\end{proof}

\vspace{1ex}
\currentpdfbookmark{Acknowledgements}{Acknowledgements}
\noindent {\bf Acknowledgements.} We would like to thank two anonymous referees for their helpful comments. J.R. acknowledges financial support from the EPSRC Research Grant EP/Y024524/1.

\def\MR#1{\href{http://www.ams.org/mathscinet-getitem?mr=#1}{MR#1}}
\def\ARXIV#1{\href{https://arxiv.org/pdf/#1}{arXiv:#1}}
\def\DOI#1{\href{https://doi.org/#1}{#1}}


\begin{thebibliography}{56}
\providecommand{\natexlab}[1]{#1}
\providecommand{\url}[1]{\texttt{#1}}
\expandafter\ifx\csname urlstyle\endcsname\relax
  \providecommand{\doi}[1]{doi: #1}\else
  \providecommand{\doi}{doi: \begingroup \urlstyle{rm}\Url}\fi

\bibitem[Bellini and Frittelli(2002)]{bellini.frittelli.02}
F.~Bellini and M.~Frittelli.
\newblock On the existence of minimax martingale measures.
\newblock \emph{Math. Finance}, 12\penalty0 (1):\penalty0 1--21, 2002.
  \MR{1883783}

\bibitem[Bender and Niethammer(2008)]{bender.niethammer.08}
C.~Bender and C.~R. Niethammer.
\newblock On {$q$}-optimal martingale measures in exponential {L}\'evy models.
\newblock \emph{Finance Stoch.}, 12\penalty0 (3):\penalty0 381--410, 2008.
  \MR{2410843}

\bibitem[Biagini and {\v{C}}ern\'{y}(2011)]{biagini.cerny.11}
S.~Biagini and A.~{\v{C}}ern\'{y}.
\newblock Admissible strategies in semimartingale portfolio selection.
\newblock \emph{{SIAM} J. Control Optim.}, 49\penalty0 (1):\penalty0 42--72,
  2011. \MR{2765656}

\bibitem[Biagini and {\v{C}}ern\'{y}(2020)]{biagini.cerny.20}
S.~Biagini and A.~{\v{C}}ern\'{y}.
\newblock Convex duality and {O}rlicz spaces in expected utility maximization.
\newblock \emph{Math. Finance}, 30\penalty0 (1):\penalty0 85--127, 2020.
  \MR{4067071}

\bibitem[Black and Perold(1992)]{black.perold.92}
F.~Black and A.~F. Perold.
\newblock Theory of constant proportion portfolio insurance.
\newblock \emph{J. Econom. Dynam. Control}, 16\penalty0 (3):\penalty0 403--426,
  1992.

\bibitem[Cawston and Vostrikova(2014)]{cawston.vostrikova.14}
S.~Cawston and L.~Vostrikova.
\newblock An {$f$}-divergence approach for optimal portfolios in exponential
  {L}\'{e}vy models.
\newblock In Y.~Kabanov, M.~Rutkowski, and T.~Zariphopoulou, editors,
  \emph{Inspired by {F}inance}, pages 83--101. Springer, Cham, 2014.
\newblock The Musiela Festschrift. \MR{3204213}

\bibitem[{\v{C}}ern\'{y}(2003)]{cerny.03}
A.~{\v{C}}ern\'{y}.
\newblock Generalized {S}harpe ratios and asset pricing in incomplete markets.
\newblock \emph{European Finance Review}, 7\penalty0 (2):\penalty0 191--233,
  2003.

\bibitem[{\v{C}}ern\'{y}(2020{\natexlab{a}})]{cerny.20}
A.~{\v{C}}ern\'{y}.
\newblock Semimartingale theory of monotone mean--variance portfolio
  allocation.
\newblock \emph{Math. Finance}, 30\penalty0 (3):\penalty0 1168--1178,
  2020{\natexlab{a}}. \MR{4116461}

\bibitem[{\v{C}}ern\'{y}(2020{\natexlab{b}})]{cerny.20.arxiv}
A.~{\v{C}}ern\'{y}.
\newblock The {H}ansen ratio in mean--variance portfolio theory.
\newblock Available from \ARXIV{2007.15980}, 2020{\natexlab{b}}.

\bibitem[{\v{C}}ern\'{y} and Kallsen(2007)]{cerny.kallsen.07}
A.~{\v{C}}ern\'{y} and J.~Kallsen.
\newblock On the structure of general mean--variance hedging strategies.
\newblock \emph{Ann. Probab.}, 35\penalty0 (4):\penalty0 1479--1531, 2007.
  \MR{2330978}

\bibitem[{\v{C}}ern\'{y} and Ruf(2021{\natexlab{a}})]{cerny.ruf.21.bej}
A.~{\v{C}}ern\'{y} and J.~Ruf.
\newblock Pure-jump semimartingales.
\newblock \emph{Bernoulli}, 27\penalty0 (4):\penalty0 2624--2648,
  2021{\natexlab{a}}. \MR{4303898}

\bibitem[{\v{C}}ern\'{y} and Ruf(2021{\natexlab{b}})]{cerny.ruf.21.ejor}
A.~{\v{C}}ern\'{y} and J.~Ruf.
\newblock Simplified stochastic calculus with applications in {E}conomics and
  {F}inance.
\newblock \emph{European J. Oper. Res.}, 293\penalty0 (2):\penalty0 547--560,
  2021{\natexlab{b}}. \MR{4241583}

\bibitem[{\v{C}}ern\'{y} and Ruf(2022)]{cerny.ruf.22.ejp}
A.~{\v{C}}ern\'{y} and J.~Ruf.
\newblock Simplified stochastic calculus via semimartingale representations.
\newblock \emph{Electron. J. Probab.}, 27:\penalty0 1--32, 2022.
\newblock {P}aper {N}o. 3. \MR{4362774}

\bibitem[{\v{C}}ern\'{y} and Ruf(2023)]{cerny.ruf.23.spa}
A.~{\v{C}}ern\'{y} and J.~Ruf.
\newblock Simplified calculus for semimartingales: {M}ultiplicative
  compensators and changes of measure.
\newblock \emph{Stochastic Process. Appl.}, 161:\penalty0 572--602, 2023.
  \MR{4585465}

\bibitem[{\v{C}}ern\'{y} et~al.(2012){\v{C}}ern\'{y}, Maccheroni, Marinacci,
  and Rustichini]{cerny.al.12}
A.~{\v{C}}ern\'{y}, F.~Maccheroni, M.~Marinacci, and A.~Rustichini.
\newblock On the computation of optimal monotone mean--variance portfolios via
  truncated quadratic utility.
\newblock \emph{J. Math. Econom.}, 48\penalty0 (6):\penalty0 386--395, 2012.
  \MR{2988098}

\bibitem[Cui et~al.(2012)Cui, Li, Wang, and Zhu]{cui.al.12}
X.~Cui, D.~Li, S.~Wang, and S.~Zhu.
\newblock Better than dynamic mean--variance: {T}ime inconsistency and free
  cash flow stream.
\newblock \emph{Math. Finance}, 22\penalty0 (2):\penalty0 346--378, 2012.
  \MR{2897388}

\bibitem[Dol\'{e}ans-Dade(1970)]{doleans-dade.70}
C.~Dol\'{e}ans-Dade.
\newblock Quelques applications de la formule de changement de variables pour
  les semimartingales.
\newblock \emph{Z. Wahrsch. verw. Gebiete}, 16:\penalty0 181--194, 1970.
  \MR{283883}

\bibitem[Eberlein and Jacod(1997)]{eberlein.jacod.97}
E.~Eberlein and J.~Jacod.
\newblock On the range of options prices.
\newblock \emph{Finance Stoch.}, 1\penalty0 (2):\penalty0 131--140, 1997.

\bibitem[{\'E}mery(1978)]{emery.78}
M.~{\'E}mery.
\newblock Stabilit\'{e} des solutions des \'{e}quations diff\'{e}rentielles
  stochastiques application aux int\'{e}grales multiplicatives stochastiques.
\newblock \emph{Z. Wahrsch. verw. Gebiete}, 41\penalty0 (3):\penalty0 241--262,
  1978. \MR{0464400}

\bibitem[Esche and Schweizer(2005)]{esche.schweizer.05}
F.~Esche and M.~Schweizer.
\newblock Minimal entropy preserves the {L}\'{e}vy property: how and why.
\newblock \emph{Stochastic Process. Appl.}, 115\penalty0 (2):\penalty0
  299--327, 2005. \MR{2111196}

\bibitem[Filipovi{\'c} and Kupper(2007)]{filipovic.kupper.07}
D.~Filipovi{\'c} and M.~Kupper.
\newblock Monotone and cash-invariant convex functions and hulls.
\newblock \emph{Insurance Math. Econom.}, 41\penalty0 (1):\penalty0 1--16,
  2007. \MR{2324561}

\bibitem[Fujiwara(2010)]{fujiwara.10}
T.~Fujiwara.
\newblock The minimal entropy martingale measures for exponential additive
  processes revisited.
\newblock \emph{J. Math-for-Ind.}, 2B:\penalty0 115--125, 2010. \MR{2734483}

\bibitem[Fujiwara and Miyahara(2003)]{fujiwara.miyahara.03}
T.~Fujiwara and Y.~Miyahara.
\newblock The minimal entropy martingale measures for geometric {L}\'evy
  processes.
\newblock \emph{Finance Stoch.}, 7\penalty0 (4):\penalty0 509--531, 2003.
  \MR{2014248}

\bibitem[Goll and R{\"u}schendorf(2001)]{goll.rueschendorf.01}
T.~Goll and L.~R{\"u}schendorf.
\newblock Minimax and minimal distance martingale measures and their
  relationship to portfolio optimization.
\newblock \emph{Finance Stoch.}, 5\penalty0 (4):\penalty0 557--581, 2001.
  \MR{1862002}

\bibitem[Hodges(1998)]{hodges.98.forc}
S.~Hodges.
\newblock A generalization of the {S}harpe {R}atio and its application to
  valuation bounds and risk measures.
\newblock Available from
  \href{https://warwick.ac.uk/fac/soc/wbs/subjects/finance/research/wpaperseries/1998/98-88.pdf}{warwick.ac.uk}.
  Accessed 27/12/2024. {FORC} Preprint 98/88, University of Warwick, 1998.

\bibitem[Hu et~al.(2023)Hu, Shi, and Xu]{hu.shi.xu.23.sifin}
Y.~Hu, X.~Shi, and Z.~Q. Xu.
\newblock Constrained monotone mean--variance problem with random coefficients.
\newblock \emph{SIAM J. Financial Math.}, 14\penalty0 (3):\penalty0 838--854,
  2023. \MR{4621927}

\bibitem[Jacod and Shiryaev(2003)]{js.03}
J.~Jacod and A.~N. Shiryaev.
\newblock \emph{Limit {T}heorems for {S}tochastic {P}rocesses}, volume 288 of
  \emph{Comprehensive Studies in Mathematics}.
\newblock Springer, Berlin, 2nd edition, 2003. \MR{1943877}

\bibitem[Jakub\.{e}nas(2002)]{jakubenas.02}
P.~Jakub\.{e}nas.
\newblock On option pricing in certain incomplete markets.
\newblock In A.~N. Shiryaev and E.~F. Mishchenko, editors, \emph{Stochastic
  Financial Mathematics}, volume 237 of \emph{Tr. Mat. Inst. Steklova}, pages
  123--142. Nauka, Moscow, 2002. \MR{1976510}

\bibitem[Jeanblanc et~al.(2007)Jeanblanc, Kl\"oppel, and
  Miyahara]{jeanblanc.al.07}
M.~Jeanblanc, S.~Kl\"oppel, and Y.~Miyahara.
\newblock Minimal {$f^q$}-martingale measures of exponential {L}\'evy
  processes.
\newblock \emph{Ann. Appl. Probab.}, 17\penalty0 (5-6):\penalty0 1615--1638,
  2007. \MR{2358636}

\bibitem[Kallsen(1999)]{kallsen.99}
J.~Kallsen.
\newblock A utility maximization approach to hedging in incomplete markets.
\newblock \emph{Math. Methods Oper. Res.}, 50\penalty0 (2):\penalty0 321--338,
  1999. \MR{1732402}

\bibitem[Kallsen(2000)]{kallsen.00}
J.~Kallsen.
\newblock Optimal portfolios for exponential {L}\'{e}vy processes.
\newblock \emph{Math. Methods Oper. Res.}, 51\penalty0 (3):\penalty0 357--374,
  2000. \MR{1778648}

\bibitem[Kallsen(2004)]{kallsen.04}
J.~Kallsen.
\newblock $\sigma$-localization and $\sigma$-martingales.
\newblock \emph{Theory Probab. Appl.}, 48\penalty0 (1):\penalty0 152--163,
  2004. \MR{2013413}

\bibitem[Kallsen and Muhle-Karbe(2010)]{kallsen.muhle-karbe.10.aap}
J.~Kallsen and J.~Muhle-Karbe.
\newblock Utility maximization in models with conditionally independent
  increments.
\newblock \emph{Ann. Appl. Probab.}, 20\penalty0 (6):\penalty0 2162--2177,
  2010. \MR{2759731}

\bibitem[Karatzas and Kardaras(2007)]{karatzas.kardaras.07}
I.~Karatzas and C.~Kardaras.
\newblock The num\'eraire portfolio in semimartingale financial models.
\newblock \emph{Finance Stoch.}, 11\penalty0 (4):\penalty0 447--493, 2007.
  \MR{2335830}

\bibitem[Kardaras(2009)]{kardaras.09}
C.~Kardaras.
\newblock No-free-lunch equivalences for exponential {L}\'{e}vy models under
  convex constraints on investment.
\newblock \emph{Math. Finance}, 19\penalty0 (2):\penalty0 161--187, 2009.
  \MR{2509267}

\bibitem[Li et~al.(2024)Li, Guo, and Tian]{li.guo.tian.24}
B.~Li, J.~Guo, and L.~Tian.
\newblock Optimal investment and reinsurance policies for the
  {C}ram\'{e}r--{L}undberg risk model under monotone mean--variance preference.
\newblock \emph{Internat. J. Control}, 97\penalty0 (6):\penalty0 1296--1310,
  2024. \MR{4742973}

\bibitem[Li et~al.(2023)Li, Liang, and Pang]{li.liang.pang.23.arxiv}
Y.~Li, Z.~Liang, and S.~Pang.
\newblock Continuous-time monotone mean--variance portfolio selection.
\newblock Available from \ARXIV{2211.12168v3}, 2023.

\bibitem[Li et~al.(2025{\natexlab{a}})Li, Liang, and
  Pang]{li.liang.pang.25.mor}
Y.~Li, Z.~Liang, and S.~Pang.
\newblock Comparison between mean--variance and monotone mean--variance
  preferences under jump diffusion and stochastic factor model.
\newblock \emph{Math. Oper. Res.}, 50\penalty0 (3):\penalty0 2405--2432,
  2025{\natexlab{a}}. \MR{4949476}

\bibitem[Li et~al.(2025{\natexlab{b}})Li, Liang, and
  Pang]{li.liang.pang.25.orl}
Y.~Li, Z.~Liang, and S.~Pang.
\newblock Comparison between mean--variance and monotone mean--variance
  preferences in general markets: a new perspective.
\newblock \emph{Oper. Res. Lett.}, 61:\penalty0 Paper No. 107298, 7,
  2025{\natexlab{b}}. \MR{4896371}

\bibitem[Maccheroni et~al.(2006)Maccheroni, Marinacci, and
  Rustichini]{maccheroni.al.06}
F.~Maccheroni, M.~Marinacci, and A.~Rustichini.
\newblock Ambiguity aversion, robustness, and the variational representation of
  preferences.
\newblock \emph{Econometrica}, 74\penalty0 (6):\penalty0 1447--1498, 2006.
  \MR{2268407}

\bibitem[Maccheroni et~al.(2009)Maccheroni, Marinacci, Rustichini, and
  Taboga]{maccheroni.al.09}
F.~Maccheroni, M.~Marinacci, A.~Rustichini, and M.~Taboga.
\newblock Portfolio selection with monotone mean--variance preferences.
\newblock \emph{Math. Finance}, 19\penalty0 (3):\penalty0 487--521, 2009.
  \MR{2536871}

\bibitem[Nutz(2012)]{nutz.12.mafi}
M.~Nutz.
\newblock Power utility maximization in constrained exponential {L}\'evy
  models.
\newblock \emph{Math. Finance}, 22\penalty0 (4):\penalty0 690--709, 2012.
  \MR{2968281}

\bibitem[Pennanen and Perkki\"{o}(2024)]{pennanen.perkkio.24}
T.~Pennanen and A.-P. Perkki\"{o}.
\newblock \emph{Convex stochastic optimization---dynamic programming and
  duality in discrete time}, volume 107 of \emph{Probability Theory and
  Stochastic Modelling}.
\newblock Springer, Cham, 2024. \MR{4886888}

\bibitem[Protter(2005)]{protter.05}
P.~E. Protter.
\newblock \emph{Stochastic {I}ntegration and {D}ifferential {E}quations},
  volume~21 of \emph{Stochastic Modelling and Applied Probability}.
\newblock Springer, Berlin, 2nd edition, 2005. \MR{2273672}

\bibitem[Rockafellar(1970)]{rockafellar.70}
R.~T. Rockafellar.
\newblock \emph{Convex Analysis}.
\newblock Princeton Mathematical Series, No. 28. Princeton University Press,
  Princeton, N.J., 1970. \MR{0274683}

\bibitem[Rockafellar and Wets(1998)]{rockafellar.wets.98}
R.~T. Rockafellar and R.~J.-B. Wets.
\newblock \emph{Variational Analysis}, volume 317 of \emph{Grundlehren der
  mathematischen Wissenschaften}.
\newblock Springer-Verlag, Berlin, 1998. \MR{1491362}

\bibitem[Rudin(1987)]{rudin.87}
W.~Rudin.
\newblock \emph{Real and {C}omplex {A}nalysis}.
\newblock McGraw-Hill Book Co., New York, 3rd edition, 1987. \MR{924157}

\bibitem[Schweizer(1994)]{schweizer.94}
M.~Schweizer.
\newblock Approximating random variables by stochastic integrals.
\newblock \emph{Ann. Probab.}, 22\penalty0 (3):\penalty0 1536--1575, 1994.
  \MR{1303653}

\bibitem[Schweizer(1996)]{schweizer.96}
M.~Schweizer.
\newblock Approximation pricing and the variance-optimal martingale measure.
\newblock \emph{Ann. Probab.}, 24\penalty0 (1):\penalty0 206--236, 1996.
  \MR{1387633}

\bibitem[Sharpe(1994)]{sharpe.94}
W.~F. Sharpe.
\newblock The {S}harpe ratio.
\newblock \emph{Journal of Portfolio Management}, 21\penalty0 (1):\penalty0
  49--58, 1994.

\bibitem[Shen and Zou(2022)]{shen.zou.22.sifin}
Y.~Shen and B.~Zou.
\newblock Short communication: {C}one-constrained monotone mean--variance
  portfolio selection under diffusion models.
\newblock \emph{SIAM J. Financial Math.}, 13\penalty0 (4):\penalty0
  SC99--SC112, 2022. \MR{4495677}

\bibitem[Shi and Xu(2024)]{shi.xu.24}
X.~Shi and Z.~Q. Xu.
\newblock Constrained monotone mean--variance investment-reinsurance under the
  {C}ram\'{e}r-{L}undberg model with random coefficients.
\newblock \emph{Systems Control Lett.}, 188:\penalty0 Paper No. 105796, 9,
  2024. \MR{4732938}

\bibitem[Shiryaev and Cherny(2002)]{shiryaev.cherny.02}
A.~N. Shiryaev and A.~S. Cherny.
\newblock A vector stochastic integral and the fundamental theorem of asset
  pricing.
\newblock \emph{Proc. Steklov Inst. Math.}, 237:\penalty0 6--49, 2002.
  \MR{1975582}

\bibitem[Strub and Li(2020)]{strub.li.20}
M.~S. Strub and D.~Li.
\newblock A note on monotone mean--variance preferences for continuous
  processes.
\newblock \emph{Oper. Res. Lett.}, 48\penalty0 (4):\penalty0 0--3, 2020.
  \MR{4100873}

\bibitem[Trybu{\l}a and Zawisza(2019)]{trybula.zawisza.19}
J.~Trybu{\l}a and D.~Zawisza.
\newblock Continuous-time portfolio choice under monotone mean--variance
  preferences---stochastic factor case.
\newblock \emph{Math. Oper. Res.}, 44\penalty0 (3):\penalty0 966--987, 2019.
  \MR{3996654}

\bibitem[Zhitlukhin(2019)]{zhitlukhin.19}
M.~Zhitlukhin.
\newblock Monotone {S}harpe ratios and related measures of investment
  performance.
\newblock In \emph{2017 {MATRIX} {A}nnals}, volume~2 of \emph{MATRIX Book
  Ser.}, pages 637--665. Springer, Cham, 2019. \MR{3931085}

\end{thebibliography}
\end{document}